\crefname{section}{Sect.}{Sections}
\Crefname{section}{Section}{Sections}
\theoremstyle{definition}
\newtheorem{Definition}[theorem]{Definition}
\newenvironment{Example}{\example}{\lipicsEnd\endexample}
\newcommand{\ra}{\rightarrow}
\newcommand{\E}{\exists}
\newcommand{\A}{\forall}
\renewcommand{\phi}{\varphi}
\renewcommand{\theta}{\vartheta}
\renewcommand{\emptyset}{\varnothing}
\renewcommand{\epsilon}{\varepsilon}
\renewcommand{\AA}{{\mathfrak A}}
\newcommand{\FO}{{\rm FO}}
\DeclareMathOperator{\Str}{\mathrm{Str}}
\DeclareMathOperator{\Lit}{\mathrm{Lit}}
\newcommand*{\from}{\colon}
\newcommand*{\without}[1]{\setminus\{#1\}}
\newcommand*{\tup}[1]{\mathbf{#1}}
\newcommand{\ta}{\tup a}
\newcommand{\tb}{\tup b}
\newcommand{\tx}{\tup x}
\newcommand{\tz}{\tup z}
\newcommand{\Bool}{\mathbb{B}}
\newcommand{\N}{{\mathbb N}}
\newcommand{\Ninf}{{\mathbb N}^{\infty}}
\newcommand{\Sinf}{{\mathbb S}^{\infty}}
\newcommand{\Trop}{\mathbb{T}}
\newcommand{\Vit}{\mathbb{V}}
\newcommand{\Luk}{\mathbb{L}}
\newcommand*{\Real}{\mathbb{R}}
\newcommand*{\ext}[1]{[\![ #1 ]\!]}
\newcommand*{\simbool}{=_{\mathbb B}}
\newcommand{\Kmin}[1][K]{{#1}_{\text{inf}}}
\newcommand{\pimin}{\pi_{\text{inf}}}
\newcommand{\Pub}{\mathsf{P}}
\newcommand{\Cnf}{\mathsf{C}}
\newcommand{\Sec}{\mathsf{S}}
\newcommand{\Tsec}{\mathsf{T}}
\newcommand{\Inf}{\bigsqcap}
\newcommand{\Sup}{\bigsqcup}
\newcommand{\meet}{\sqcap}
\newcommand{\join}{\sqcup}
\renewcommand{\bar}{\overline}
\newcommand*{\nn}[1]{\bar{#1}}
\newcommand{\nnX}{\nn{X}}
\newcommand{\bX}{\mathbf {X}}
\newcommand{\bY}{\mathbf {Y}}
\newcommand{\Ane}{\A^{\neq}}
\newcommand{\Ene}{\E^{\neq}}
\newcommand{\XX}[1]{\bX^{(#1)}}
\newcommand{\YY}[1]{\bY^{(#1)}}
\newcommand{\eps}{\epsilon}
\DeclareMathOperator{\arity}{arity}
\DeclareMathOperator{\as}{ASV}
\DeclareMathOperator{\extend}{ext}
\newcommand{\IF}{\textbf{if}\ }
\newcommand{\THEN}{\textbf{then}\ }
\newcommand{\ELSE}{\textbf{else}\ }
\newcommand{\DO}{\textbf{do}\ }
\newcommand{\GUESS}{\textbf{guess}\ }
\newcommand{\CHOOSE}{\textbf{universally choose}\ }
\newcommand{\REJECT}{\textbf{reject}\ }
\newcommand{\ACCEPT}{\textbf{accept}\ }
\newcommand{\cxclass}[1]{\textsc{#1}}
\newcommand{\pspace}{\cxclass{Pspace}}
\newcommand{\rhopi}{\rho^\pi}
\newcommand{\pol}[2][]{f_{#2}^{#1}} %e.g.: \pol [\delta] \psi
\newcommand{\poly}[3][]{f_{#2}^{#1}[#3]} %e.g.: \poly [\delta] \psi {...}
\newcommand{\polyrho}[3][]{\poly[#1]{#2}{\rhopi_{#3}}} %e.g.: \polyrho \psi \ta
\newcommand{\fml}[1]{\pi \ext {#1}} %e.g.: \fml \psi
\newcommand{\fmla}[2]{\pi \ext {#1(#2)}} %e.g.: \fmla \psi {\ta,b}
\newcommand{\expa}[2]{g_{#1}[#2]} %e.g.: \poly [\delta] \psi {...}
\newcommand{\exparho}[2]{\expa{#1}{\rhopi_{#2}}} %e.g.: \polyrho \psi \ta
\title{Zero-One Laws and Almost Sure Valuations of First-Order Logic in Semiring Semantics}
\titlerunning{Zero-One Laws in Semiring Semantics}
\author{Erich Grädel}{RWTH Aachen University, Germany}{graedel@logic.rwth-aachen.de}{}{}
\author{Hayyan Helal}{RWTH Aachen University, Germany}{hayyan.helal@rwth-aachen.de}{}{}
\author{Matthias Naaf}{RWTH Aachen University, Germany}{naaf@logic.rwth-aachen.de}{}{}
\author{Richard Wilke}{RWTH Aachen University, Germany}{wilke@logic.rwth-aachen.de}{}{}
\authorrunning{E.~Grädel, H.~Helal, M.~Naaf, and R.~Wilke}
\keywords{semiring semantics, 0-1 laws, first-order logic}
\begin{document}

\maketitle

\begin{abstract}
Semiring semantics evaluates logical statements by values in some commutative semiring
$(K,+,\cdot,0,1)$.
Random semiring interpretations, induced by a probability distribution on $K$, generalise random structures, and we investigate
here the question of how classical results on first-order logic on random structures, 
most importantly the 0-1 laws of Glebskii et al. and Fagin, generalise to semiring semantics.
For positive semirings, the classical 0-1 law implies that every first-order sentence is, 
asymptotically, either almost surely evaluated to 0 by random semiring interpretations, or almost surely 
takes only values different from 0. However, by means of a more sophisticated analysis,
based on appropriate extension properties and on algebraic representations of first-order formulae, 
we can prove much stronger results.

For many semirings $K$
the first-order sentences in $\FO(\tau)$ can be partitioned into classes 
 $(\Phi_j)_{j\in K}$ such that for each $j\in K$, every sentence in $\Phi_j$ evaluates 
almost surely to $j$ under random semiring interpretations. Further, 
for finite or infinite lattice semirings,
this partition actually collapses to just three classes
$\Phi_0$, $\Phi_1$, and $\Phi_\epsilon$, of sentences that, respectively, almost surely 
evaluate to 0, 1, and to the smallest value $\epsilon\neq 0$.
For all other values $j\in K$ we have that
$\Phi_j=\emptyset$. The problem of computing the almost sure valuation of a first-order sentence on
finite lattice semirings is \pspace-complete.

An important semiring where the analysis is somewhat different is the \emph{natural semiring} $(\N,+,\cdot,0,1)$.
Here, both addition and multiplication are increasing with respect to the natural semiring order and the classes
$(\Phi_j)_{j\in \N}$ no longer cover all $\FO(\tau)$-sentences, but have to
be extended by $\Phi_\infty$, the class of sentences that almost surely evaluate to
unboundedly large values. 
\end{abstract}

\newpage
\section{Introduction}

Semiring semantics is based on the idea to evaluate logical statements not just by \emph{true}
or \emph{false}, but by values in some commutative semiring $(K,+,\cdot,0,1)$.
In this context, the standard semantics appears as the special case 
when the Boolean semiring $\Bool = (\{\bot, \top\}, \lor, \land, \bot, \top)$ is used.
Valuations in other semirings provide additional information, beyond the truth
or falsity of a statement:
the Viterbi-semiring $\Vit = ([0,1]_{\Real}, \max, \cdot, 0, 1)$ models \emph{confidence scores}, 
the tropical semiring $\Trop= (\Real_{+}^{\infty}, \min, +, \infty, 0)$
is used for \emph{cost analysis}, and min-max-semirings $(K, \max, \min, a, b)$ for a 
totally ordered set $(K,<)$ can model, for instance, different \emph{access levels}.
Further, semirings of polynomials or formal power series  permit us to \emph{track} which
atomic facts are used (and how often) to establish the truth of a sentence in a given structure.

Semiring semantics originated in the provenance analysis for (positive) database query languages, such as positive relational algebra or datalog
(see e.g. \cite{GreenTan17,Glavic21} for surveys), but in the last years it has been systematically extended to many logical systems,
including first-order logic, modal logic, description logics, and fixed-point logic. This raises the question to what extent the standard results and techniques of classical logic (and specifically finite model theory) extend to semiring semantics, and
how such extensions depend on the choice of the underlying semiring. This paper is part of a systematic study of
model theoretic methods for semiring semantics, and it is devoted to the study of 0-1 laws in semiring semantics.

\medskip
We briefly recall some basic facts about 0-1 laws for first-order logic on  random structures.
For a finite relational vocabulary $\tau$, a finite universe $[n]=\{0,\dots,n-1\}$, a constant $p$ with
$0<p<1$, we consider the probability spaces $\Str_{n,p}(\tau)$ of random $\tau$-structures with
universe $[n]$, obtained by the random experiment which, independently for
each relational atom $\alpha=Ri_1\dots i_k$ (where $R\in\tau$ has arity $k$, and $i_1,\dots i_k\in[n]$),
makes a random choice whether $\alpha$ shall be true (with probability $p$), or false (with probability $1-p$).
The most common such distribution is the \emph{uniform} one, with $p=1/2$, which gives to each possible
$\tau$-structure over $[n]$ the same probability.  Beyond the case where $p$ is constant, there has also
been intensive research on probability spaces $\Str_{n,p}(\tau)$, where the probabilities of
atomic facts depend on the size of the universe, i.e.\ are given by a function $p\from\N\to[0,1]$; however, in this paper we
will consider only atomic probabilities that are the same for each $n$.

Given a first-order sentence $\psi\in\FO(\tau)$ we define $\mu_{n,p}(\psi)$ to be the probability
that a random structure from $\Str_{n,p}(\tau)$ is a model of $\psi$, and we are interested in
the behaviour of the sequence $(\mu_{n,p}(\psi))_{n\in\N}$ as $n$ tends to infinity. A fundamental result,
proved in \cite{GlebskiiKogLioTal69} and \cite{Fagin76} is the celebrated 0-1 law for first-order logic:

\begin{theorem} For every sentence $\psi\in\FO(\tau)$ the asymptotic probability
$\lim_{n\to\infty} \mu_{n,p}(\psi)$ exists, and is either 0 or 1. Moreover, the sequence 
$(\mu_{n,p}(\psi))_{n\in\N}$ converges exponentially fast to this limit.
\end{theorem}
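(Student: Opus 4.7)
The plan is to prove this via the classical extension-axiom method of Glebskii et al.\ and Fagin. For each $k\geq 0$ and each complete quantifier-free $\tau$-type $\sigma(x_1,\ldots,x_k,y)$ that is consistent with $x_1,\ldots,x_k,y$ being pairwise distinct, I would introduce an \emph{extension axiom} $\eta_{k,\sigma}$ expressing: for all distinct $a_1,\ldots,a_k$ there exists $b\notin\{a_1,\ldots,a_k\}$ such that $(a_1,\ldots,a_k,b)$ realises $\sigma$. Let $T_p$ denote the (countable) set of all these axioms.

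The first step is to show that each $\eta_{k,\sigma}$ holds with probability tending to $1$ exponentially fast. For fixed $a_1,\ldots,a_k$, each of the $n-k$ candidates $b$ realises $\sigma$ with some constant probability $q>0$ depending only on $p$ and $\sigma$; moreover, for different candidates the events are independent since they concern disjoint sets of atomic facts. Hence a union bound over the at most $n^k$ choices of $(a_1,\ldots,a_k)$ gives failure probability at most $n^k(1-q)^{n-k}$, which decays exponentially. The second step is to show that $T_p$ is a complete theory. I would run a standard back-and-forth construction: given any two countable models $\AA,\mathfrak{B}\models T_p$ and a finite partial isomorphism between them, the extension axioms let one extend the domain by any next element of either structure, since the required relational type can always be realised. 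Hence all countable models of $T_p$ are isomorphic and in particular elementarily equivalent, so $T_p$ is complete.

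The third step combines these two. For any sentence $\psi\in\FO(\tau)$, completeness of $T_p$ gives $T_p\vdash\psi$ or $T_p\vdash\neg\psi$. By compactness the proof uses only finitely many axioms $\eta_{k_1,\sigma_1},\ldots,\eta_{k_m,\sigma_m}$, so whenever all of these hold in a finite structure they force $\psi$ (resp.\ $\neg\psi$) to hold as well. Exponential convergence of each $\mu_{n,p}(\eta_{k_i,\sigma_i})$ to $1$ then transfers, via a final union bound, to exponential convergence of $\mu_{n,p}(\psi)$ to $0$ or $1$.

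The main obstacle, and the part requiring the most care, is the back-and-forth argument establishing completeness of $T_p$: one has to verify that the extension axioms are rich enough to realise \emph{every} relational type over a finite tuple in a vocabulary with relations of arbitrary arity, not merely the binary graph case. Once this is set up correctly, the probabilistic estimate for a single $\eta_{k,\sigma}$ is essentially a Chernoff-type calculation on independent atomic facts, and the passage from $T_p\vdash\psi$ to a finite sub-conjunction is just compactness; both are routine compared to getting the vocabulary-general extension framework right.
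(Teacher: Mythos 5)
Your approach is the classical Fagin extension-axiom argument, which is exactly what the paper cites and sketches in the introduction (the paper does not reprove this theorem but defers to Glebskii et al.\ and Fagin). The structure---almost-sure extension axioms, $\omega$-categoricity of their theory via back-and-forth, completeness, then compactness plus a finite union bound for exponential convergence---is the right one and matches the paper's cited route.

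There is, however, a genuine error in how you state the extension axioms. You define $\sigma(x_1,\dots,x_k,y)$ to be a \emph{complete} quantifier-free type, which therefore also fully determines the relations holding among $x_1,\dots,x_k$; your unconditional axiom ``for all distinct $a_1,\dots,a_k$ there exists $b$ such that $(a_1,\dots,a_k,b)$ realises $\sigma$'' is then false in essentially every structure, since most tuples $(a_1,\dots,a_k)$ will not realise the restriction $\sigma\restriction\{x_1,\dots,x_k\}$ and so no $b$ can complete them to a realisation of $\sigma$. This also quietly breaks your independence argument in the first step: for different candidates $b$, the events ``$(a_1,\dots,a_k,b)$ realises $\sigma$'' all share the common constraint that $(a_1,\dots,a_k)$ realise the restricted type, so they are not independent as you claim. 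The fix is the standard one (and the one the paper uses in its definition of the $k$-extension property): make the axiom conditional, i.e.\ $\eta_{k,\sigma}$ should say ``if $(a_1,\dots,a_k)$ realise $\sigma\restriction\{x_1,\dots,x_k\}$, then there exists $b\notin\{a_1,\dots,a_k\}$ with $(a_1,\dots,a_k,b)$ realising $\sigma$.'' With that repair, the probability that a fixed $b$ provides the required extension is a positive constant depending only on $p$ and $\sigma$, the events over distinct $b$ are conditionally independent given the configuration of $(a_1,\dots,a_k)$, your $n^k(1-q)^{n-k}$ bound holds, and the back-and-forth, compactness, and union-bound steps all go through as you outline.
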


Informally, we say that each sentence $\psi\in\FO(\tau)$ is almost surely true or almost surely 
false on finite structures. There are several possibilities to prove the 0-1 law. In the original proof
of Glebskii et al.~\cite{GlebskiiKogLioTal69} a quantifier elimination argument was used. 
Later, Fagin \cite{Fagin76} presented a different proof based on the theory of 
\emph{extension axioms} which, informally,  say that every configuration of $k$ points can be extended
in any consistent way to a configuration of $k+1$ points. For undirected graphs, for instance, this means that
for any collection $v_1,\dots,v_k,$ of $k$ nodes and any $i\leq k$ there is a further node $w$ which
is adjacent to $v_1,\dots,v_i$, but not to $v_{i+1},\dots,v_k$. Fagin's proof relies on the
following facts:
\begin{itemize}
\item Each extension axiom is almost surely true on random structures.
\item The theory $T$ of all extension axioms is $\omega$-categorical, i.e.\ it has a unique
countable model, up to isomorphism, which is sometimes called
the \emph{random $\tau$-structure} or, in the case of undirected graphs, the \emph{Rado graph}.
\item It follows that $T$ is complete, i.e.~either $T\models\psi$ or $T\models \neg\psi$,
for every sentence $\psi\in\FO(\tau)$.
By compactness it then follows that either $\psi$ or $\neg\psi$ is a consequence of finitely
many extension axioms, and is therefore almost surely true on random $\tau$-structures.
\item Moreover, it follows that $\psi$ is almost surely true on finite $\tau$-structures if, and only if, $\psi$ is true
in the countable random $\tau$-structure.
\end{itemize}

\medskip
The 0-1 law has been extended in many directions, to more powerful logics than $\FO$ \cite{KolaitisVar90,KolaitisVar92}, 
to different probability distributions \cite{Spencer93},  to more general  kinds of convergence laws,
and to specific classes of mathematical structures (see \cite{Compton89} for a survey). Such results often give a simple and direct argument for proving that properties for which these convergence laws fail cannot be expressed
in such logics. A simple and well-known example is the fact that no first-order sentence (and in fact, no sentence in
bounded-variable infinitary logic) can distinguish between finite structures of even and odd cardinality. 
More practically, 0-1 laws have also been put to use for studying query answering in the context of
uncertain data (see e.g. \cite{Libkin18}).
It is a natural question whether semiring semantics admits results that are analogous to the 0-1 law
of finite model theory. Fixing a probability distribution over a given semiring $K$, the notion of a random structure
generalizes in a rather straightforward way to the notion of a random $K$-interpretation, so the typical questions
studied for logic on random structures make perfectly sense in the context of semiring semantics.
Notice that the 0-1 law splits the relational first-order sentences
into two classes: those that are almost 
surely true and those that are almost surely false (on finite structures). Is there a similar split for valuations in other
semirings than the Boolean one? For instance, given a finite semiring $K$, can we partition $\FO(\tau)$
into classes $(\Phi_j)_{j\in K}$ such that for each $j\in K$, every sentence in $\Phi_j$ evaluates 
almost surely to $j$, under random semiring interpretations into $K$?
And are all the classes $\Phi_j$ non-empty, or do the almost sure valuations concentrate on just a few values,
for instance on 0 and 1? 
How are these partitions, if they exist, related if we compare different semirings?
For instance, are the almost surely false sentences always the same, no matter which semiring we
consider? 
Are there similar results for infinite semirings? More generally, what kind of algebraic conditions do we have 
to impose on the underlying semiring to obtain results that are analogous to the traditional 0-1 law?
Finally, there also are questions of complexity: how difficult is it to compute the
almost sure valuation of a given first-order sentence (assuming that it exists)? 
Besides the intrinsic mathematical interest as a fundamental model-theoretic issue
about semiring semantics, such results have the potential to lead to more general applications
than the classical 0-1 laws, 
concerning for instance the (non-)definability of numerical parameters of graphs and other structures, or
the provenance analysis for uncertain or probabilistic data.

\medskip
Our methods to answer such questions combine on the one hand techniques that are adapted from 
traditional studies of logic on random structures, such as extension properties of atomic types, 
and on the other side specific ideas of semiring semantics, such as the use
of polynomials with indeterminates for tracking the literals. Our methods work best for
\emph{absorptive semirings}; these are semirings that are naturally ordered, in the sense that 
$a\leq b:\Leftrightarrow \E c (a+c=b)$
is a partial order, and that multiplication is decreasing with respect to this order.
This is not a very serious restriction as most of the application semirings used in provenance analysis
(such as the Viterbi semiring, the tropical semirings, the \L{}ukasiewicz semiring, all lattice semirings etc.) are indeed absorptive,
and absorptive semirings have turned out to be relevant also for the 
analysis of fixed-point logics \cite{DannertGraNaaTan21} and infinite games \cite{GraedelLucNaa21,GraedelTan20}.

\medskip
The general picture that emerges from our analysis of random $K$-interpretations for a number of
different semirings $K$ shows that there indeed is a 0-1 law, saying that with probabilities converging to 1 exponentially fast,  the valuation $\pi\ext{\psi}$ of a first-order sentence $\psi$ almost surely concentrates on one  
specific value $j\in K$. While this is perhaps not really surprising, given the 0-1 law for the classical Boolean semantics,
the analysis of the induced partition $(\Phi_j)_{j\in K}$ of $\FO(\tau)$
into classes of sentences that almost surely evaluate to $j$, is rather interesting.
It neither is the case that all semiring elements $j\in K$ appear as almost sure values of first-order sentences,  nor 
that these concentrate exclusively on the smallest and largest values 
(i.e.\ 0 and 1 in absorptive semirings). For finite and infinite lattice semirings, we show, by means of a 
description of first-order
formulae  by polynomials, that there is a partition of $\FO(\tau)$ into
three classes $\Phi_0$, $\Phi_1$, and $\Phi_\epsilon$, of sentences that, respectively, almost surely 
evaluate to 0, 1, and to $\epsilon = \Inf \{ j \in K : j \neq 0 \}$.
Notice that $\epsilon$ is the smallest element greater than $0$, if such an element exists
(as for instance in finite min-max semirings). For all other values $j\in K$ we have that
$\Phi_j=\emptyset$. Over most semirings the three classes $\Phi_0$, $\Phi_1$, $\Phi_\epsilon$ are distinct, but there are a few cases
where we have only two classes because  $\Phi_\epsilon$ collapses to $\Phi_0$ (as in distributive lattices without a smallest positive element), or to $\Phi_1$ (in the Boolean semiring).

An important semiring where the analysis is somewhat different is the \emph{natural semiring} $(\N,+,\cdot,0,1)$; this semiring is not absorptive and multiplication is increasing.
The 0-1 law still holds for the natural semiring, but the proof relies on more general $\infty$-expressions instead of polynomials and there are rather trivial constructions showing that every number $j\in\N$ appears as almost sure valuation.
We show that in fact all sentences with almost sure valuations in $\N \without 0$ are `trivial',
or in other words, the `interesting' first-order sentences are almost surely false or almost surely have arbitrarily
large truth values on random $\N$-interpretations.

\section{Semiring semantics}

For a finite relational vocabulary  $\tau$, we write $\Lit_k(\tau)$ for the set of atoms $R\tup z$ 
and negated atoms $\neg R\tup z$ with $R \in \tau$ and where $\tup z$ is any tuple of variables taken
from $\{x_1, \dots, x_k\}$. For a universe $A$, we write $\Lit_A(\tau)$ for the set of
\emph{instantiated} $\tau$-literals $R\ta$ and $\neg R\ta$ with $\ta \in A^{\arity(R)}$.
We interpret these literals by values in a commutative%
\footnote{In the following, \emph{semiring} always refers to a commutative semiring.}
semiring, which is an algebraic structure $(K,+,\cdot,0,1)$ with $0\neq1$, such that $(K,+,0)$
and $(K,\cdot,1)$ are commutative monoids, $\cdot$
distributes over $+$, and $0\cdot a=a\cdot 0=0$.

Given a commutative semiring $K$, a \emph{$K$-interpretation} (of vocabulary $\tau$ and universe $A$)
is a function $\pi\from\Lit_A(\tau)\to K$.
We only consider $K$-interpretations which are \emph{model-defining}: 
for any pair of complementary literals $\alpha, \neg\alpha$
precisely one of the values $\pi(\alpha)$, $\pi(\neg\alpha)$ is 0.

A $K$-interpretation $\pi \from \Lit_A(\tau) \to K$ extends in a straightforward way
to a valuation of any instantiation $\phi(\ta)$ of a formula $\phi(\tx) \in \FO(\tau)$, 
assumed to be written in negation normal form,
by a tuple $\ta \subseteq A$.
The semiring semantics $\pi \ext{\phi(\ta)}$ is defined
by induction. We first extend $\pi$ by mapping equalities and inequalities to their truth values by 
\[\pi \ext{a = b} \coloneqq\begin{cases} 1 &\text{ if } a = b \\0 &\text{ if } a \neq b \end{cases}\quad\text{and}\quad
 \pi \ext{a \neq b} \coloneqq\begin{cases} 0 &\text{ if } a = b \\1 &\text{ if } a \neq b \end{cases},\]
and by interpreting disjunctions and existential quantifiers as sums, and conjunctions and universal quantifiers as products:
\begin{alignat*}{3}
\pi \ext{\psi(\ta) \lor \theta(\ta)} &\coloneqq \pi \ext{\psi(\ta)} + \pi \ext{\theta(\ta)} &\quad\quad\quad \pi \ext{\psi(\ta) \land \theta(\ta)} &\coloneqq \pi \ext{\psi(\ta)} \cdot \pi \ext{\theta(\ta)} \\
\pi \ext{\exists x \, \theta(\ta, x)} &\coloneqq \sum_{a \in A} \pi \ext{\theta(\ta, a)} &\quad\quad\quad \pi \ext{\forall x \, \theta(\ta, x)} &\coloneqq \prod_{a \in A} \pi \ext{\theta(\ta, a)}.
\end{alignat*}

For the treatment of extension properties of atomic types and 0-1 laws some minor headaches
in the form of necessary case distinctions can be caused by equalities and inequalities.
To simplify our proofs we thus rewrite first-order formulae by means of  the \emph{excluding quantifiers}
$\Ene$ and $\Ane$ with the equivalences (in Boolean as well as semiring semantics) that
for any formula $\phi(\tx,y)$, with $\tx=(x_1,\dots,x_k)$ and free variables as displayed,
\[
  \Ene y\, \phi(\tx,y) \equiv \E y(\bigwedge_{i=1}^k y\neq x_i \land \phi(\tx,y)) \quad\text{ and }\quad
  \Ane y\, \phi(\tx,y) \equiv \A y(\bigvee_{i=1}^k y= x_i \lor  \phi(\tx,y)).
\]
Clearly, the classical quantifiers  $\E$ and $\A$ can be expressed (again in Boolean as well as semiring semantics)
by
$\E y\, \phi(\tx,y) \equiv \bigvee_{i=1}^k  \phi(\tx,x_i) \lor\Ene y\,  \phi(\tx,y)$ and
$\A y\, \phi(\tx,y) \equiv \bigwedge_{i=1}^k  \phi(\tx,x_i) \land \Ane y\,  \phi(\tx,y)$.

\section{Random semiring interpretations}\label{sect:random}

For a universe $A$, a relational vocabulary $\tau$ and a commutative semiring $K$, we denote by
$K\text{-Int}_A[\tau]$ the set of
$K$-interpretations $\pi\from\Lit_A(\tau)\to K$ on universe $A$.
Given a probability measure $\mu$ on $K\text{-Int}_A[\tau]$, a sentence $\psi\in\FO(\tau)$
and a set of semiring values $J\subseteq K$ let
\[   \mu [ \pi\ext{\psi} \in J ]\coloneqq \mu \{\pi\in K\text{-Int}_A[\tau] :  \pi\ext{\psi}\in J \}.\]

The probability measures $\mu_{n,p}$ we are interested in are obtained by choosing semiring values
for the literals over the universe $A=[n]$ independently and at random, keeping in mind that for
complementary literals $\alpha$ and $\neg\alpha$ precisely one should
get the value 0, and the other one an arbitrary non-zero value.  
Given a probability distribution $p$ on $K^+ \coloneqq K\without 0$,
a random $K$-interpretation $\pi$ thus makes, independently for each relational atom $\alpha=R\ta$, 
a random choice with probability $1/2$ whether $\alpha$ or $\neg\alpha$ shall be true; if $\alpha$ is true, then set
$\pi(\neg\alpha)=0$ and select for $\pi(\alpha)$ a random value from $K^+$
according to $p$;
analogously, if  $\alpha$ is false, then we set
$\pi(\alpha)=0$ and select  $\pi(\neg\alpha)\in K^+$ at random.
Every  $K$-interpretation $\pi$ chosen in this way is \emph{model-defining}.
For finite semirings $K$, the most natural probability distribution on $K^+$ is the uniform one,
so that the probability that $\pi(\alpha)=j$ is  $1/2(|K|-1)$ for any $j\neq 0$.
But our results hold for all measures $\mu_{n,p}$
as long as the choices whether $\alpha$ or $\neg\alpha$
are done with a constant probability (not necessarily 1/2) and all semiring values occur with positive probability, i.e., $p \from K^+ \to (0,1]$.
For fixed $p$, $\psi$ and $j\in K$ we then
consider the sequence $(\mu_{n,p}[ \pi\ext{\psi}=j])_{n<\omega}$ of probabilities
that $\psi$ evaluates to the semiring value $j$ in a random $K$-interpretation on universe $[n]$ (with positive semiring values chosen according to the probability distribution $p$).

\begin{Definition} 
We say that a \emph{0-1 law} holds for a class of sentences $\Phi$,
a finite or countable semiring $K$ and a probability distribution $p$ on $K^+$,
if for each sentence $\psi\in\Phi$ and each value $j\in K$
the sequence $(\mu_{n,p}[\pi\ext{ \psi}=j])_{n<\omega}$ converges to either 0 or 1, as $n$ goes to infinity.
In that case, we denote by  $\as_{K,p}(\psi)$ the \emph{almost sure valuation} of $\psi$ for $K$ and $p$,
defined as the unique value $j\in K$ such that 
$\lim_{n\to\infty} \mu_{n,p}[ \pi\ext{\psi}=j]= 1$.
Further, let $\as_{K,p}(\Phi)$ the set of possible almost sure valuations that may appear for
sentences in $\Phi$, formally, $\as_{K,p}(\Phi)\coloneqq \{ \as_{K,p}(\psi): \psi\in\Phi\}$.
\end{Definition} 

\medskip
Later we shall also study  semirings over real numbers, such as the Viterbi semiring $\Vit$, the tropical semiring $\Trop$,
or the min-max semiring over the real interval $[0,1]$. For these, the definitions have to be 
adjusted somewhat. They are discussed in \cref{sec:infinite} below.

\section{Extension properties}

Similar to the Boolean case, we study configurations of $k$ points (which we always assume to be pairwise distinct) and whether they can be extended to $k+1$ points.

\begin{Definition} An \emph{atomic $k$-type} (of vocabulary $\tau$ in the semiring $K$) is a consistent 
valuation $\rho\from\Lit_k(\tau)\to K$, consistent in the sense that for every $\tau$-atom $\alpha$, precisely one of $\rho(\alpha)$, $\rho(\neg\alpha)$ is $0$.
Given a $K$-interpretation $\pi\from\Lit_A(\tau)\to K$, a
tuple $\ta=(a_1,\dots,a_k)$ of pairwise distinct elements induces the $k$-type $\rho^\pi_{\ta}$ that maps all literals $\beta\in\Lit_k(\tau)$ to $\pi(\beta[\ta])$, where $\beta[\ta]$ is the instantiation of variables $x_i$ by $a_i$.
\end{Definition}

For two $k$-types $\rho,\rho' \from \Lit_k(\tau)\to K$, we write $\rho\simbool \rho'$ if
$\rho$ and $\rho'$ map precisely the same literals to 0 (thus inducing
the same Boolean type when identifying all non-zero values).
Further let $\rho\leq\rho'$ if $\rho(\beta)\leq\rho(\beta')$ for all $\beta\in\Lit_k(\tau)$. Notice that this can only be 
the case if $\rho\simbool\rho'$; indeed if 
$0=\rho(\beta)< \rho'(\beta)\neq 0$ then $0 \neq \rho(\neg\beta)\not\leq \rho'(\neg\beta)=0$.

We say that a tuple $\ta$ of pairwise distinct elements \emph{realises the atomic $k$-type} $\rho$ in $\pi$,
if $\rho^\pi_{\ta}=\rho.$ 
For $k>m$ an atomic $k$-type $\rho^+$ \emph{extends} the atomic $m$-type $\rho$
if $\rho^+\restriction \Lit_m(\tau) =\rho$. 
In that case, every realisation $(a_1,\dots, a_k)$ of $\rho^+$ in $\pi$ restricts to a 
realisation $(a_1,\dots,a_m)$ of $\rho$. 
On the other side, it is not clear whether a tuple that realises $\rho$ can be extended 
to a realisation of $\rho^+$.
We formulate extension properties that guarantee the existence of such extensions, which 
play a central role in the proof of 0-1 laws.  
Given an atomic $m$-type $\rho$, let  $\extend(\rho)$ be the set of atomic $(m+1)$-types that extend $\rho$.

\begin{Definition} A $K$-interpretation $\pi$ has the \emph{$k$-extension property}
if for every $m< k$, every atomic $m$-type $\rho$ and every extension $\rho^+\in\extend(\rho)$,
the following holds: every tuple $\ta$ that realises  $\rho$ in $\pi$ 
can be extended to a realisation $(\ta,b)$ of  $\rho^+$, for some $b \in A \setminus \ta$.
\end{Definition} 
 
\begin{proposition}\label{extension-property}  Fix a finite relational vocabulary $\tau$ and
let $K$ be a countable semiring with a probability distribution $p\from K^+ \to(0,1]$. 
For every atomic $m$-type $\rho$ and every extension $\rho^+\in\extend(\rho)$,
\[\lim_{n\to\infty} \mu_{n,p}[ \text{  every realisation of $\rho$ in $\pi$ 
can be extended to a realisation of $\rho^+$}]  = 1,\]
and the convergence to this limit is exponentially fast.
For any finite semiring, we moreover have, again with exponential convergence, that
random $K$-interpretations almost surely have the $k$-extension property (for any fixed $k$).
\end{proposition}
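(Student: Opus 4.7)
The plan is to follow Fagin's classical argument for extension axioms, adapted to semiring semantics by exploiting the fact that in a random $K$-interpretation every positive semiring value $v \in K^+$ is chosen with strictly positive probability $p(v)$. The key point is that extending a given $m$-tuple $\ta$ to realise $\rho^+$ amounts to finding one successful witness $b$ among $n-m$ mutually independent candidates, each of which succeeds with a fixed positive probability independent of $n$.

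Fix the atomic $m$-type $\rho$, the extension $\rho^+ \in \extend(\rho)$, a pairwise distinct $m$-tuple $\ta$ in $[n]$, and a candidate $b \in [n] \setminus \ta$. First, compute
\[ q := \mu_{n,p}\bigl[(\ta, b) \text{ realises } \rho^+ \,\big|\, \ta \text{ realises } \rho \bigr]. \]
Since the components of $(\ta, b)$ are pairwise distinct, each variable tuple $\tz$ over $\{x_1, \dots, x_{m+1}\}$ that contains $x_{m+1}$ instantiates, under $(\ta, b)$, to a \emph{distinct} atom involving $b$, and these atoms are disjoint from those used to determine whether $\ta$ realises $\rho$. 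Hence $q$ factors as a product over the complementary literal-pairs mentioning $x_{m+1}$, and each factor equals $\frac{1}{2}\, p(v)$, where $v \in K^+$ is the unique nonzero value prescribed by $\rho^+$ in that pair (the $\frac{1}{2}$ comes from the coin flip selecting which of $\alpha, \neg\alpha$ is nonzero, and $p(v)$ from the subsequent random draw of its value). Thus $q > 0$ is a positive constant independent of $n$.

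Next, observe that the events ``$(\ta, b)$ realises $\rho^+$'' for distinct candidates $b \in [n] \setminus \ta$ depend on pairwise disjoint sets of atoms (those mentioning $b$), hence are mutually independent conditional on the values over $\ta$. Consequently the conditional probability that no $b$ provides an extension is at most $(1-q)^{n-m}$, and summing over the at most $n^m$ candidate tuples $\ta$ yields
\[ \mu_{n,p}\bigl[\text{some realisation of $\rho$ cannot be extended to $\rho^+$}\bigr] \leq n^m (1-q)^{n-m}, \]
which tends to $0$ exponentially fast. For the second claim, finiteness of $K$ makes $\Lit_k(\tau)$ finite, so there are only finitely many atomic $m$-types (for $m < k$) and finitely many extensions of each; a union bound over this finite collection preserves exponential convergence and yields the $k$-extension property almost surely.

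The main obstacle is essentially bookkeeping: verifying that distinct variable tuples give distinct instantiated atoms when $(\ta, b)$ is pairwise distinct (so that $q$ indeed factors cleanly), and that the factor $\frac{1}{2}\, p(v)$ correctly accounts for both stages of the random experiment (choice of true literal, then choice of its value). Neither step poses a conceptual difficulty. The only genuine novelty compared with the Boolean setting is that the parameter in the exponential bound becomes $\frac{1}{2}\, p(v)$ rather than a plain Bernoulli success probability; the polynomial-versus-exponential balance underlying Fagin's argument survives untouched.
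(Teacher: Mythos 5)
Your proposal is correct and follows essentially the same approach as the paper: compute the per-candidate success probability as $2^{-q}\prod_i p(r_i)$ (a positive constant depending only on $\rho^+$), use independence across candidates $b$ to get $(1-q)^{n-m}$ for the failure probability from one base tuple, apply a union bound over at most $n^m$ base tuples to get exponential decay, and finish the finite-semiring case with a union bound over the finitely many type pairs. The only difference is cosmetic: you phrase the argument via conditional probabilities and explicit disjointness of the relevant atom-sets, while the paper directly computes the unconditional per-$b$ probability $f(\rho^+)$; both are the same calculation.
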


\begin{proof}
We first calculate, for any given $(m+1)$-type $\rho^+$
and its restriction $\rho=\rho^+\restriction \Lit_m(\tau)$, a  bound 
for the probability that
a random $K$-interpretation on $n$ elements has some realisation $\ta$ of $\rho$ 
that can \emph{not} be extended to a realisation
$(\ta,b)$ of $\rho^+$. There is a fixed collection $\alpha_1,\dots,\alpha_q$ of relational atoms in 
$\Lit_{m+1}(\tau)$ in which the variable $x_{m+1}$ occurs; hence there is a
fixed collection $(s_1,r_1),\dots,(s_q,r_q)$ of elements of $\{\bot,\top\}\times K^+$, where
\[  (s_i,r_i)=\begin{cases} (\top,j), &\text{ if }\rho^+(\alpha_i)=j \text{ and } \rho^+(\neg\alpha)=0,\\
(\bot,j), &\text{ if }\rho^+(\alpha_i)=0 \text{ and } \rho^+(\neg\alpha)=j.\end{cases}\]
Thus, the probability that all values chosen by a random $K$-interpretations coincide with
those required by $\rho^+$ is
\[  f(\rho^+)\coloneqq 2^{-q}\prod_{i=1}^q p(r_i) > 0.\]
Thus, for any given realisation $\ta=(a_1,\dots,a_m)$ of $\rho$,
the probability that a fixed $b\in[n]\setminus\{a_1,\dots,a_m\}$ does \emph{not} provide
a realisation $(\ta,b)$ of $\rho^+$ is $(1-f(\rho^+))$. It follows that
\[  \mu_{n,p}[ \text{ some realisation of $\rho$ does not extend to a realisation of $\rho^+$} ] \leq n^m(1-f(\rho^+))^{n-m}\]
which for growing $n$ converges to 0 exponentially fast.

Over an infinite semiring there exist infinitely many atomic $k$-types for any $k\geq 1$, so  we cannot realise all
of them on a finite universe. Thus  $\mu_{n,p}[ \pi\text{ has the $k$-extension property }] = 0$ for all $n$.
However, over a finite semiring, each $k$ admits only a bounded number of atomic $k$-types, and we conclude that $\lim_{n\to\infty} \mu_{n,p}[ \pi\text{ has the $k$-extension property }] = 1$.
\end{proof}

\section{First-order formulae and semirings of polynomials}

By \cite{GraedelTan17}  we can describe the semiring semantics of  any first-order 
sentence $\psi\in\FO(\tau)$ on a finite universe $A$
by a polynomial $f^A_\psi \in \N[\XX A]$  in the set of indeterminates $\XX A$, which has, for every fully instantiated $\tau$-atom $R\ta \in \Lit_A(\tau)$ over $A$, two indeterminates
$X_{R\ta}$ and $X_{\neg R\ta}$.
For any $K$-interpretation $\pi\from\Lit_A(\tau)\to K$, we have that $\pi\ext\psi=f^A_\psi[\pi]$,
where $f^A_\psi[\pi]$
results from $f^A_\psi$ by substituting the indeterminate $X_\beta$ by $\pi(\beta)$, for every literal $\beta \in \Lit_A(\tau)$.
Clearly, the set $\XX A$, and hence the polynomial $f^A_\psi$, depends on $A$.

We shall prove that for semiring interpretations with the $k$-extension property, we can do
better. For any natural number $i$, let  $\XX i$ be the set
of indeterminates $X_\alpha$ and $X_{\neg\alpha}$
for $\tau$-atoms $\alpha = R\tz \in \Lit_i(\tau)$ using only variables $x_1,\dots,x_i$.
Notice that $\XX i$ depends only on $i$ and $\tau$, but not on the universe.
Further, let $E=(\{0,e,1\},+,\cdot,0,1)$ 
be the three-element semiring  with $e+e=e\cdot e=e$ and $e+1=1$.
We describe any formula $\psi(x_1,\dots,x_i)$ with $i\leq k$ by a formal polynomial
$f_\psi\in E[\XX i]$, independent of the size of the universe on which we evaluate $\psi$.
As usual, we can write $ f_\psi = m_1 + \dots + m_l$
as a sum of monomials of the form $m = cX_1\dots X_\ell$ in indeterminates from $\XX i$ and with coefficient $c\in\{0,1,e\}$.

\begin{Definition} \label{def-polynomial} Let $\psi(x_1,\dots,x_i)\in\FO(\tau)$ for a finite relational
vocabulary $\tau$. Recall that we assume that $\psi$ is in negation normal form
and written with the excluding quantifiers $\Ene$ and $\Ane$.
The associated polynomial $f_\psi(\XX i)$ is defined by
induction, as follows.
\begin{itemize}
\item If $\psi$ is an equality $x_j=x_\ell$ then $f_\psi=1$ if $j=\ell$ and $f_\psi=0$ if $j\neq \ell$.
Similarly, if $\psi$ is an inequality $x_j\neq x_\ell$ then $f_\psi=1$ if $j\neq \ell$ and $f_\psi=0$ if $j=\ell$.

\item If $\psi$ is a relational atom $\alpha$ or its negation $\neg \alpha$, then $f_\psi=X_\alpha$ or $f_\psi=X_{\neg \alpha}$, respectively.

\item For disjunctions and conjunctions, we set $f_{\psi\lor\phi}\coloneqq f_\psi + f_\phi$ and $f_{\psi\land\phi}\coloneqq f_\psi \cdot f_\phi$.

\item Consider $\psi(\tx)=\Ene y\,\phi(\tx,y)$ and assume w.l.o.g.\ that $y=x_{i+1}$.
For the inner formula, we have a polynomial
$f_\phi$ with indeterminates in $\XX{i+1}$ which we write as $f_\phi(\XX i,\YY i)$, where $\YY i=\XX{i+1}\setminus \XX i$.
Let $S$ be the set of all consistent selector functions $s\from \YY i\to \{0,1\}$, consistent in the sense that precisely one of $X_\alpha$, $X_{\neg\alpha}$ is mapped to 0, for all $\tau$-atoms $\alpha$.
Now set $f_\psi(\XX i)\coloneqq \sum_{s\in S} f_\phi(\XX i,s(\YY i))$.

\item 
Finally consider $\psi(\tx)=\Ane y\, \phi(\tx,y)$ with $y=x_{i+1}$ and again write $f_\phi(\XX i, \YY i)$ as above.
Let $S$ be the set of all consistent selector functions $s\from \YY i\to \{0,e\}$ and set $f_\psi(\XX i)\coloneqq \prod_{s\in S} f_\phi(\XX i,s(\YY i))$.
\end{itemize}
\end{Definition}

\begin{figure}
\begin{tabular}{l|l}
$\psi$ & $f_\psi$ \\ \hline
$x_i = x_j$ & $1$ or $0$ (depending on $i=j$) \\
$R\tz$, $\neg R\tz$ & $X_{R\tz}$, $X_{\neg R\tz}$ \\
$\phi \lor \theta$, $\phi \land \theta$ & $f_\phi + f_\theta$, $f_\phi \cdot f_\theta$ \\
$\Ene y\ \phi(\tx,y)$ & $\sum_{s \in S} f_\phi(\XX i, s(\YY i))$, with consistent assignments $s \from \YY i \to \{0,1\}$ \\
$\Ane y\ \phi(\tx,y)$ & $\prod_{s \in S} f_\phi(\XX i, s(\YY i))$, with consistent assignments $s \from \YY i \to \{0,e\}$ \\
\end{tabular}
\caption{Construction of the polynomial $f_\psi$ (\cref{def-polynomial}).}
\label{figPolynomials}
\end{figure}

\begin{Example}
\label{exPolynomials}
Consider
$\psi\coloneqq \Ene x(\neg Exx\land\Ane y(Exy\lor (\neg Exy\land \Ene z(Exz\land Ezy))))$,
an $\FO^3$-sentence defining the directed graphs that contain some centre from which all nodes are reachable in
one or two steps. For ease of notation we abbreviate the indeterminates associated with the atoms as
$X\coloneqq X_{Exx}$, $Y\coloneqq X_{Exy}$, $Z\coloneqq X_{Exz}$ and $U\coloneqq X_{Ezy}$, as well as
$\nnX, \nn{Y}, \nn{Z}$ and $\nn{U}$ associated with the corresponding negated atoms. 
The following table describes the polynomials $f_\phi$ for the subformulae of $\psi$. 

\begin{center}
\begin{tabular}{|c|c|}
\hline
$\phi$&$f_\phi$\\ \hline\hline
$Exz \land Ezy$&$ZU$\\
$\Ene z ( Exz \land Ezy)$& 1\\ 
$\neg Exy\land \Ene z(Exz\land Ezy)$& $\nn{Y}$\\
$Exy\lor (\neg Exy\land \Ene z(Exz\land Ezy))$& $Y+\nn{Y}$\\
$\Ane y(Exy\lor (\neg Exy\land \Ene z(Exz\land Ezy))))$&$e$\\
$\neg Exx\land\Ane y(Exy\lor (\neg Exy\land \Ene z(Exz\land Ezy))))$&$e \nnX$\\
$\psi$&$e$\\
\hline
\end{tabular}
\end{center}
We remark that the classically equivalent sentence $\psi'$ obtained from $\psi$
by omitting the literal $\neg Exy$ is instead described by $f_{\psi'}=1$. 
\end{Example}

We next observe that polynomials $f\in E[\XX k]$, with indeterminates $X_\beta$ for literals $\beta \in \Lit_k(\tau)$,
are evaluated to  semiring values $f[\rho] \in K$
by atomic $k$-types $\rho\from\Lit_k(\tau)\to K$,  for any semiring $K$ with a distinguished element $\epsilon$.
Indeed, $\rho$ defines a unique homomorphism $h^\epsilon_\rho \from E[\XX k] \to K$,
induced by $h^\epsilon_\rho(e)\coloneqq \epsilon$ and $h^\epsilon_\rho(X_\beta)\coloneqq \rho(\beta)$ for every literal $\beta \in \Lit_k(\tau)$.
We put $f[\rho] \coloneqq h_\rho^\epsilon(f)$ and remark that by monotonicity of polynomials over semirings, we have that
$f[\rho]\leq f[\rho']$ whenever $\rho\leq\rho'$.

\section{The 0-1 law for lattice semirings}\label{sect:finiteminmax}

We now use the polynomials $f_\psi$ to obtain a first 0-1 law for finite min-max semirings.
In fact, our result is slightly more general: we consider finite \emph{lattice semirings} $(K,\join,\meet,0,1)$ where the two operations are supremum and infimum with respect to a given partial order with least element $0$ and greatest element $1$.
Min-max semirings are then the special case where the order is linear.
Notice that every bounded distributive lattice is a lattice semiring.

In such semirings, we define $\epsilon_K \coloneqq \Inf \{ j \in K : j \neq 0 \}$ as the smallest positive element, if such an element exists (otherwise $\epsilon_K = 0$).
In finite min-max semirings, we always have $\epsilon_K > 0$.
We now prove that for $K$-interpretations into finite lattice semirings with the $k$-extension property,
the polynomials $f_\psi$ constructed in \cref{def-polynomial} provide a concise and
adequate description of any first-order formula $\psi(\tx) \in \FO^k$.

\begin{theorem} \label{ext-polynomials} 
Let $(K,\join,\meet,0,1)$ be a finite lattice semiring,
$\tau$ a finite relational vocabulary and $k\in\omega$.
Then, for every $K$-interpretation $\pi\from\Lit_A(\tau)\to K$  with 
the $k$-extension property, every formula 
$\psi(x_1,\dots,x_i)\in\FO^k(\tau)$ and every tuple $\ta=(a_1,\dots,a_i)$
of pairwise distinct elements of $A$, we have that
$\fmla \psi \ta = \polyrho\psi{\ta}$.
\end{theorem}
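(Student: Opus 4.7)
The plan is to proceed by structural induction on $\psi(\tx) \in \FO^k(\tau)$ with $|\tx| = i$. The base cases---(in)equalities and (negated) atomic literals---follow directly by matching the semantic clauses with \cref{def-polynomial}, and the Boolean connectives $\land, \lor$ lift trivially since both sides of the identity are defined inductively via the semiring operations $+$ and $\cdot$.

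The real work, and what I expect to be the main obstacle, is the two quantifier cases. The difficulty is that the \emph{semantic} sum/product ranges over all $b \in A \setminus \ta$, whose induced types $\rhopi_{(\ta,b)}$ may carry \emph{arbitrary} non-zero semiring values, whereas the \emph{syntactic} sum/product in \cref{def-polynomial} ranges over consistent Boolean-style selectors $s \from \YY i \to \{0,1\}$ (for $\Ene$) or $s \from \YY i \to \{0,e\}$ (for $\Ane$). To bridge this gap I would invoke three ingredients specific to lattice semirings: $(i)$ addition and multiplication in $K$ are the idempotent operations $\join$ and $\meet$, so multiplicities among $b$'s realising the same extension do not matter; $(ii)$ polynomials over semirings are monotone in the componentwise order on atomic types (as already noted after \cref{def-polynomial}); and $(iii)$ the $k$-extension property, applicable since $i < k$, guarantees that as $b$ ranges over $A \setminus \ta$ the family $\{\rhopi_{(\ta,b)}\}$ covers \emph{every} $\rho^+ \in \extend(\rhopi_\ta)$.

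For the existential case $\psi = \Ene y\, \phi(\tx, y)$, I would first unfold the excluding quantifier and apply the inductive hypothesis to obtain $\fmla{\psi}{\ta} = \Sup_{b \in A \setminus \ta} f_\phi[\rhopi_{(\ta,b)}]$, and then rewrite this via $(iii)$ as $\Sup_{\rho^+ \in \extend(\rhopi_\ta)} f_\phi[\rho^+]$. Grouping extensions by their Boolean pattern $s \in S$, the extension $\rho^s_{\max}$ that assigns $1$ to every non-zero literal is componentwise largest among extensions with pattern $s$, so by $(ii)$ it attains the partial join. Since substituting $s(\YY i) \in \{0,1\}$ as in \cref{def-polynomial} exactly evaluates $f_\phi$ at $\rho^s_{\max}$, the total join coincides with $\sum_{s \in S} f_\phi[\rhopi_\ta, s(\YY i)] = \polyrho{\psi}{\ta}$.

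The universal case $\Ane y\, \phi$ is entirely dual: products become the idempotent $\meet$, and for each Boolean pattern $s$ the \emph{minimum} extension $\rho^s_{\min}$, which assigns the smallest positive element $\epsilon_K$ to each non-zero literal, attains the partial meet. Since the homomorphism of \cref{def-polynomial} sends $e \mapsto \epsilon_K$, substituting $s(\YY i) \in \{0, e\}$ into $f_\phi$ precisely evaluates at $\rho^s_{\min}$, so $\prod_{s \in S} f_\phi[\rhopi_\ta, s(\YY i)]$ reproduces the semantic meet. A minor bookkeeping point throughout is the distinctness requirement: the inductive hypothesis is only invoked on tuples $(\ta, b)$ with $b \notin \ta$, which is exactly what the $k$-extension property provides.
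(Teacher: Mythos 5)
Your base cases, Boolean connectives, and the existential quantifier case are all correct and follow essentially the same route as the paper's proof: use the $k$-extension property to identify the set $\{\rhopi_{(\ta,b)} : b \in A \setminus \ta\}$ with $\extend(\rhopi_\ta)$, group by Boolean pattern, and use monotonicity of polynomials together with the fact that $1$ is the top element to show that the maximal extension attains the join within each pattern.

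However, there is a genuine gap in your universal case. You define $\rho^s_{\min}$ as the extension ``which assigns the smallest positive element $\epsilon_K$ to each non-zero literal'' and argue it attains the partial meet. This only makes sense when $\epsilon_K > 0$, i.e.\ when $K$ has a \emph{smallest positive} element. But $\eps_K = \Inf\{j \in K : j \ne 0\}$ can equal $0$ in a finite lattice semiring whose order is not total: take $K = \{0, a, b, 1\}$ with $a$ and $b$ incomparable, $a \meet b = 0$, $a \join b = 1$. Then $\eps_K = 0$, and assigning $\eps_K$ to the literals that should be non-zero yields an inconsistent atomic type, so $\rho^s_{\min}$ does not exist and your monotonicity argument has nothing to latch onto. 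Moreover, the claim that ``$\pi\ext{\psi(\ta)} \ge f_\psi[\rhopi_\ta]$'' becomes nontrivial precisely here, because $e$ is evaluated to $0$, so $f_\psi[\rhopi_\ta]$ simply discards all monomials touching $\YY i$. The paper closes this gap with a separate argument: it considers extensions taking values only in $\min(K)\cup\{0\}$ (which the extension property realises), splits $f_\phi$ into monomials touching $\YY i$ and those that do not, and uses lattice distributivity together with $\Inf\min(K) = 0$ to show the monomials through $\YY i$ contribute nothing to the infimum. You need some version of this argument (or else restrict the theorem to semirings with $\eps_K > 0$, such as min-max semirings).
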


\begin{proof}
We proceed by induction on $\psi$. If $\psi$ is a literal, the claim is immediate from
the definition of $f_\psi$.
For $\psi=\phi\lor\theta$, we have
$\fmla \psi \ta = \fmla \phi \ta \join \fmla \theta \ta =
\polyrho\phi{\ta}\join \polyrho\theta{\ta} =
(f_\phi + f_\theta)[\rho^\pi_{\ta}]$ by induction.
Analogously for conjunctions.

\medskip
Let now $\psi(\tx)=\Ene y\,\phi(\tx,y)$ and w.l.o.g.\ $y=x_{i+1}$. Recall that
$f_\psi(\XX i)$ is defined as $\sum_{s\in S} f_\phi(\XX i,s(\YY i))$,
where $\YY i=\XX{i+1}\setminus \XX i$ and
$S$ is the set of consistent selector functions $s\from \YY i \to \{0,1\}$.
Notice that when we evaluate $f_\psi$ in a lattice semiring, the sum is interpreted as supremum (and multiplication as infimum).
By induction,
\[
  \fmla \psi \ta =
  \Sup_{b\in A\setminus\ta} \fmla \phi {\ta,b} =
  \Sup_{b\in  A\setminus\ta} \polyrho\phi{\ta,b}.
  \tag{$*$}
\]

We first prove that $\polyrho \psi \ta$ is an upper bound for $\fmla \psi \ta$.
For every $b \in A \setminus \ta$, define the selector function $s_b$ by
$s_b(X_\beta) = 1$ if $\rhopi_{\ta,b}(\beta) \neq 0$ (and $s_b(X_\beta)=0$ otherwise), for every literal $\beta\in\Lit_{i+1}(\tau)\setminus\Lit_i(\tau)$.
Since $1$ is the largest semiring value, we have\footnote{We kindly ask the reader to permit the abbreviation $\poly \phi {\rhopi_\ta, s_b(\YY i)}$ of the technically correct, but more verbose $f_\psi(\XX i, s_b(\YY i))[\rho_\ta]$.}
$\polyrho \phi {\ta,b} \le \poly \phi {\rhopi_\ta, s_b(\YY i)}$ by monotonicity.
Hence $\fmla \psi \ta \le \Sup_{s \in S} \poly \psi {\rhopi_\ta,s(\YY i)} = \polyrho \psi \ta$ by $(*)$.

The other direction holds by the extension property.
Every selector function $s \in S$ induces an extension $\rho_s \in \extend(\rho^\pi_{\ta})$
with $\rho_s(\beta) = s(X_\beta)$ for $\beta \in \Lit_{i+1}(\tau) \setminus \Lit_i(\tau)$.
Since $\pi$ has the $k$-extension property, there is $b_s \in A \setminus \ta$ with $\rho^\pi_{\ta, b_s} = \rho_s$,
hence $\poly \phi {\rhopi_\ta, s(\YY i)} = \polyrho \phi {\ta,b_s}$.
As this holds for all $s$, we have $\polyrho \psi \ta \le \fmla \psi \ta$ by $(*)$ and thus equality.

\medskip
Finally let $\psi(\tx)=\Ane y\, \phi(\tx,y)$ and recall that $f_\psi(\XX i)$ is defined as $\prod_{s\in S} f_\phi(\XX i,s(\YY i))$,
where this time we consider selector functions $s \from \YY i \to \{0,e\}$ instead of $\{0,1\}$.
We again have $\fmla \psi \ta = \Inf_{b\in  A\setminus\ta} \polyrho\phi{\ta,b}$ by induction.
Since $\eps_K$ is the smallest positive semiring value (or $0$), we first observe that, completely analogous to the previous case,
$\polyrho \psi \ta$ is a lower bound for $\fmla \psi \ta$.
If $\eps_K > 0$, then the other direction is analogous as well:
for each $s \in S$, define $\rho_s$ by $\rho_s(\beta) = \eps_K$ if $s(X_\beta)=e$, and $\rho_s(\beta) = 0$ if $s(X_\beta)=0$ (recall that $e$ becomes $\eps_K$ when evaluating $f_\psi$);
this extension is realised by the extension property.

It remains to prove $\polyrho \psi \ta \ge \fmla \psi \ta$ in the case $\eps_K = 0$
(defining $\rho_s$ by setting $\rho_s(\beta)=\eps_k$ or $\rho_s(\beta)=0$ would not be consistent).
Recall that min-max semirings always have $\eps_K > 0$, so this case only happens for lattice semirings where the underlying order is not total.
Let $\min(K)$ be the set of minimal non-zero elements of $K$.
Observe that $|\min(K)| \ge 2$ and $\Inf \min(K) = 0$, as $K$ is finite and $\eps_K = 0$.
Let $R$ be the set of extensions $\rho \in \extend(\rhopi_\ta)$ such that $\rho(\beta)=0$ or $\rho(\beta) \in \min(K)$,
for all $\beta \in \Lit_{i+1}(\tau) \setminus \Lit_i(\tau)$.
By the $k$-extension property, all $\rho \in R$ are realised by some $b \in A \setminus \ta$,
hence $\Inf_{\rho \in R} \poly \phi \rho \ge \Inf_{b\in  A\setminus\ta} \polyrho\phi{\ta,b} = \fmla \psi \ta$.

Now consider $\pol \psi$.
As we evaluate $e$ to $\eps_K=0$, the selector function $s$ does not matter and we have $\polyrho \psi \ta = \poly \phi {\rhopi_\ta,0}$
(that is, we map all variables $X_\alpha,X_{\neg\alpha} \in \YY i$ to $0$, ignoring the usual consistency requirement).
We claim that $\poly \phi {\rhopi_\ta,0} = \Inf_{\rho \in R} \poly \phi \rho$.
To see this, we write $f_\phi = g + h$ or, more precisely, $f_\phi(\XX i, \YY i) = g(\XX i, \YY i) + h(\XX i)$,
where $g$ contains all the monomials of $f_\phi$ that contain any $X_\beta \in \YY i$, and $h$ the remaining ones.
Recall that when we evaluate $f_\psi = g + h$, we interpret addition by the semiring operation $\join$.
Since lattice semirings are distributive and $R$ finite, we have
\[
  \Inf_{\rho \in R} \poly\phi\rho =
  \Inf_{\rho \in R} (g[\rho] \join h[\rho^\pi_\ta]) =
  \big(\Inf_{\rho \in R} g[\rho]\big) \,\join\, h[\rhopi_\ta].
\]
Now consider any minimal element $\bot \in \min(K)$ and some type $\rho \in R$ with $\rho(\beta) \in \{0,\bot\}$ for all $\beta \in \Lit_{i+1}(\tau) \setminus \Lit_i(\tau)$.
By definition, each monomial $m$ of $g$ contains an indeterminate $X_\beta$ for some $\beta \in \Lit_{i+1}(\tau) \setminus \Lit_i(\tau)$,
so $m[\rho] \le \bot$ (recall that multiplication is $\meet$).
Hence $g[\rho] \le \bot$.
As this holds for all $\bot \in \min(K)$, we have shown $\Inf_{\rho \in R} g[\rho] \le \Inf \min(K) = 0$.
It follows that $\Inf_{\rho \in R} \poly \phi \rho = h[\rhopi_\ta] = \poly \phi {\rhopi_\ta,0}$ as claimed.
\end{proof}

\begin{corollary}[0-1 law for $\FO$ on finite lattice semirings] \label{0-1-law-finite}
Let $K$ be a finite lattice semiring, with a 
probability distribution $p\from K^+ \to (0,1]$, and let $\tau$ be a relational vocabulary.
Then, for every sentence $\psi\in\FO(\tau)$ and every value $j\in K$,
the sequence $(\mu_{n,p}[ \pi\ext{\psi}=j])_{n<\omega}$ converges exponentially fast to either 0 or 1, as $n$ goes to infinity.
Further, the only possible almost sure valuations of sentences are $\as_{K,p}(\FO(\tau))=\{0,1,\epsilon_K\}$.
\end{corollary}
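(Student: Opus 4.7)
The plan is to combine \cref{ext-polynomials} with \cref{extension-property}: the first reduces the semiring value of any $\FO$-sentence on a $K$-interpretation with the $k$-extension property to a fixed element of $\{0,\epsilon_K,1\}$, and the second guarantees that such interpretations occur almost surely with exponential rate.

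First I would fix $\psi\in\FO(\tau)$ and let $k$ be the number of distinct variables occurring in $\psi$, so that $\psi\in\FO^k(\tau)$. Since $\psi$ is a sentence and $\tau$ is purely relational, the index set $\XX 0$ is empty; hence by \cref{def-polynomial} the associated polynomial $f_\psi$ lies in $E[\XX 0]=E$ and is therefore a constant $c_\psi\in\{0,e,1\}$ determined by $\psi$ alone. Under the homomorphism used by \cref{ext-polynomials}, which sends $e$ to $\epsilon_K$, this $c_\psi$ evaluates to a fixed value $j_\psi\in\{0,\epsilon_K,1\}$ depending only on $\psi$ and on $\epsilon_K$.

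Next I would invoke the two key results. By \cref{ext-polynomials}, whenever $\pi$ has the $k$-extension property we have $\pi\ext{\psi}=j_\psi$. By \cref{extension-property}, since $K$ is finite and $k$ is fixed, $\mu_{n,p}[\pi\text{ has the $k$-extension property}]\to 1$ exponentially fast as $n\to\infty$. Combining these gives $\mu_{n,p}[\pi\ext{\psi}=j_\psi]\to 1$ exponentially fast, while $\mu_{n,p}[\pi\ext{\psi}=j]\to 0$ exponentially fast for every $j\neq j_\psi$. This establishes both the 0-1 law and the inclusion $\as_{K,p}(\FO(\tau))\subseteq\{0,1,\epsilon_K\}$.

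For the reverse inclusion I would exhibit witnesses: $\exists x\,(x=x)$ has $f_\psi=1$ and hence almost sure value $1$; $\exists x\,(x\neq x)$ has $f_\psi=0$ and hence almost sure value $0$; and when $\epsilon_K>0$, the sentence of \cref{exPolynomials} satisfies $f_\psi=e$, giving almost sure value $\epsilon_K$. When $\epsilon_K\in\{0,1\}$ the set $\{0,1,\epsilon_K\}$ collapses to $\{0,1\}$ and no further witness is required. I do not anticipate a serious obstacle here: \cref{ext-polynomials} has already absorbed the substantive semantic argument, and the only conceptual step left is the syntactic observation that for a sentence the polynomial $f_\psi$ degenerates to a constant in $E$, which is exactly what pins the almost sure valuation to one of the three values in $\{0,1,\epsilon_K\}$.
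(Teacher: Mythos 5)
Your proposal is correct and follows essentially the same route as the paper's proof: fix $k$ with $\psi\in\FO^k(\tau)$, note that for a sentence $f_\psi\in E$ is a constant in $\{0,e,1\}$, apply \cref{ext-polynomials} on interpretations with the $k$-extension property and \cref{extension-property} for the exponential almost-sure rate. The only addition is your explicit witnesses for the reverse inclusion $\as_{K,p}(\FO(\tau))\supseteq\{0,1,\epsilon_K\}$, which the paper leaves implicit; those witnesses are fine in lattice semirings (where addition is idempotent).
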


\begin{proof}
Fix $k$ such that $\psi\in\FO^k(\tau)$. By \cref{extension-property} the 
probability that a random $K$-interpretation $\pi$ on $[n]$ has the $k$-extension property
converges to 1 exponentially fast, as $n$ goes to infinity. But on $K$-interpretations with the $k$-extension property,
$\psi$ is described by a polynomial $f_\psi$.
Since $\psi$ has no free variables, we have that either $f_\psi=0$, or $f_\psi=1$, or $f_\psi=e$,
and the atomic type to consider is the trivial empty type $\emptyset$, 
which implies that $\poly\psi\emptyset$ is either 0, or 1, or $\epsilon_K$.
By applying \cref{ext-polynomials}, we conclude that the probabilities  $\mu_{n,p}\smash{\big[} \pi\ext{\psi} = \poly\psi\emptyset \smash{\big]}$ converge to 1 exponentially fast.
\end{proof}

Notice that, as in the Boolean case, the 0-1 law does not extend to arbitrary formulae with free variables.
Indeed for an atomic formula, say $Exy$, any value $j \in K^+$ and any fixed pair of constants $k,\ell\in\N$,
we have that $\lim_{n\to\infty}\mu_{n,p}[ E(k,\ell) = j ] = \frac 1 2 p(j)$, which is in general not 0 or 1. 
Nevertheless we can extend the 0-1 law to formulae $\psi(\tx)$ with free variables, with the additional
constraint that every relational atom contains a quantified variable; this implies that $f_\psi$ is either $0$, $1$ or $e$.

\begin{corollary} \label{0-1-law-formulae}
Let $K$, $p$, $\tau$ be as in \cref{0-1-law-finite}.
Let $\Phi$ be the set of fully instantiated first-order sentences $\psi(\ta)$
where $\psi(x_1,\dots,x_i)$ is a formula in $\FO(\tau)$ with free variables $x_1,\dots,x_i$,
in which every relational atom contains a quantified variable, and
$\ta=(a_1,\dots,a_i)$ is a tuple of distinct natural numbers, i.e.\ of elements of 
all universes $[n]$ for large enough $n$. 
Then the 0-1 law holds for $K,p$, and $\Phi$, and $\as_{K,p}(\Phi)=\{0,1,\epsilon_K\}$.
\end{corollary}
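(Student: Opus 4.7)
The plan is to reduce the statement to \cref{ext-polynomials} and \cref{extension-property} via the following structural observation: under the hypothesis on $\psi$, the polynomial $f_\psi$ has no indeterminates at all and is a constant in $E=\{0,e,1\}$.

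First, I would establish by induction on $\phi$, following the construction in \cref{def-polynomial}, that for every subformula $\phi(x_1,\dots,x_j)$ of $\psi$, every indeterminate $X_\beta\in\XX j$ occurring in $f_\phi$ corresponds to a literal $\beta$ (or its negation $\neg\beta$) that appears syntactically in $\phi$. The literal, equality, and Boolean cases follow directly from \cref{def-polynomial}. For a quantifier step, suppose $\phi$ is $\Ene y\,\phi'$ or $\Ane y\,\phi'$ with $y=x_{j+1}$; the construction substitutes every indeterminate in $\YY j=\XX{j+1}\setminus\XX j$, namely those whose literal mentions $y$, by a constant from $\{0,1\}$ or $\{0,e\}$. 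The surviving indeterminates lie in $\XX j$, and by the inductive hypothesis applied to $\phi'$, each corresponds to a literal of $\phi'$ (equivalently, of $\phi$) using only variables $x_1,\dots,x_j$. Applied to $\psi(x_1,\dots,x_i)$, the hypothesis of the corollary says every relational atom of $\psi$ mentions some quantified variable, so no literal of $\psi$ uses only the free variables; therefore $f_\psi$ has no indeterminates and $f_\psi\in E$.

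Next I would assemble the 0-1 law. Fix $k$ such that $\psi\in\FO^k(\tau)$ and consider $n$ large enough that all entries of $\ta$ lie in $[n]$ (they are pairwise distinct by hypothesis). By \cref{extension-property}, the probability that a random $K$-interpretation $\pi$ on $[n]$ has the $k$-extension property tends to $1$ exponentially fast. On every such $\pi$, \cref{ext-polynomials} yields $\pi\ext{\psi(\ta)}=f_\psi[\rho^\pi_{\ta}]$. Since $f_\psi$ is a constant in $E$, this value depends neither on $\pi$ nor on the atomic type $\rho^\pi_{\ta}$ and equals $0$, $\epsilon_K$, or $1$ in $K$, according to whether $f_\psi$ is $0$, $e$, or $1$. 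Hence $\mu_{n,p}[\pi\ext{\psi(\ta)}=j]\to 1$ exponentially fast for the corresponding $j\in\{0,\epsilon_K,1\}$, and any other value $j'\in K$ has $\mu_{n,p}[\pi\ext{\psi(\ta)}=j']\to 0$.

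For the set equality $\as_{K,p}(\Phi)=\{0,1,\epsilon_K\}$, observe that every $\FO(\tau)$-sentence trivially satisfies the syntactic hypothesis, since all of its variables are quantified; hence $\FO(\tau)\subseteq\Phi$, taking $\ta$ as the empty tuple. Combining this with \cref{0-1-law-finite} gives $\{0,1,\epsilon_K\}=\as_{K,p}(\FO(\tau))\subseteq\as_{K,p}(\Phi)$, and the reverse inclusion follows from the previous paragraph. The only non-routine step is the structural lemma about $f_\psi$; everything else is a direct combination of results already established in the paper.
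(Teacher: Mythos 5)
Your proof is correct, and the structural lemma you single out is exactly the right thing to pin down: the paper states this corollary as an immediate consequence of the remark ``this implies that $f_\psi$ is either $0$, $1$ or $e$'' but does not spell out that implication, so your induction showing that every indeterminate in $f_\phi$ corresponds to a literal occurring syntactically in $\phi$ is the missing content, and the rest of your argument (realisability of $\ta$ for large $n$, \cref{extension-property} plus \cref{ext-polynomials}, and the two inclusions giving $\as_{K,p}(\Phi)=\{0,1,\epsilon_K\}$) is the same reduction to \cref{0-1-law-finite} that the paper clearly intends.

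One point worth making explicit, though it is implicit in your phrasing: the syntactic hypothesis ``every relational atom contains a quantified variable'' must be read for $\psi$ already in negation normal form and written with the excluding quantifiers $\Ene,\Ane$ (the standing assumption of \cref{def-polynomial}), not for an equivalent $\FO$-formula with ordinary $\exists,\forall$. The rewriting $\E y\,\phi(\tx,y) \equiv \bigvee_i \phi(\tx,x_i) \lor \Ene y\,\phi(\tx,y)$ introduces atoms in which a formerly quantified variable has been replaced by a free one, so a formula such as $\A y\,(y\neq x_1 \lor Rx_1 y)$ satisfies the hypothesis in its $\forall$-form but not in its $\Ane$-form (where $Rx_1x_1$ appears), and indeed $\pi\ext{\cdot}$ reduces to $\pi(Ra_1a_1)$ for it, violating the 0-1 law. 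Your inductive lemma is formulated over the $\Ene/\Ane$-form, so it is not affected, but a reader who mistakes the hypothesis for a condition on the $\exists/\forall$-form would be misled; a one-line clarification would make the argument airtight.
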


\Cref{0-1-law-finite} splits the relational first-order sentences into three classes, according to
whether their valuations in finite lattice semirings are almost surely 0,1, or $\epsilon_K$.
Notice that this split is the same for all finite lattice semirings, since it just depends
on the associated polynomial $f_\psi$.
The only lattice semiring with two elements is the Boolean semiring (where we have $\eps_K = 1$).
The classical 0-1 law for first order logic, saying that every relational first-order sentence is 
asymptotically either almost surely true, or almost surely false, can thus be seen as a special case of \cref{0-1-law-finite}.
In particular, the almost sure valuations $\eps_K$ and $1$ in any finite lattice semiring $K$ occur precisely for the formulae which are almost surely true in the Boolean case.

\begin{Example}[secret facts]
Semiring semantics can be used to model access restrictions to atomic facts,
for reasoning about the necessary clearance level for checking the truth of logical statements. Specifically, the
\emph{access control semiring}, also called \emph{security semiring}, which has been studied for instance in \cite{FosterGreTan08} 
is a min-max semiring with elements $0 < \Tsec < \Sec < \Cnf < \Pub$ where 0 stands for ``inaccessible'' (or ``false''),
$\Tsec$ is ``top secret'', $\Sec$ is ``secret'', $\Cnf$ is ``confidential'', and
$\Pub$ is ``public''. An interpretation $\pi$ into this semiring labels atomic facts by access restrictions and
the associated valuation $\pi\ext\phi$ of a first-order statement $\phi$ describes the clearance level that
is necessary to verify the truth of $\phi$ under these restrictions.
\Cref{0-1-law-finite} implies that under a random assignment of access restrictions
(assuming positive probabilities of all security levels), any first-order statement can
almost surely either be checked with publicly available information, cannot be checked at all,
or requires clearance for top secret information.
\end{Example}

\section{Complexity}

We now study the complexity of computing the almost sure valuation of a given 
first-order sentence $\psi$ in finite lattice semirings. As shown above, this amounts
to the computation of the associated polynomial $f_\psi$.
While, for a sentence $\psi$, the polynomial $f_\psi$ is either 0, 1, or $e$,
the polynomials $f_\phi(\XX k)$ associated with formulae $\phi(x_1,\dots,x_k)$ 
are much more complicated and can have exponential length.  
Rather than computing these intermediate polynomials explicitly, we shall
present a recursive procedure for computing the values
$f_\phi[\rho]$ for any formula $\phi(x_1,\dots,x_k)\in\FO(\tau)$ 
and any atomic $k$-type $\rho\from\Lit_k(\tau)\to K$ with values in a finite min-max semiring.

We remark that the polynomial $f_\phi$  is the same for all finite lattice semirings.
For determining the almost sure valuations of first-order sentences it would 
therefore suffice
to define the procedure just for the three-element semiring $E$.
However, we can solve, with moderate additional effort, the more general
problem of computing valuations $\pi\ext{\psi(\tup a)}$ 
of formulae with free variables not just for $E$, but for any finite min-max semiring $K$,
and any $K$-interpretation $\pi\from\Lit_A(\tau)\to K$ with the $k$-extension property.
Indeed, by  \cref{ext-polynomials} we know that $\fmla \psi \ta = \polyrho\psi{\ta}$.

\medskip
We first prove that this evaluation problem can be solved in \pspace,
for any finite min-max semiring $K$.
Using the well-known fact that $\pspace$ coincides with alternating polynomial time,
we present the evaluation algorithm as an alternating procedure
$\textbf{Eval}(\psi,\rho,c)$ which, given  $\psi(x_1,\dots,x_k)\in\FO(\tau)$, 
an atomic $k$-type $\rho\from\Lit_k(\tau)\to K$, and a value $c\in K$
determines whether $f_\psi[\rho] =c$ (avoiding an explicit construction of $f_\psi$).
We assume that the reader is familiar with the notion of an alternating algorithm and its presentation 
as a game between an existential and a universal player (see e.g. \cite{BalcazarDiaGab90}).

For a complexity analysis, it is appropriate to assume
that formulae are written with the standard quantifiers $\E $ and $\A $,
rather than $\Ene$ and $\Ane$, since the elimination of standard quantifiers  
by excluding ones can increase the length of formulae exponentially. As a consequence, 
when treating quantifiers, the evaluation procedure will have to deal with potential equalities 
between different variables. Accordingly, for a formula $\phi=\E x_{k+1}\theta(x_1,\dots,\allowbreak x_k,\allowbreak x_{k+1})$
we have the polynomial $f_\phi \coloneqq \sum_{i=1}^k f_{\theta(x_1,\dots x_k,x_i)} + f_{\Ene x_{k+1}\theta}$
and analogously for universal quantifiers.

\medskip The idea of the evaluation procedure is that, at any step where it has to be verified 
whether $f_\phi[\rho] =c$ for some triple $(\phi,\rho,c)$, the existential player guesses 
values $c_i$ for the
immediate subformulae $\phi_i$ of $\phi$ which, if correct, would imply that 
indeed $f_\phi[\rho] =c$. The universal player then challenges one of
these claims. For formulae of the form $\E x_{i+1}\theta$ or $\A x_{i+1}\theta$, this involves
(existential and/or universal) choices of selector functions $s\from\YY i\to \{0,1\}$
or  $s\from\YY i\to \{0,e\}$ and the modification of $\rho\from\Lit_i(\tau)\to K$ to 
the extended type $\rho s\from\Lit_{i+1}(\tau)\to K$ defined by
\[
    (\rho s)(\alpha)=\begin{cases}
        \rho(\beta) &\text{ if $\beta\in\Lit_i(\tau)$,}\\
        s(X_\beta)  &\text{ if $\beta\in\Lit_{i+1}(\tau)\setminus \Lit_i(\tau)$.}
    \end{cases}
\]
The procedure ends at triples $(\phi,\rho,c)$ where $\phi$ is atomic, at which
point the algorithm just checks whether $\rho(\phi)=c$.
A detailed description of the algorithm for any relational vocabulary $\tau$ and any min-max semiring $(K,\max,\min,0,1)$ is given in \cref{figAlgorithm}.

\begin{figure}[t]
\begin{mdframed}
\begin{multicols}{2}
\small
\begin{tabbing}
\phantom{x} \= \phantom{x} \= \phantom{x} \= \kill
%\textbf{input:}\\ %\\[.5em]
\textbf{Eval}($\psi,\rho,c$), \textbf{input}:\\
\> a formula $\psi(x_1,\dots,x_k)\in\FO(\tau)$ in nnf\\
\> an atomic type $\rho\from\Lit_k(\tau)\to K$\\
\> an element $c\in K$\\[.5em]
\IF $\psi$ is an atom or negated atom \THEN\\
\> \ACCEPT \IF $\rho(\phi)=c$, \ELSE \REJECT \\[.5em]
\IF $\psi=\phi_1\lor\phi_2$ \THEN  \\
\> \GUESS $c_1,c_2\in K$ with $\max(c_1,c_2)=c$\\
\>  \CHOOSE $i\in\{1,2\}$\\
\>  \textbf{Eval}($\phi_i,\rho,c_i$) \\[.5em]
\IF $\psi=\phi_1\land\phi_2$ \THEN  \\
\> \GUESS $c_1,c_2\in K$ with $\min(c_1,c_2)=c$\\
\>  \CHOOSE $i\in\{1,2\}$\\
\>  \textbf{Eval}($\phi_i,\rho,c_i$) \\[.5em]
\IF $\psi=\E x_{k+1}\phi$ \THEN \\ 
\> \GUESS $c_1,\dots c_{k+1}$ s.t.\ $\max(c_1,\dots,c_{k+1})=c$\\ 
\>  \CHOOSE $i\in\{1,\dots,k+1\}$%\\[\fill\columnbreak]
\end{tabbing}
\begin{tabbing}
\phantom{x} \= \phantom{x} \= \phantom{x} \= \kill
\>  \IF $i\leq k$ \THEN \\
\>\> set $\theta(x_1,\dots,x_k)\coloneqq\phi(x_1,\dots,x_k,x_i)$ \\
\>\> \textbf{Eval}($\theta,\rho,c_i$)\\
\> \IF $i=k+1$ \THEN \\
\>\> \GUESS $s\from\YY k\to\{0,1\}$\\
\>\> \CHOOSE $s'\from\YY k\to\{0,1\}$\\
\>\> \IF $s'=s$ \THEN \textbf{Eval}($\phi,\rho s, c_{k+1}$)\\
\>\> \ELSE \GUESS  $c'\leq c_{k+1}$ and \textbf{Eval}($\phi,\rho s', c'$)\\[.5em]
\IF $\psi=\A x\phi$ \THEN \\
\> \GUESS $c_1,\dots c_{k+1}$ s.t.\ $\min(c_1,\dots,c_{k+1})=c$\\ 
\>  \CHOOSE $i\in\{1,\dots,k+1\}$\\
\>  \IF $i\leq k$ \THEN \\
\>\> set $\theta(x_1,\dots,x_k)\coloneqq\phi(x_1,\dots,x_k,x_i)$ \\
\>\> \textbf{Eval}($\theta,\rho,c_i$)\\
\> \IF $i=k+1$ \THEN \DO\\
\>\> \GUESS $s\from\YY k\to\{0,\epsilon_K\}$\\
\>\> \CHOOSE $s'\from\YY k\to\{0,\epsilon_K\}$\\
\>\> \IF $s'=s$ \THEN \textbf{Eval}($\phi,\rho s, c_{k+1}$)\\
\>\> \ELSE \GUESS  $c'\geq c_{k+1}$ and \textbf{Eval}($\phi,\rho s', c'$)
\end{tabbing}
\end{multicols}
\end{mdframed}
\vspace{-1em} % some space caused by multicols/tabbing?
\caption{Alternating procedure \textbf{Eval}($\psi$,$\rho$,$c$) to decide $f_\psi[\rho] = c$ in min-max semirings.}
\label{figAlgorithm}
\end{figure}

It is obvious that the algorithm runs in alternating polynomial time, but
it remains to prove that it is correct; we proceed by induction on $\psi$. 
Given a triple $(\psi,\rho,c)$ such that, indeed,
$f_\psi[\rho]=c$, the algorithm accepts by making the following existential choices. 
At a disjunction or conjunction, the existential player guesses the correct values
of the immediate subformula. For a formula $\E x_{k+1}\theta(x_1,\dots,x_k,x_{k+1})$
the existential player guesses the values $c_i=f_{\theta(x_1,\dots,x_k,x_i)}[\rho]$
and $c_{k+1}=f_{\Ene x_{k+1}\theta}[\rho]$. If the universal player challenges the value 
for some $i\leq k$, the existential player wins the remaining game from the triple
$(\phi(x_1,\dots,x_k,x_i),\rho,c_i)$ by induction hypothesis.
If instead $c_{k+1}$ is challenged, then the existential player guesses some
selector function  $s\from\YY k\to\{0,1\}$
such that $c_{k+1}=f_{\Ene x_{k+1}\theta}[\rho]=f_\theta[\rho s]$.
The universal player challenges this by choosing also a function 
$s'\from\YY k\to\{0,1\}$. If $s'=s$ this corresponds to the challenge to prove that,
indeed, $f_\theta[\rho s]=c_{k+1}$; since this is the case, and by induction hypothesis, 
the existential player wins the remaining game. If $s'\neq s$ this corresponds to the
challenge to prove that $f_\theta[\rho s']\leq c_{k+1}$. The existential player answers this
by guessing the correct value $c' \coloneqq f_\theta[\rho s']$ and, again by the hypothesis, wins
the remaining game. For formulae $\A x_{k+1}\theta(x_1,\dots,x_k,x_{k+1})$,
the reasoning is analogous.

Consider now a triple $(\phi,\rho,c)$ such that $f_\psi[\rho]\neq c$. Then the existential player
must make incorrect guesses, and the universal player can make sure that such incorrect
triples are propagated through the play, and are then detected at the end, when an atomic formula
is evaluated. Consider again the case of a formula  $\phi=\E x_{k+1}\theta(x_1,\dots,x_k,x_{k+1})$.
From an incorrect triple $(\phi,\rho,c)$, the existential player guesses $c_1,\dots,c_{k+1}$
with $\max(c_1,\dots,c_{k+1})=c$. Hence either $(\theta(x_1,\dots,x_k,x_i),\rho,c_i)$ is incorrect
for some $i\leq k$, in which case the universal players chooses such an $i$ and wins by induction hypothesis,
or the triple $(\Ene x_{k+1}\theta,\rho,c_{k+1})$ is incorrect. In that case, for any function 
$s\from\YY k\to\{0,1\}$ that the existential player might guess, it is either the case that $f_\theta[\rho s]\neq c_{k+1}$,
in which case the universal players wins by choosing $s'=s$,
or that there exists another function $s'\from\YY k\to\{0,1\}$ with the property that 
$f_\theta[\rho s'] > c_{k+1}$. Whatever element $c'\leq c_{k+1}$ the existential player then
guesses, the universal player will then win the remaining game from the
incorrect triple $(\theta,\rho s',c')$. Again, the reasoning for universally quantified formulae is
completely analogous.

We thus have established the following result, for any finite min-max semiring $K$ 
and any relational vocabulary $\tau$.

\begin{theorem}
Given a formula $\psi(x_1,\dots,x_k)\in\FO(\tau)$ and
an atomic $k$-type $\rho\from\Lit_k(\tau)\to K$ in a fixed finite min-max semiring $K$, the value  $f_\psi[\rho]$ can be
computed in $\pspace$.
\end{theorem}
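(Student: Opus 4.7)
The plan is to avoid ever building the polynomial $f_\psi$ explicitly (it can be of exponential size) and instead decide $f_\psi[\rho]=c$ via the recursive alternating procedure $\textbf{Eval}(\psi,\rho,c)$ displayed in \cref{figAlgorithm}. Since $\pspace$ coincides with alternating polynomial time, it suffices to verify two things about this procedure: that it runs in alternating polynomial time, and that it accepts on input $(\psi,\rho,c)$ iff $f_\psi[\rho]=c$.

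The complexity bound is essentially bookkeeping. Each recursive call strictly decreases the syntactic size of the input formula, so the recursion depth is bounded by $|\psi|$. At every node the algorithm stores a formula, an atomic type over a bounded universe of variables, and a constant number of semiring elements; in the quantifier cases it additionally guesses a selector function on $\YY k$ which, for fixed $\tau$ and $K$, has polynomial size. Hence each local step uses polynomial time and polynomial nondeterministic/universal choice, and the whole computation fits into alternating polynomial time.

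The real content is the correctness argument, which I would prove by induction on $\psi$. The atomic case is immediate, and the Boolean connectives are routine from the semiring identities $\max(c_1,c_2)=c$ and $\min(c_1,c_2)=c$ together with the inductive hypothesis for the subformulae. The quantifier case is the technical heart of the argument, and I would use the decomposition
\[
  f_{\E x_{k+1}\theta} \;=\; \sum_{i=1}^k f_{\theta(x_1,\dots,x_k,x_i)} \;+\; f_{\Ene x_{k+1}\theta},
\]
together with the dual decomposition for $\A$, to reduce quantifier evaluation either to a strictly smaller formula or to the selector-function game. For the ``sound'' direction, assuming $f_\psi[\rho]=c$, the existential player guesses the true values $c_i=f_{\theta(x_1,\dots,x_k,x_i)}[\rho]$ and $c_{k+1}=f_{\Ene x_{k+1}\theta}[\rho]$, picks a selector $s$ that witnesses the supremum (respectively, attains the infimum) defining $f_{\Ene x_{k+1}\theta}[\rho]$, and, on any alternative $s'$ chosen by the universal player, responds with the correct value $c'=f_\theta[\rho s']$. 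Induction closes every branch. For the ``complete'' direction, I would show that any incorrect triple propagates an incorrect triple to some child: if $\max_i c_i\neq c$ or $\min_i c_i\neq c$ the algorithm simply rejects, and otherwise either some guessed $c_i$ is wrong (universal player picks that branch) or the chosen $s$ fails to witness $c_{k+1}$, in which case the universal player either plays $s'=s$ or picks an $s'$ with $f_\theta[\rho s']>c_{k+1}$ (respectively $<c_{k+1}$), forcing the existential player into an incorrect $(\theta,\rho s',c')$.

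The main obstacle is the quantifier case, and specifically making the selector game match the combinatorial definition of $f_\psi$: the asymmetry in $\textbf{Eval}$, where on $s'\neq s$ the existential player is allowed to guess only $c'\le c_{k+1}$ (for $\E$) or $c'\ge c_{k+1}$ (for $\A$), must be justified as exactly capturing the $\sup$/$\inf$ over all consistent selectors. Once this bookkeeping is settled, the induction goes through, and together with $\pspace=\mathsf{APTIME}$ we obtain the claimed bound; applied to sentences $\psi$ and the empty type this also yields the decision procedure for $\as_{K,p}(\psi)$ referenced in \cref{0-1-law-finite}.
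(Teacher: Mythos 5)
Your proposal is correct and follows essentially the same route as the paper: present $\textbf{Eval}$ as an alternating polynomial-time game, then prove soundness and completeness by induction on $\psi$, with the soundness case having the existential player guess the true subformula values and the witnessing selector, and the completeness case having the universal player propagate an incorrect triple (either to a mis-guessed $c_i$, to $s'=s$ when $f_\theta[\rho s]\ne c_{k+1}$, or to an $s'$ whose value exceeds (respectively, falls below) $c_{k+1}$, making any permitted guess $c'$ incorrect). The only point worth stating explicitly, which you flag as ``the main obstacle'', is that the finiteness of $K$ and of the selector set guarantees the $\max$/$\min$ defining $f_{\Ene x_{k+1}\theta}[\rho]$ and $f_{\Ane x_{k+1}\theta}[\rho]$ is actually attained, so a witnessing $s$ always exists; once noted, the asymmetric constraint $c'\le c_{k+1}$ (resp.\ $c'\ge c_{k+1}$) is exactly the statement that the guessed $s$ achieves the extremum, and the induction closes as you describe.
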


If we are only interested in the case where $\psi$ is a sentence, we can 
work over the min-max semiring $E$ and thus determine in $\pspace$ whether $f_\psi$ is
0,1, or $e$. On the other side,  is has been proved by Grandjean \cite{Grandjean83}
that, in classical Boolean semantics, the problem whether a given first-order sentence is almost surely true or 
almost surely false is \pspace-complete.

\begin{corollary} For any finite lattice semiring $K$, verifying the almost sure valuation
of first-order sentences in $K$ is  a $\pspace$-complete problem. 
\end{corollary}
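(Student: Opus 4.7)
The upper bound is essentially already in hand: by \cref{0-1-law-finite}, the almost sure valuation of a sentence $\psi \in \FO(\tau)$ is the evaluation of the associated polynomial $f_\psi$, which for a sentence is always $0$, $1$, or $e$; correspondingly the almost sure valuation in $K$ is $0$, $1$, or $\epsilon_K$. So my plan is to invoke the previous theorem and run $\textbf{Eval}(\psi, \emptyset, c)$ over the three-element min-max semiring $E$ for each of the three possible values $c \in \{0, 1, e\}$ (with $\emptyset$ the empty atomic type since $\psi$ has no free variables). Each call is in $\pspace$, so membership in $\pspace$ follows, uniformly in $\psi$, for every finite lattice semiring $K$ (the evaluation of $f_\psi$ in $K$ itself is a trivial lookup $0 \mapsto 0$, $1 \mapsto 1$, $e \mapsto \epsilon_K$).

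For the lower bound, I would reduce from the classical problem of deciding almost sure truth of a first-order sentence, which Grandjean~\cite{Grandjean83} proved to be $\pspace$-complete. The key observation is that the polynomial $f_\psi$ does not depend on the semiring: the same three cases $f_\psi \in \{0,1,e\}$ govern the almost sure valuation in every finite lattice semiring, including the Boolean semiring $\Bool$. Since in $\Bool$ we have $\epsilon_\Bool = 1$, a sentence is almost surely true in the classical sense precisely when $f_\psi \in \{1, e\}$, equivalently when its almost sure valuation in $K$ is different from $0$. Hence any algorithm computing the almost sure valuation in $K$ immediately decides classical almost sure truth (by checking whether the output equals $0$), giving $\pspace$-hardness.

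The only point requiring care is that the algorithm $\textbf{Eval}$ in \cref{figAlgorithm} is formulated for min-max semirings, whereas the statement allows arbitrary finite lattice semirings $K$. This is harmless for the upper bound because for sentences we only need to distinguish the three values of $f_\psi$, and this is done entirely inside $E$ (a min-max semiring); the surrounding lattice structure of $K$ never enters the computation beyond reading off $\epsilon_K$. The main conceptual obstacle -- correctness of the alternating procedure for formulae with free variables -- is already discharged by the theorem preceding this corollary; here I only need its specialisation to sentences. No further extension-property argument is needed since the polynomial-based description of sentences makes the computation independent of any particular random $K$-interpretation.
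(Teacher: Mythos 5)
Your proposal is correct and follows essentially the same route as the paper: for the upper bound, run the alternating procedure over the three-element min-max semiring $E$ to read off $f_\psi\in\{0,1,e\}$ and hence the almost sure valuation in $K$; for the lower bound, observe that $f_\psi = 0$ iff $\psi$ is classically almost surely false and invoke Grandjean's $\pspace$-completeness result. The paper states this argument just as tersely as you do, including the point that working over $E$ sidesteps the issue that the algorithm in the figure is phrased for min-max rather than arbitrary lattice semirings.
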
 

Grandjean's result readily implies that, for any semiring $K$, deciding whether or not 
the almost sure $K$-valuation of a first-order sentence is $0$, is  $\pspace$-complete as well.
However, it might still be the case that if it is known that $\psi$ is almost surely true
in the Boolean sense, then the problem whether its almost sure valuation in a finite lattice semiring is
1 or $\epsilon_K$ could be solved more efficiently. However, this is not the case.

\begin{theorem} The problem to decide whether a given almost surely true first order sentence
evaluates in lattice semirings with at least three elements almost surely to 1, or to $\epsilon_K$,
is \pspace-complete.
\end{theorem}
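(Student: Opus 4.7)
The \pspace\ upper bound is immediate. Under the promise that $\psi$ is almost surely true, \cref{0-1-law-finite} guarantees $\as_{K,p}(\psi)\in\{1,\eps_K\}$, and by the preceding corollary this valuation is computable in \pspace; it remains only to check which of the two values is produced.

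For \pspace-hardness the plan is to reduce from the Grandjean-based \pspace-complete problem of deciding whether a given first-order sentence $\varphi$ is almost surely Boolean-true, i.e.\ whether $f_\varphi \neq 0$. Given $\varphi$, I would construct in polynomial time a sentence $\psi$ that (i)~is always almost surely true, and (ii)~has almost sure $K$-valuation equal to $1$ iff $\varphi$ is almost surely Boolean-true. The outer shape would be $\psi := \chi_\varphi \lor \sigma_\eps$, where $\sigma_\eps := \Ane x(Rx \lor \neg Rx)$ is the fixed sentence with polynomial $f_{\sigma_\eps}=e$ exhibited in \cref{exPolynomials}. The disjunct $\sigma_\eps$ secures the promise (since it evaluates to $\eps_K\neq 0$ in every lattice semiring with at least three elements), and because disjunction is interpreted by $\sqcup$ in a lattice semiring we have $f_\psi = f_{\chi_\varphi} \sqcup e$. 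It therefore suffices to design $\chi_\varphi$ with $f_{\chi_\varphi}=1$ exactly when $\varphi$ is almost surely Boolean-true, and $f_{\chi_\varphi}\in\{0,e\}$ otherwise.

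The naive attempt $\chi_\varphi := \varphi$ fails: $\sigma_\eps$ itself is almost surely Boolean-true but has $f_{\sigma_\eps}=e$, so the outer disjunction would yield $f_\psi = e$ instead of the desired $1$. The plan is therefore to adapt the \qbf-to-first-order reduction underlying Grandjean's \pspace-hardness proof so that the output sentence $\chi_\varphi$ additionally satisfies $f_{\chi_\varphi}\in\{0,1\}$. \Cref{def-polynomial} points the way: existential quantifiers $\Ene$ take a sum (i.e.\ $\sqcup$) over selectors ranging in $\{0,1\}$ and therefore preserve $\{0,1\}$-valuedness of their bodies; the only source of spurious $e$-values is $\Ane$, whose selectors range in $\{0,e\}$. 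I would therefore attach to each $\Ane y\,\theta(\tx,y)$ an auxiliary existential witness layer over fresh atoms so that the body of every universal quantifier becomes a formula whose polynomial lies in $\{0,1\}$ under every selector assignment; since multiplying values in $\{0,1\}$ stays in $\{0,1\}$, the invariant then propagates through the full quantifier alternation of the encoded \qbf.

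The principal obstacle is verifying this invariant globally. Each witness layer must be strong enough to absorb the $e$ contributions of $\Ane$-selectors yet weak enough not to destroy the correctness of the \qbf-encoding, and the argument has to push an inductive claim through alternating blocks of $\Ene$ and $\Ane$. Once the invariant $f_{\chi_\varphi}\in\{0,1\}$ is established together with $f_{\chi_\varphi}=1$ iff the encoded \qbf\ is true (iff $\varphi$ is almost surely Boolean-true), the identity $f_\psi = f_{\chi_\varphi}\sqcup e$ delivers precisely the $1$-versus-$\eps_K$ split required by the theorem, completing the hardness reduction.
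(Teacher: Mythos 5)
Your upper bound and the outer shape of the reduction are right, and you correctly identified the crux: the naive disjunct fails because an almost surely Boolean-true $\varphi$ may still have $f_\varphi = e$, so you need a companion sentence $\chi_\varphi$ whose polynomial lands in $\{0,1\}$. But your plan for producing such a $\chi_\varphi$ --- grafting auxiliary existential witness layers onto the $\Ane$-blocks of a \qbf-encoding --- is never actually carried out, and you yourself flag it as the ``principal obstacle''. As written there is a genuine gap: you would need a concrete gadget that simultaneously (i) forces the body of every $\Ane$-block to have polynomial in $\{0,1\}$ under every $\{0,e\}$-selector, and (ii) leaves the \qbf\ semantics intact, and you give no argument that such a gadget exists. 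Designing it is not a routine matter, because a disjunct with polynomial $1$ would trivialise the surrounding $\Ane$.

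The paper sidesteps exactly this difficulty by choosing a different \pspace-complete source problem: model checking of $\FO(\emptyset)$-sentences on the fixed two-element structure $\{0,1\}$. Given such a $\psi$, it relativises the quantifiers to $\{0,1\}$, producing $\psi^* = \E 0\E 1(0\neq 1\land \psi')$, whose subformulae use \emph{only} equalities and inequalities. By \cref{def-polynomial}, (in)equalities have polynomials in $\{0,1\}$, and sums, products, and excluding quantifiers preserve membership in $\{0,1\}$ when no relational indeterminates ever appear; hence $f_{\psi^*}\in\{0,1\}$ holds automatically, with no gadget engineering at all. The final reduction $\psi\mapsto \psi^*\lor\A x(Px\lor\neg Px)$ is then exactly your intended disjunction, and the $1$-versus-$\eps_K$ split follows immediately. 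If you want to complete your own route, the honest fix is to abandon the modified \qbf\ construction and instead reduce from this two-element model-checking problem, which is the move your argument is missing.
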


\begin{proof} It remains to show \pspace-hardness. For any fixed finite structure $\AA$
with at least two elements, the problem of evaluating a given first-order sentence on $\AA$ is
\pspace-complete. In particular this holds if $\AA$ is just a two-element set without any relations,
i.e.\  $\AA=\{0,1\}$. Given a sentence $\psi\in\FO(\emptyset)$, we consider
$\psi^* \coloneqq \E 0\E 1(0\neq 1\land \psi')$ where $\psi'$ is obtained by
relativising all quantifiers to $\{0,1\}$, i.e.\ by replacing subformulae $\E x\phi$ by $\E x((x=0\lor x=1)\land \phi)$
and  $\A x\phi$ by $\A x((x=0\lor x=1)\ra \phi)$. Clearly if $\{0,1\}\models \psi$ then $\psi^*$
almost surely evaluates to 1 (on any semiring) and if  $\{0,1\}\not\models \psi$ then $\psi^*$
almost surely evaluates to 0.

Let now $P$ be a unary relation symbol and consider the reduction that maps 
any sentence $\psi\in\FO(\emptyset)$ to $\psi^*\lor \A x (Px\lor\neg Px) \in\FO(\{P\})$.
Notice that such a sentence is almost surely true in the Boolean sense, and that
the almost sure valuation of $ \A x (Px\lor\neg Px)$ is $\epsilon$ in any finite lattice semiring.
Hence  the almost sure valuation of $\psi^*\lor \A x (Px\lor\neg Px)$ is
1 if $\{0,1\}\models\psi$, and  $\epsilon$, otherwise.
This proves that deciding whether an almost surely true sentence evaluates to 1 or to $\epsilon$
in a lattice semiring with at least three elements is \pspace-hard.
\end{proof}

\section{The 0-1 law for infinite lattice semirings}\label{sec:infinite}

We now move to infinite lattice semirings $(K,\join,\meet,0,1)$, in particular to semirings defined over the real numbers.
In the case that $K$ is countable, we can define probability measures on $K$-interpretations as in \cref{sect:random}.
In the general case, we assume that we a have a 
probability space $(K^+,\mathcal F,p)$ whose underlying $\sigma$-algebra $\mathcal F$ contains
all intervals $[a,b] = \{ x \in K \mid a \le x \le b \}$ for $a,b \in K$ (notice that $[a,b]$ is a sublattice).
We thus get probabilities $p[x \in J]$ for all closed, open, and half-open intervals $J\subseteq K^+$.
We further assume that $p[x = 1] > 0$, i.e.\ we have a
positive probability that a randomly chosen value coincides precisely with the maximal semiring value.%
\footnote{This is a natural assumption in our context of random semiring interpretations, but it is not really
essential; large values can instead be treated in an analogous way as
we do for small positive ones.}

The measures $\mu_{n,p}$ for random $K$-interpretations with universe $[n]=\{0,\dots,n-1\}$ are induced by $p$ as in the finite case:
Again, we consider the probabilistic process which, for
each instantiated atom $R\ta$ over $[n]$ first makes a random choice whether $R\ta$ or $\neg R\ta$ is
true, each with probability $1/2$ (this is an arbitrary choice, any fixed probability would work),
and then assigns to the true literal a positive semiring value according to $p$,
so that we have a probability that $\pi(R\ta)\in J$ for every interval\footnote{More precisely: if $0 \in J$, then $\mu_{n,p}[\pi(R\ta) \in J] = \frac 1 2 + \frac 1 2 p[x \in J \without 0]$, otherwise $\mu_{n,p}[\pi(R\ta) \in J] = \frac 1 2 p[x \in J \without 0]$.} $J\subseteq K$.
We consider three cases concerning the probabilities of small positive semiring values.

\begin{Definition}\label{def:smallpositivevalues}
We say that the probability measure $p$ is \emph{$\epsilon$-bounded} on small semiring values, for $\epsilon \in K$, if one of the following cases applies.
\smallskip
\begin{enumerate}
\item \emph{$p$ is weakly $\eps$-bounded} if $p[x = \eps] > 0$ and $p[0<x\leq \delta]=0$ for all $\delta\in K^+$ with $\epsilon\not\leq \delta$.\\
In particular, the smallest possible positive value of a literal is $\epsilon$.

\smallskip
\item \emph{$p$ is strictly $\eps$-bounded} if $p$ is not weakly $\eps'$-bounded (for any $\eps'$) and further $p[0 < x \le \eps] = 0$ and $p[0 < x \leq \delta] > 0$ for all $\delta > \eps$.\\
That is, $p$ only admits positive values greater than $\epsilon$.
We include the case $\eps = 0$.
\end{enumerate}
\smallskip
\noindent To avoid going through case distinctions in the proofs to follow, we say that
a semiring value $\delta\in K^+$ is \emph{$p$-relevant}, if 
either $\delta>\epsilon$, or if $p[x=\epsilon]>0$ and $\delta=\epsilon$.
Moreover, we write $\eps \ll \delta$ if there is a $\gamma \in K$ with $\eps < \gamma < \delta$.
\end{Definition}

In the remainder of this section we consider infinite lattice semirings  $(K,\join,\meet,0,1)$
together with a probability measure $p$ on $K^+$ assigning probabilities to all intervals, such that $p$ is $\epsilon$-bounded with $\eps \ll 1$.
We remark that we make the assumption $\eps \ll 1$ only to simplify the presentation, but this is not an actual restriction (one can easily verify that \cref{infinite-0-1,infinite-asv} also holds in the few special cases with $\eps \centernot\ll 1$).

% Proof sketch if NOT eps << 1
%
% (case 1) If $\eps = 1$, then we must have $p[x=1]=1$ (and $p$ is weakly $1$-bounded).
% Then random $K$-interpretations only use the values $0$ and $1$ and hence behave as in Boolean semantics.
% We then have $\as_{K,p}(\psi) = 0$ (if $f_\psi=0$) or $\as_{K,p}(\psi) = 1$ (if $f_\psi=e,1$).
%
% (case 2) If $\eps < 1$, then $\eps$ is a (not necessarily unique) discrete predecessor of $1$ in the partial order.
% Notice that $p$ cannot be strictly $\eps$-bounded, as it would then also be weakly $1$-bounded (which we exclude in the definition).
% So $p$ is weakly $\eps$-bounded (and all literals in random $K$-interpretations are mapped to the values $0,\eps,1$).
% The proofs of \cref{infinite-0-1,infinite-asv} go through, as we do not need the assumption $\eps \ll 1$ in this case (we only need a $p$-relevant $\delta < 1$, for which we can simply set $\delta=\eps$).
% We have $\as_{K,p}(\psi) \in \{0,1,\eps\}$ and all three values are also assumed.

\begin{Definition}
Let $\Phi \subseteq \FO(\tau)$.
We say that a \emph{0-1 law} holds for $\Phi$, $K$, and $p$
if for each sentence $\psi\in\Phi$ and each interval $J\subseteq K$
the sequence $(\mu_{n,p}[\pi\ext{ \psi}\in J])_{n<\omega}$ converges to either 0 or 1, as $n$ goes to infinity.

We further say that $j$ is the \emph{almost sure valuation} of $\psi$ (for $K$ and $p$), denoted $\as_{K,p}(\psi)=j$,
if there is a decreasing sequence $(J_i)_{i<\omega}$ of intervals $J_i \subseteq K$ with $\bigcap_{i<\omega} J_i =\{j\}$ such that  
$\lim_{n\to\infty} \mu_{n,p}[\pi\ext{\psi}\in J_i ]= 1$ for all $i<\omega$.
\end{Definition} 

We also have to define the extension properties a bit differently, as we cannot realise all possible extensions over an infinite semiring in a finite structure.

\begin{Definition}
Given an atomic $m$-type $\rho$, we say that $\rho^+\in\extend(\rho)$ is a \emph{maximal extension}
of $\rho$ if  $\rho^+(\beta)\in\{0,1\}$, for every literal $\beta\in\Lit_{m+1}(\tau)\setminus\Lit_m(\tau)$. 
Further we say that  $\rho^-\in\extend(\rho)$ is a \emph{$\delta$-small extension} of $\rho$, 
if $\rho^-(\beta)\leq \delta$ for every $\beta\in\Lit_{m+1}(\tau)\setminus\Lit_m(\tau)$.
\end{Definition}

We remark that, by definition of atomic types, a $\delta$-small extension $\rho^-$ maps out of each pair $\alpha,\neg\alpha$ of complementary literals that contain the variable $x_{m+1}$ precisely one to $0$ and the other one into the interval $(0,\delta]$.

\begin{Definition} A semiring interpretation $\pi\from\Lit_A(\tau)\to K$  
has the \emph{$(k,\delta)$-extension property}, where $\delta \in K^+$, if for every $m< k$, every tuple 
$\ta\in A^m$, and every maximal extension $\rho^+\in\extend(\rho^\pi_{\ta})$,
there exists
\begin{enumerate}
\item an element $b\in A \setminus \ta$ such that $\rho^\pi_{\ta,b}=\rho^+$, and
\item an element $c\in A \setminus \ta$ such that 
$\rho^\pi_{\ta,c}\leq\rho^+$ and $\rho^\pi_{\ta,c}$ is a $\delta$-small extension of $\rho^\pi_{\ta}$.
\end{enumerate}
\end{Definition} 

In other words, if $\pi$ has the $(k,\delta)$-extension property then
every realisation of an atomic $m$-type in $\pi$ can be extended to realisations
of all its maximal extensions, but also to realisations of $\delta$-small extensions (with the same underlying Boolean types as the maximal extensions).
 
\begin{proposition} Let $K$ be an infinite lattice semiring with an $\epsilon$-bounded probability measure $p$.
For every fixed $k$, every finite relational vocabulary $\tau$, and every $p$-relevant $\delta$,
\[\lim_{n\to\infty} \mu_{n,p}[ \pi\text{ has the $(k,\delta)$-extension property }] = 1,\]
and the convergence to this limit is exponentially fast.
\end{proposition}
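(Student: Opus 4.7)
The plan is to adapt the proof of \cref{extension-property} to the two-part extension property. Fix $m < k$, a tuple $\ta \in [n]^m$ of pairwise distinct elements, and a maximal extension $\rho^+$ of $\rho^\pi_{\ta}$. Since a maximal extension only takes values in $\{0,1\}$ on the finitely many atoms of $\Lit_{m+1}(\tau) \setminus \Lit_m(\tau)$ involving $x_{m+1}$, there are boundedly many such $\rho^+$ per tuple, with a bound depending only on $\tau$ and $k$.

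For a single candidate $b \in [n] \setminus \ta$, the probability that $\rho^\pi_{\ta,b} = \rho^+$ exactly is a positive constant $g_1(\rho^+)$: for each atom $\alpha$ with variable $x_{m+1}$, the event that $\pi$ selects the correct polarity (probability $\tfrac12$) and assigns the value $1$ to the non-zero literal (probability $p[x = 1] > 0$ by assumption) contributes a uniform positive factor. Similarly, the probability that $b$ witnesses a $\delta$-small extension with the same Boolean type as $\rho^+$ is a positive constant $g_\delta(\rho^+)$, whose crucial factor is $p[0 < x \le \delta]$. This quantity is positive precisely because $\delta$ is $p$-relevant: in the weakly $\epsilon$-bounded case, $p[x = \epsilon] > 0$ and $\epsilon \le \delta$ give $p[0 < x \le \delta] \ge p[x = \epsilon] > 0$; in the strictly $\epsilon$-bounded case, $p$-relevance forces $\delta > \epsilon$ and the definition of strict boundedness directly yields $p[0 < x \le \delta] > 0$.

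Independence of the random choices across candidate elements $b \in [n] \setminus \ta$ then gives a failure probability of at most $(1 - g_1)^{n-m}$ for the exact realisation and $(1 - g_\delta)^{n-m}$ for the $\delta$-small realisation. A union bound over the at most $n^m$ tuples $\ta$, the boundedly many maximal extensions $\rho^+$ per tuple, and the finitely many values $m < k$, shows that $\pi$ fails the $(k,\delta)$-extension property with probability at most a polynomial in $n$ times $(1 - \min(g_1,g_\delta))^{n - k + 1}$, which converges to $0$ exponentially fast.

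The main subtlety — and essentially the only departure from the finite-semiring argument — is verifying the positivity of $p[0 < x \le \delta]$ under each case of $\epsilon$-boundedness; once this case distinction is settled, the rest is a direct generalisation of \cref{extension-property}.
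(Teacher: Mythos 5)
Your proof is correct and takes essentially the same approach as the paper: bound the single-candidate failure probability via $p[x=1]$ (for maximal extensions) and $p[0<x\le\delta]$ (for $\delta$-small ones), then apply a union bound over tuples, types, and $m<k$. The only difference is that you explicitly unpack the case distinction behind the positivity of $p[0<x\le\delta]$, which the paper absorbs into its definition of $p$-relevance precisely so that this case analysis need not be repeated in the proof.
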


\begin{proof}
For a given probability measure $\mu_{n,p}$ we first calculate a bound for the probability 
that a given realisation $\ta$ of an atomic $m$-type $\rho$ (with $m<k$)
cannot be extended to a realisation $\ta, b$
of a given \emph{maximal extension} $\rho^+$ of $\rho$. This is  analogous to
the argument in \cref{extension-property}.
For any pair $\alpha,\neg\alpha$ of complementary literals in
$\Lit_{m+1}(\tau)\setminus\Lit_m(\tau)$, the probability that
randomly chosen values according to $p$ for $\alpha$ and $\neg\alpha$
are $1$ and $0$, as prescribed by $\rho^+$, is $p[x=1]/2$. There is a fixed number $q$ of 
pairs of such literals, so the probability that all chosen values coincide with
those required by $\rho^+$ is a fixed number $\gamma\coloneqq (p([x=1]/2)^q$.
It follows that
\[  
    \mu_{n,p}[
    \text{ some realisation of $\rho$ does not extend to a realisation of $\rho^+$}
    ] \leq n^m(1-\gamma)^{n-m}
\]
which for growing $n$ converges to 0 exponentially fast. 

Let us now consider extensions with small truth values. Fix $\rho$ and some
maximal extension $\rho^+\in\extend(\rho)$.
For each $p$-relevant $\delta$ there exists a number $g(\delta)>0$ such that $p[ 0 < x \leq \delta] = g(\delta)$.
Hence the probability that values for complementary literals $\alpha$ and $\neg\alpha$
with the variable $x_{m+1}$, chosen according to $p$, define a $\delta$-small extension $\rho^-\leq\rho^+$
is $\gamma\coloneqq (g(\delta)/2)^q$, and with precisely the same calculation as above, we conclude
that
\[
    \mu_{n,p}[
    \text{ some realisation of $\rho$ does not extend to a realisation of some $\delta$-small
 $\rho^-\leq\rho^+$}
    ]
\]
converges to 0 exponentially fast. 
\end{proof}

We again use the polynomials $f_\psi$ of \cref{def-polynomial} to represent formulae $\psi(\tx) \in \FO^k$.
However, the evaluation of these polynomials must be more flexible, taking into account different parameters for
small positive values. 
Specifically, given $\delta>0$ and an atomic  $k$-type $\rho$, we evaluate
a polynomial $f\in E[\XX k]$ 
to a semiring value $f^\delta[\rho]\in K$, via the homomorphism 
$h^\delta_\rho\from E[\XX k]\to K$
induced by $h^\delta_\rho(e)\coloneqq \delta$ and $h^\delta_\rho(X_\beta)\coloneqq \rho(\beta)$,
for literals $\beta \in \Lit_k(\tau)$.
We can now formulate an analogue of \cref{ext-polynomials}, requiring only a mild assumption on the lattice structure:

\begin{Definition}
A lattice semiring $(K,\join,\meet,0,1)$ is called \emph{0-1-irreducible} if
$a \meet b = 0$ implies $a=0$ or $b=0$ (no divisors of $0$) and
$a \join b = 1$ implies $a=1$ or $b=1$.
\end{Definition}

Notice that both properties are always satisfied in min-max semirings (as the natural order is total).

\begin{theorem}\label{delta-extension}
Let $(K, \join, \meet, 0, 1)$ be a (possibly infinite) lattice semiring without divisors of $0$.
Let $\delta > 0$ and let $\pi \from \Lit_A(\tau) \to K$ be a $K$-interpretation with the $(k,\delta)$-extension property.
Then, for every formula $\psi(x_1,\dots,x_i)\in\FO^k(\tau)$ and every tuple $\ta\in A^i$, either
\begin{itemize}
\item $f^\delta_\psi[\rho^\pi_{\ta}] = \fmla \psi \ta = 0$, or
\item $f^\delta_\psi[\rho^\pi_{\ta}],\fmla \psi \ta \neq 0$ and $f^\delta_\psi[\rho^\pi_{\ta}] \le \fmla \psi \ta \join \delta \le f^\delta_\psi[\rho^\pi_{\ta}] \join \delta$.
\end{itemize}
\end{theorem}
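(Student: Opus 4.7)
The natural approach is induction on $\psi$, mirroring the proof of \cref{ext-polynomials} for finite lattice semirings but tracking a $\delta$-slack throughout and exploiting at each step that the operations on the polynomial side range over only finitely many extensions. The base cases (literals and equalities) give $f^\delta_\psi[\rho^\pi_\ta] = \fml{\psi(\ta)}$ exactly from the definitions. For Boolean connectives one uses $+ = \join$, $\cdot = \meet$ and the fact that a lattice semiring is a distributive lattice, so that finite distributivity together with the IH propagates the chain $f^\delta_\psi[\rho^\pi_\ta] \le \fml{\psi(\ta)} \join \delta \le f^\delta_\psi[\rho^\pi_\ta] \join \delta$; the no-divisors-of-$0$ hypothesis preserves the zero/nonzero dichotomy through the conjunction.

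For $\psi = \Ene y\,\phi$, we have $f^\delta_\psi[\rho^\pi_\ta] = \Sup_{\rho^+} f^\delta_\phi[\rho^+]$ as a finite join over the maximal extensions of $\rho^\pi_\ta$ (those with values in $\{0,1\}$ on $\Lit_{i+1}\setminus\Lit_i$). Part~(1) of the $(k,\delta)$-extension property realises each such $\rho^+$ exactly as some $\rho^\pi_{\ta,b}$, so the IH yields $f^\delta_\phi[\rho^+] \le \fml{\phi(\ta,b)}\join\delta \le \fml{\psi(\ta)}\join\delta$, whence $f^\delta_\psi[\rho^\pi_\ta] \le \fml{\psi(\ta)}\join\delta$. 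Conversely, every $\rho^\pi_{\ta,b}$ is pointwise dominated by the maximal extension of the same Boolean type, so monotonicity gives $f^\delta_\phi[\rho^\pi_{\ta,b}] \le f^\delta_\psi[\rho^\pi_\ta]$; the IH then gives $\fml{\phi(\ta,b)} \le f^\delta_\psi[\rho^\pi_\ta]\join\delta$, and taking the supremum over $b$ concludes.

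For $\psi = \Ane y\,\phi$, similarly $f^\delta_\psi[\rho^\pi_\ta] = \Inf_{\rho^-} f^\delta_\phi[\rho^-]$ is a finite meet over the $\delta$-canonical extensions $\rho^-$ (values in $\{0,\delta\}$ on $\Lit_{i+1}\setminus\Lit_i$). The bound $\fml{\psi(\ta)} \le f^\delta_\psi[\rho^\pi_\ta]\join\delta$ uses part~(2): for each $\rho^-$ there exists $c$ with $\rho^\pi_{\ta,c} \le \rho^-$ and $\rho^\pi_{\ta,c}$ $\delta$-small, whence IH and monotonicity give $\fml{\phi(\ta,c)} \le f^\delta_\phi[\rho^-]\join\delta$, and therefore $\fml{\psi(\ta)} \le f^\delta_\phi[\rho^-]\join\delta$ for every $\rho^-$; finite distributivity over the meet concludes. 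The reverse bound $f^\delta_\psi[\rho^\pi_\ta] \le \fml{\psi(\ta)}\join\delta$ is the subtle direction, and here I would decompose $f_\phi = g + h$, where $g$ collects the monomials using at least one indeterminate from $\YY i$ and $h$ the remaining monomials, depending only on $\XX i$. Every monomial of $g$ evaluated at a $\delta$-canonical $\rho^-$ contains a factor in $\{0,\delta\}$ and is hence bounded by $\delta$, while $h[\rho^\pi_\ta]$ is independent of the chosen extension; finite distributivity yields $f^\delta_\psi[\rho^\pi_\ta] \le h[\rho^\pi_\ta]\join\delta$. On the other hand, $h[\rho^\pi_\ta] \le f^\delta_\phi[\rho^\pi_{\ta,b}]$ for every $b$ (as a summand of $f_\phi$), and IH gives $h[\rho^\pi_\ta] \le \fml{\phi(\ta,b)}\join\delta$; a second application of finite distributivity (using finiteness of $A$, the regime relevant for the 0-1 law) yields $h[\rho^\pi_\ta] \le \fml{\psi(\ta)}\join\delta$, closing the chain.

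The main obstacle is this last direction of the universal case. In the finite-semiring analogue \cref{ext-polynomials}, the small extensions could be realised \emph{exactly} at $\eps_K$, yielding equality; here we only have realisations with values in $(0,\delta]$, so the equality becomes equality up to $\delta$. The decomposition $f_\phi = g + h$ is what recovers this, isolating the part using indeterminates from $\YY i$ (uniformly bounded by $\delta$ on $\delta$-canonical extensions) from the part independent of $\YY i$ (constant in the extension and hence reducible via the IH to a finite-lattice calculation). The zero-case biconditional at every inductive step falls out of the same analysis, via no-divisors-of-$0$, monotonicity, and the two parts of the extension property.
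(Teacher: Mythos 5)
Your proof follows the same inductive strategy as the paper's, using the identical decomposition of $f_\phi$ into the monomials that use indeterminates from $\YY i$ and those that do not for the hard direction of $\Ane$-quantification, and invoking the two parts of the $(k,\delta)$-extension property in exactly the corresponding places. The only point you gloss over is the zero/nonzero dichotomy for $\Ane$, where showing $\fml{\psi(\ta)}=0 \Rightarrow f^\delta_\psi[\rho^\pi_\ta]=0$ does not follow from monotonicity alone (the witnessing $\rho^\pi_{\ta,b}$ need not dominate any $\delta$-canonical $\rho^\pi_\ta s$); the paper fills this by choosing a selector $s$ matching the vanishing pattern of $\rho^\pi_{\ta,b}$ and arguing monomial-wise, and while your invocation of no-divisors-of-$0$ does in fact yield the needed Boolean-type invariance of the zero set of polynomials, that little lemma deserves to be stated rather than folded into a closing remark.
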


\begin{proof}
The proof is by induction over $\psi$ along the lines of the proof of \cref{ext-polynomials}.
For simplicity, we drop the annotation $\delta$ in $f^\delta_\psi$ and refer to the two cases in the theorem as $(0)$ and $(\delta)$.
For literals, we always have $\polyrho \psi \ta = \fmla \psi \ta$ and either $(0)$ or $(\delta)$ holds.

\medskip\noindent
For $\psi = \phi \lor \theta$, we have $\pol \psi = \pol \phi + \pol \theta$ and $\fmla \psi \ta = \fmla \phi \ta \join \fmla \theta \ta$.
Recall that we evaluate the addition in $f_\psi$ by the semiring operation $\join$.
We distinguish the cases whether $(0)$ or $(\delta)$ applies to $\phi$ and $\theta$.
If both satisfy $(0)$, then $(0)$ also holds for $\psi$.
If $(\delta)$ holds for $\phi$ and $(0)$ for $\theta$ (or vice versa), then $(\delta)$ also holds for $\psi$.
If $(\delta)$ holds for both, then it also holds for $\psi$, since clearly $\polyrho\psi\ta,\fmla\psi\ta \neq 0$ and
\[
  \polyrho \phi \ta \join \polyrho \theta \ta \le (\fmla \phi \ta \join \delta) \join (\fmla \theta \ta \join \delta) = \fmla \psi \ta \join \delta
\]
and similarly for the second inequality.

\medskip\noindent
For $\psi = \phi \land \theta$, we have $\pol \psi = \pol \phi \cdot \pol \theta$ and $\fmla \psi \ta = \fmla \phi \ta \meet \fmla \theta \ta$.
If $(0)$ holds for $\phi$ or $\theta$, then $(0)$ also holds for $\psi$.
If $(\delta)$ holds for both $\phi$ and $\theta$, first observe that $\polyrho \phi \ta \neq 0$ and $\polyrho \theta \ta \neq 0$ imply $\polyrho \psi \ta \neq 0$ (no divisors of $0$), and analogously also $\fmla\psi\ta \neq 0$.
Then $(\delta)$ holds for $\psi$, since by induction,
\[
  \polyrho\phi\ta \meet \polyrho\theta\ta \le
  (\fmla \phi \ta \join \delta) \meet (\fmla \theta \ta \join \delta) \le
  (\polyrho\phi\ta \join \delta) \meet (\polyrho\theta\ta \join \delta),
\]
and by distributivity\footnote{While semiring distributivity only implies $a \meet (b \join c) = (a \meet b) \join (a \meet c)$, in lattice settings this also implies the dual law $a \join (b \meet c) = (a \join b) \meet (a \join c)$ which we use here.},
\[
  (\fmla \phi \ta \join \delta) \meet (\fmla \theta \ta \join \delta) =
  (\fmla \phi \ta \meet \fmla \theta \ta) \join (\delta \join \delta) =
  \fmla \psi \ta \join \delta,
\]
and similarly for $\polyrho \psi \ta \join \delta$.

\medskip\noindent
Let now $\psi(\tx)=\Ene y\,\phi(\tx,y)$. Recall that
\[
   \fmla\psi\ta = \Sup_{b\in A\setminus\ta} \fmla\theta\ta \quad\text{and}\quad \pol\psi(\XX i) \coloneqq \Sup_{s\in S} \pol\phi(\XX i, s(\YY i)),
\]
where $\YY i=\XX {i+1} \setminus \XX i$ and $S$ is the set of all consistent selector functions $s\from \YY i \to \{0,1\}$.

We first prove that $\polyrho\psi\ta = \Sup_{b\in  A\setminus\ta} \polyrho\phi{\ta,b}$.
Recall that each selector function $s$ induces the maximal extension $\rho_s$ with $\rho_s(\beta) = s(\beta)$ for the new literals $\beta$.
By the $(k,\delta)$-extension property, there is an element $b$ with $\rho_{\ta,b} = \rho_s$ and we then have $\poly\phi{\rho_\ta^\pi, s(\YY i)} = \polyrho\phi{\ta,b}$.
Hence $\polyrho\psi\ta \le \Sup_{b\in  A\setminus\ta} \polyrho\phi{\ta,b}$.
Conversely, let $b \in A \setminus \ta$ and consider the type $\rho^\pi_{\ta,b}$.
Let $\rho^+$ be the maximal extension induced by $\rho^\pi_{\ta,b}$ (i.e., with the same underlying Boolean type).
By the $(k,\delta)$-extension property, there is an element $b^+$ with $\rho^+ = \rho^\pi_{\ta,b^+}$.
Then $\rho^\pi_{\ta,b} \le \rho^\pi_{\ta,b^+}$ and hence $\polyrho\phi{\ta,b} \le \polyrho\phi{\ta,b^+}$ by monotonicity.
Setting $s(\beta) = \rho_b^+(\beta)$, we have $\polyrho\phi{\ta,b^+} = \poly\phi{\rho^\pi_\ta, s(\YY i)}$ and hence
$\Sup_{b\in  A\setminus\ta} \polyrho\phi{\ta,b} \le \polyrho\psi\ta$.

To prove that either $(0)$ or $(\delta)$ holds for $\psi$, we proceed by case distinction for each $b$.
If $(0)$ holds for all $\phi(\ta,b)$, then $\fmla\psi\ta = 0$ and also $\Sup_{b\in A\setminus\ta} \polyrho\phi{\ta,b} = 0$, so $(0)$ holds for $\psi$.
Otherwise, there is at least one $b$ with $\polyrho\phi{\ta,b},\fmla\phi{\ta,b} \neq 0$, hence $\polyrho\psi\ta, \fmla\psi\ta \neq 0$ as well.
We ignore all $b$ for which $(0)$ holds, as they do not affect the supremum.
Then $(\delta)$ holds for $\psi$:
\begin{align*}
  \polyrho\psi\ta = \Sup_{b\in  A\setminus\ta} \polyrho\phi{\ta,b} &\le \Sup_{b\in  A\setminus\ta} (\fmla\phi{\ta,b} \join \delta) = \fmla\psi\ta \join \delta \\
  &\le \Sup_{b\in  A\setminus\ta} (\polyrho\phi{\ta,b} \join \delta) = \polyrho\psi\ta \join \delta.
\end{align*}

\medskip\noindent
Finally, let $\psi(\tx)=\Ane y\,\phi(\tx,y)$. Recall that
\[
   \fmla\psi\ta = \Inf_{b\in A\setminus\ta} \fmla\theta\ta \quad\text{and}\quad \pol\psi(\XX i) \coloneqq \Inf_{s\in S} \pol\phi(\XX i, s(\YY i)),
\]
where now we consider selector functions $s\from \YY i \to \{0,\delta\}$.

As for existential quantification, we first relate the selector functions $s$ to the elements $b \in A\setminus\ta$.
Since $\pi$ only guarantees $\delta$-small extensions, we relax the equality by $\delta$:
\[
  \Big(\Inf_{b\in A\setminus\ta} \polyrho\phi{\ta,b}\Big) \join \delta \;\ge\;
  \polyrho\psi\ta \;\ge\;
  \Inf_{b\in A\setminus\ta} \polyrho\phi{\ta,b}.
\]

The second inequality is easy: For each selector function $s$, consider the maximal extension $\rho_s^+$ of $\rho^\pi_\ta$ induced by $s$ (i.e., with the same underlying Boolean type).
By the $(k,\delta)$-extension property, there is an element $c$ such that $\rho^\pi_{\ta,c}$ is a $\delta$-small extension with $\rho^\pi_{\ta,c} \le \rho_s^+$.
Then also $\rho^\pi_{\ta,c}(\beta) \le s(X_\beta)$ for all new literals $\beta$ by definition of $s$ and $\delta$-small, hence $\poly\phi{\rho^\pi_\ta,s(\YY i)} \ge \polyrho\phi{\ta,c}$ by monotonicity and the inequality follows.

For the first inequality, we consider the monomials of $\pol\phi(\XX i, \YY i)$ and split the polynomial into $\pol\phi(\XX i, \YY i) = g(\XX i) + h(\XX i, \YY i)$, where $g$ contains precisely those monomials that contain no indeterminates in $\YY i$.
We clearly have $\polyrho\phi{\ta,b} \ge g[\ta]$.
Recall that the universe $A$ is finite, so we can apply distributivity and obtain:
\begin{align*}
  \Big(\Inf_{b\in A\setminus\ta} \polyrho\phi{\ta,b} \Big) \join \delta =
  \Inf_{b\in A\setminus\ta} ( \polyrho\phi{\ta,b} \join \delta ) \ge
  g[\ta] \join \delta \ge
  \Inf_{s \in S} ( g[\ta] \join h(\ta,s(\YY i)) ) =
  \polyrho\psi\ta.
\end{align*}
For the last inequality, we use the fact that $s(X_\beta) \in \{0,\delta\}$ for all $X_\beta \in \YY i$ and hence $m[\ta,s(\YY i)] \le \delta$ for all monomials $m$ of $h$ by construction.

To prove that $(0)$ or $(\delta)$ holds for $\psi$, we again proceed by case distinction for each $b$.
First assume that $(0)$ holds for some $\phi(\ta,b)$, so $\fmla\phi{\ta,b} = \polyrho\phi{\ta,b} = 0$.
Then also $\fmla\psi\ta = 0$ and it remains to prove $\polyrho\psi\ta = 0$.
We again consider the monomials of $\pol\phi(\XX i, \YY i)$ and split the polynomial into $\pol\phi(\XX i, \YY i) = g(\XX i) + h(\XX i, \YY i)$ as above.
By $\polyrho\phi{\ta,b} = 0$, we must have $g[\ta] = 0$ and $h[\ta,b] = 0$.
As there are no divisors of $0$, this means that every monomial in $h[\ta,b]$ must contain a literal $X_\beta \in \XX i \cup \YY i$ such that $\rho^\pi_{\ta,b}(\beta) = 0$.
Let $s$ be any selector function such that whenever $\rho^\pi_{\ta,b}(X_\beta) = 0$ for $X_\beta \in \YY i$, also $s(X_\beta) = 0$ (the other values can be chosen arbitrarily).
Observe that such a selector function exists in $S$ since $\rho^\pi_{\ta,b}$ is a type (i.e., consistent on opposing literals).
Then $h[\ta,b] = h[\ta,s(\YY i)] = 0$ by construction of $s$ and it follows that $\polyrho\psi\ta = 0$.

Lastly, assume that $(\delta)$ holds for all $b$.
Then $\polyrho\phi{\ta,b},\allowbreak \fmla\phi{\ta,b} \neq 0$ for all $b$ and thus $\polyrho\psi\ta,\allowbreak \fmla\psi\ta \neq 0$, since there are no divisors of $0$ (recall that the infimum is over a finite universe or set $S$).
Using the relaxed equality and distributivity, we obtain:
\begin{align*}
\polyrho\psi\ta \le
\Inf_{b\in A\setminus\ta} (\polyrho\phi{\ta,b} \join \delta) &\le
\Inf_{b\in A\setminus\ta} ((\fmla\phi{\ta,b} \join \delta) \join \delta) = \fmla\psi\ta \join \delta \\ &\le
\Inf_{b\in A\setminus\ta} ((\polyrho\phi{\ta,b} \join \delta) \join \delta) = \polyrho\psi{\ta,b} \join \delta. \qedhere
\end{align*}
\end{proof}

The above theorem essentially establishes a relaxed version of the equality
$f^\delta_\psi[\rho^\pi_{\ta}] = \fmla \psi \ta$ that holds in finite lattice semirings.
The reason is the ($k,\delta$)-extension property, which does not guarantee that the value $\delta$ is assumed by extensions, but only makes the weaker guarantee that some values in $(0,\delta]$ are assumed.
For sentences, the relaxed equality reduces to the following three cases.

\begin{corollary}\label{delta-extension-sentence}
Let $(K, \join, \meet, 0, 1)$ be a (possibly infinite) 0-1-irreducible lattice semiring.
Let $0 < \delta < 1$ and let $\pi \from \Lit_A(\tau) \to K$ be a $K$-interpretation with the $(k,\delta)$-extension property.
Then, for every sentence $\psi \in \FO(\tau)$,
\begin{itemize}
\item if $\pol\psi = 0$, then also $\fml\psi = 0$;
\item if $\pol\psi = 1$, then also $\fml\psi = 1$;
\item if $\pol\psi = e$, then $0 < \fml\psi \le \delta$.
\end{itemize}
\end{corollary}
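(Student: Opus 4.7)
The plan is to apply \cref{delta-extension} directly, specialised to the case where $\psi$ is a sentence, so the tuple $\ta$ is empty and the type $\rho^\pi_\ta$ is the trivial empty type. By the construction of $f_\psi$ for sentences (already noted in the proof of \cref{0-1-law-finite}), the polynomial $f_\psi$ is a constant in $\{0, 1, e\}$, and so $f_\psi^\delta[\rho^\pi_\emptyset] \in \{0, 1, \delta\}$ depending on which of these three cases applies. The proof then proceeds by case distinction.

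In the first case $f_\psi = 0$, we have $f_\psi^\delta[\rho^\pi_\emptyset] = 0$, so the alternative $(0)$ of \cref{delta-extension} must hold (as the alternative $(\delta)$ requires $f_\psi^\delta[\rho^\pi_\emptyset] \neq 0$), yielding $\fml\psi = 0$ immediately. In the third case $f_\psi = e$, we have $f_\psi^\delta[\rho^\pi_\emptyset] = \delta \neq 0$, hence alternative $(\delta)$ applies and gives $\fml\psi \neq 0$ together with $\delta \le \fml\psi \join \delta \le \delta \join \delta = \delta$. Then $\fml\psi \join \delta = \delta$ implies $\fml\psi \le \delta$, and combined with $\fml\psi \neq 0$ we obtain $0 < \fml\psi \le \delta$ as required.

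The only subtle case is $f_\psi = 1$, which is where I expect the mild assumption of 0-1-irreducibility to come in. Here $f_\psi^\delta[\rho^\pi_\emptyset] = 1$, so alternative $(\delta)$ again applies, giving $\fml\psi \neq 0$ and $1 \le \fml\psi \join \delta$, i.e.\ $\fml\psi \join \delta = 1$. Since $\delta < 1$ by the standing assumption $\eps \ll 1$ and since the semiring is 0-1-irreducible, the equality $\fml\psi \join \delta = 1$ forces $\fml\psi = 1$. This is precisely the place where 0-1-irreducibility is needed; without it, one could only conclude that $\fml\psi$ is some value whose join with $\delta$ is $1$, which in a general distributive lattice could be strictly smaller than $1$. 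All other steps are immediate consequences of \cref{delta-extension}, so the proof amounts to little more than unpacking the two cases of that theorem for the three possible constant values of $f_\psi$.
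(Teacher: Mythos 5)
Your proof is correct and takes essentially the same approach as the paper: unpack the two alternatives of Theorem~\ref{delta-extension} for the three constant values of $f_\psi$, using 0-1-irreducibility precisely in the case $f_\psi = 1$ to pass from $\fml\psi \join \delta = 1$ to $\fml\psi = 1$. One minor note: the bound $\delta < 1$ is part of the corollary's hypothesis and does not need to be traced back to $\eps \ll 1$.
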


\begin{proof}
The first statement is immediate by \cref{delta-extension}.
For the second statement, \cref{delta-extension} implies $\fml\psi \join \delta = 1$.
By assumption on $K$ and $\delta < 1$, this implies $\fml\psi = 1$.
For the last statement, recall that we have $\poly[\delta]\psi\emptyset = \delta$ for $\pol\psi = e$.
 \cref{delta-extension} states $\fml\psi \neq 0$ as well as $\fml\psi \join \delta = \delta$ which implies $\fml\psi \le \delta$.
\end{proof}

To determine which intervals occur almost surely in the case $f_\psi = e$, we need the following simple observation.

\begin{lemma}\label{lemma-directed-interval}
Let $J \subseteq K$ be a directed interval, i.e., $x,y \in J$ implies $x \meet y \in J$ and $x \join y \in J$.
If $\pi \from \Lit_A(\tau) \to J \cup \{0,1\}$ is a $K$-interpretation that maps all literals into $J$ (or to $0$ or $1$), then also $\pi \ext \psi \in J \cup \{0,1\}$, for every sentence $\psi \in \FO(\tau)$.
\end{lemma}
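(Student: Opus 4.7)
The plan is to proceed by straightforward structural induction on $\psi$, after first observing that $J \cup \{0,1\}$ is closed under the two lattice operations $\meet$ and $\join$. Directedness of $J$ gives closure of $J$ itself under pairwise $\meet$ and $\join$; the elements $0$ and $1$ are the bottom and top of the lattice, so $0 \meet x = 0$, $0 \join x = x$, $1 \meet x = x$, $1 \join x = 1$ all stay inside $J \cup \{0,1\}$ for any $x \in J \cup \{0,1\}$. By induction on the number of operations, $J \cup \{0,1\}$ is then closed under all \emph{finite} $\meet$ and $\join$ (which is what we need, since $A$ is finite).

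For the base cases, a literal $\beta$ satisfies $\pi(\beta) \in J \cup \{0,1\}$ by hypothesis on $\pi$, and an (in)equality evaluates to $0$ or $1$, hence also lies in $J \cup \{0,1\}$. For the inductive step on Boolean connectives, the semantics gives $\pi\ext{\phi \lor \theta} = \pi\ext\phi \join \pi\ext\theta$ and $\pi\ext{\phi \land \theta} = \pi\ext\phi \meet \pi\ext\theta$, and both values lie in $J \cup \{0,1\}$ by the closure observation and the induction hypothesis applied to $\phi$ and $\theta$.

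For the quantifier cases, $\pi\ext{\exists x\, \theta(x)} = \Sup_{a \in A} \pi\ext{\theta(a)}$ is a finite supremum (since $A$ is finite), and each summand lies in $J \cup \{0,1\}$ by induction; hence so does the whole expression. The case for $\forall$ is symmetric with $\Inf$ in place of $\Sup$. The formulation with the excluding quantifiers $\Ene$ and $\Ane$ is handled identically, the only difference being that the sup or inf is taken over $A \setminus \ta$ rather than $A$.

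There is no real obstacle here beyond setting up the closure properly; the one thing to be careful about is that directedness only provides pairwise closure, so the argument relies implicitly on $A$ being finite in order to reduce quantifier valuations to iterated pairwise $\meet$/$\join$. Under the standing assumption of finite universes, the induction goes through cleanly.
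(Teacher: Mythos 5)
Your proof is correct and takes essentially the same route as the paper: a straightforward structural induction on $\psi$, with the key point being that the universe is finite so quantifiers reduce to finite iterated $\meet$/$\join$, under which $J \cup \{0,1\}$ is closed by directedness and the fact that $0,1$ are the lattice's extremes. You have merely spelled out the closure argument more explicitly than the paper does.
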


\begin{proof}
Straight-forward induction on $\psi$.
Recall that we assume the universe $A$ to be finite, so all logical operators are evaluated as finite $\meet$ or $\join$ and the value thus remains in $J \cup \{ 0,1 \}$.
\end{proof}

\begin{corollary}[0-1 law for $\FO$ on infinite lattice semirings]\label{infinite-0-1}
Let $(K,\join,\meet,0,1)$ be a 0-1-irreducible lattice semiring with $\epsilon$-bounded probability measure, where $\eps \ll 1$.
Then, for every sentence $\psi\in\FO(\tau)$ over relational vocabulary $\tau$ and every interval $J\subseteq K$,
the sequence $(\mu_{n,p}[ \pi\ext{\psi} \in J ])_{n<\omega}$ converges exponentially fast to either 0 or 1, as $n$ goes to infinity.

Further, the only intervals $J$ for which 
$\lim_{n\to\infty} \mu_{n,p}[ \pi\ext{\psi} \in J ] = 1$ is possible are those where
either $0\in J$, $1\in J$, $\epsilon\in J$, or $(\epsilon,\delta)\subseteq J$ for some $\delta>\epsilon$.
\end{corollary}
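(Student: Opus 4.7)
The plan is to combine \cref{delta-extension-sentence} with information about the distribution of literal values coming from $\eps$-boundedness, after case-splitting on $\pol\psi$, which for a sentence only takes values in $\{0, 1, e\}$. Fix $k$ so that $\psi \in \FO^k(\tau)$. For every $p$-relevant $\delta$ the previous proposition says that the $(k,\delta)$-extension property holds with probability tending to $1$ exponentially fast in $n$, which turns the deterministic bounds of \cref{delta-extension-sentence} into almost-sure statements. The cases $\pol\psi = 0$ and $\pol\psi = 1$ are then immediate: applying \cref{delta-extension-sentence} with any $p$-relevant $\delta$ forces $\pi\ext\psi = 0$ (respectively $\pi\ext\psi = 1$) almost surely, so $\mu_{n,p}[\pi\ext\psi \in J]$ converges exponentially to $1$ if $0 \in J$ (resp.\ $1 \in J$) and to $0$ otherwise.

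For $\pol\psi = e$ with $p$ weakly $\eps$-bounded, I would apply \cref{delta-extension-sentence} with $\delta = \eps$ (which is $p$-relevant) to obtain $0 < \pi\ext\psi \le \eps$ almost surely. Weak $\eps$-boundedness additionally implies that every literal almost surely takes a value in $\{0\} \cup [\eps, 1]$, since any $v \in K^+$ with $\eps \not\le v$ has $p$-probability zero. Since $[\eps, 1]$ is closed under $\meet$ and $\join$, \cref{lemma-directed-interval} applied to this directed interval gives $\pi\ext\psi \in \{0\} \cup [\eps, 1]$ almost surely. Intersecting with $(0, \eps]$ and using $\eps < 1$ (from $\eps \ll 1$) pins down $\pi\ext\psi = \eps$ almost surely, so $\mu_{n,p}[\pi\ext\psi \in J] \to 1$ iff $\eps \in J$.

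For $\pol\psi = e$ with $p$ strictly $\eps$-bounded, every $\delta > \eps$ is $p$-relevant, so \cref{delta-extension-sentence} gives $0 < \pi\ext\psi \le \delta$ almost surely for each such $\delta$. An argument parallel to the weakly bounded case, using that literals almost surely avoid $(0, \eps]$ together with \cref{lemma-directed-interval} and 0-1-irreducibility, should establish $\pi\ext\psi \not\in (0, \eps]$ almost surely, so that $\pi\ext\psi \in (\eps, \delta]$ almost surely for every $\delta > \eps$. For any interval $J \subseteq K$ this yields both the 0-1 dichotomy and the characterisation. If $J \supseteq (\eps, \delta')$ for some $\delta' > \eps$, pick $\delta \in (\eps, \delta']$ and conclude $\pi\ext\psi \in (\eps, \delta] \subseteq J$ almost surely, so $\mu \to 1$. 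Otherwise the order-interval structure of $J$, together with $\pi\ext\psi \not\in (0, \eps]$, provides some $\delta^* > \eps$ with $J \cap (\eps, \delta^*] = \emptyset$, and then $\mu[\pi\ext\psi \in J] \le \mu[\pi\ext\psi \not\in (\eps, \delta^*]] \to 0$.

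The main obstacle I foresee is establishing $\pi\ext\psi \not\in (0, \eps]$ almost surely in the strictly bounded case for a general, not totally ordered lattice: two values above $\eps$ may in principle meet at or below $\eps$, so the set of values $\not\le \eps$ is not automatically closed under $\meet$. Here 0-1-irreducibility should play the same structural role as in the proof of \cref{delta-extension}: combined with the almost-sure event that every literal lies in $\{0\} \cup \{v \in K : v \not\le \eps\}$, it should sustain an inductive invariant on subformulae $\phi(\tx)$ stating that $\pi\ext{\phi(\ta)}$ is either $0$ or not in $(0, \eps]$, preserved under the logical operations, and this then delivers the desired lower bound on $\pi\ext\psi$.
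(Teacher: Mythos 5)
Your treatment of the first three cases ($\pol\psi \in \{0,1\}$ and $\pol\psi = e$ with $p$ weakly $\eps$-bounded) matches the paper exactly. The gap is in the strictly $\eps$-bounded case, where you propose an inductive invariant ``$\pi\ext{\phi(\ta)}$ is either $0$ or not in $(0,\eps]$'' and suggest that 0-1-irreducibility rescues its preservation under $\meet$. It does not: 0-1-irreducibility only constrains divisors of $0$ and joins equal to $1$; it places no constraint on how two elements strictly above $\eps$ may meet. Concretely, in a diamond-shaped sublattice $\{0 < \eps < \delta, \delta' < 1\}$ with $\delta \meet \delta' = \eps$, the sentence $\psi = \A x(Px \lor \neg Px)$ almost surely picks up both $\delta$ and $\delta'$ among the disjuncts and hence $\pi\ext\psi = \delta \meet \delta' = \eps$ almost surely. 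Your invariant predicts $\pi\ext\psi \notin (0,\eps]$, which is false; worse, combined with the two bounds $\pi\ext\psi \le \delta$ and $\pi\ext\psi \le \delta'$ from \cref{delta-extension-sentence}, it would force $\eps < \pi\ext\psi \le \delta \meet \delta' = \eps$, a contradiction.

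The missing ingredient is precisely the further case split the paper makes inside the strictly $\eps$-bounded case. If there exist $\delta, \delta' > \eps$ with $\delta \meet \delta' = \eps$, then using both the $(k,\delta)$- and $(k,\delta')$-extension properties gives $\pi\ext\psi \le \eps$, while \cref{lemma-directed-interval} applied to the closed directed interval $[\eps,1]$ (which holds $p$-almost surely for the literals) gives $\pi\ext\psi \ge \eps$; so the almost sure valuation is exactly $\eps$, landing in the ``$\eps \in J$'' clause of the statement. Only when no such $\delta, \delta'$ exist is $(\eps,1]$ itself directed, and then \cref{lemma-directed-interval} applied to it gives $\pi\ext\psi > \eps$ almost surely, which is the situation your final paragraph assumes throughout. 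As written, your proof proves the second sub-case but silently excludes the first, which is exactly the regime where the invariant you anticipated might fail does fail.
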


\begin{proof}

Fix $k$ such that $\psi\in\FO^k(\tau)$, and consider the associated polynomial $f_\psi$.
Since $\psi$ is a sentence, we have $f_\psi \in \{0,1,e\}$.
For every $p$-relevant $\delta$, the sequence
\[ \mu_{n,p}[ \pi\text{ has the $(k,\delta)$-extension property }]\]
converges to 1 exponentially fast.
We first consider the case that $f_\psi=0$ or $f_\psi=1$.
Since $\eps \ll 1$, there is a $p$-relevant $\delta < 1$ and 
\cref{delta-extension-sentence} thus implies that
$\lim_{n\to\infty} \mu_{n,p}[  \pi\ext{\psi} \in J ]= 1$ if $f_\psi\in J$ 
and $\lim_{n\to\infty} \mu_{n,p}[  \pi\ext{\psi} \in J ]= 0$ otherwise.

Now consider the case that $f_\psi=e$.
For every $p$-relevant $\delta$ and any $K$-interpretation $\pi$ with the $(k,\delta)$-extension property, \cref{delta-extension-sentence}
implies $0 <\pi\ext{\psi} \leq \delta$.
%Moreover, if we have $\pi \ext {\alpha} \ge \delta$ for all literals $\alpha \in \Lit_{[n]}(\tau)$, then clearly also $\pi \ext {\psi} \ge \delta$, as the semiring operations $\join$ and $\meet$ preserve the lower bound $\delta$.
Since the $(k,\delta)$-extension property is asymptotically almost surely satisfied,
it follows that
\[ \lim_{n\to\infty}\mu_{n,p}[ 0< \pi\ext{\psi}\leq \delta ] =1 \tag{$*$}\]
with exponential convergence, for every $p$-relevant $\delta$.
We get back to the two cases concerning the parameter $\epsilon$ of $p$:

\begin{enumerate}
\item $p$ is weakly $\eps$-bounded: $p[x = \eps] > 0$ and $p[0<x\leq \delta]=0$ for all $\delta\in K^+$ with $\epsilon\not\leq \delta$.

\smallskip
Since $p$ only admits values in the closed (and hence directed) interval $J_\eps = [\eps,1]$, \cref{lemma-directed-interval} implies $\mu_{n,p}[ \pi \ext \psi \in J_\eps] = 1$ for all $n$.
Conversely, $\eps$ is $p$-relevant, so together with $(*)$,
\[
  \lim_{n\to\infty}\mu_{n,p}[ \pi\ext{\psi}\in J ] = \begin{cases}
    1&\text{ if $\epsilon\in J$},\\
    0&\text{ otherwise.}
  \end{cases}
\]

\item $p$ is strictly $\eps$-bounded: $p[0 < x \le \eps] = 0$ and $p[0 < x \leq \delta] > 0$ for all $\delta > \eps$.

\smallskip
First assume that there are $\delta,\delta' > \eps$ with $\delta \meet \delta' = \eps$ (which implies $\delta,\delta' < 1$).
If $\pi$ has both the $(k,\delta)$- and the $(k,\delta')$-extension property, then \cref{delta-extension-sentence} implies
$\pi \ext \psi \le \delta$ and $\pi \ext \psi \le \delta'$, so $\pi \ext \psi \le \delta \meet \delta' = \eps$.
Since $\delta$ and $\delta'$ are $p$-relevant, both extension properties almost surely hold.
We further have $\pi \ext \psi \ge \eps$ by \cref{lemma-directed-interval}, since $p[\eps \le x \le 1]=1$ and the interval $[\eps,1]$ is closed (and hence directed).
Combining both bounds yields
\[
  \lim_{n\to\infty}\mu_{n,p}[ \pi\ext{\psi}\in J ] = \begin{cases}
    1&\text{ if $\epsilon\in J$},\\
    0&\text{ otherwise.}
  \end{cases}
\]

If no such $\delta,\delta'$ exist, then the interval $J_\eps = (\eps,1]$ is directed.
Since $p$ only admits values in $J_\eps$, \cref{lemma-directed-interval} implies $\mu_{n,p}[ \pi \ext \psi \in J_\eps] = 1$ for all $n$.
Together with $(*)$, we get
\[
  \lim_{n\to\infty}\mu_{n,p}[ \pi\ext{\psi}\in J ] = \begin{cases}
    1&\text{ if $(\eps,\delta) \subseteq J$ for some $\delta > \eps$}\\
    0&\text{ otherwise.}
  \end{cases}
\]
We remark that the intervals $(\eps,\delta)$ are non-empty (see the proof of \cref{infinite-asv}). \qedhere
\end{enumerate}
\end{proof}

\begin{corollary}[Almost sure valuations]\label{infinite-asv}
For every infinite lattice semiring $K$ with $\epsilon$-bounded probability measure $p$, where $\eps \ll 1$, and every relational vocabulary $\tau$,
the only possible almost sure valuations are
$\as_{K,p}(\FO(\tau)) = \{0,1,\epsilon\}$.
\end{corollary}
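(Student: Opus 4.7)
My plan is to derive \cref{infinite-asv} as a direct consequence of \cref{infinite-0-1} together with the analysis in its proof. Since $\psi$ is a sentence, the associated polynomial $f_\psi \in E[\XX 0]$ can only take one of the values $0$, $1$, or $e$, and this trichotomy determines the almost sure valuation.

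For the inclusion $\as_{K,p}(\FO(\tau)) \subseteq \{0,1,\epsilon\}$, I would split into cases on $f_\psi$. If $f_\psi \in \{0,1\}$, then the proof of \cref{infinite-0-1} already establishes $\lim_{n\to\infty}\mu_{n,p}[\pi\ext{\psi} \in J]=1$ whenever $f_\psi \in J$, so the constant sequence $J_i=\{f_\psi\}$ of degenerate intervals witnesses $\as_{K,p}(\psi)=f_\psi$. If $f_\psi = e$, then \cref{delta-extension-sentence} gives $\pi\ext{\psi}\le\delta$ almost surely for every $p$-relevant $\delta$, while \cref{lemma-directed-interval} applied to the closed directed interval $[\epsilon,1]$ (in the weakly $\epsilon$-bounded case) or to a similar directed set (in the strictly $\epsilon$-bounded case that admits witnesses $\delta,\delta'>\epsilon$ with $\delta \meet \delta' = \epsilon$) gives the matching lower bound $\pi\ext{\psi}\ge\epsilon$. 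Combining both bounds pins $\pi\ext{\psi}$ to $\epsilon$ almost surely, so taking $J_i=\{\epsilon\}$ yields $\as_{K,p}(\psi)=\epsilon$.

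For the reverse inclusion, I would exhibit explicit witnesses. The sentence $\E x\,(x\neq x)$ has $f_\psi=0$ and therefore $\as_{K,p}(\psi)=0$; the sentence $\A x\,(x=x)$ has $f_\psi=1$ and therefore $\as_{K,p}(\psi)=1$; and for any relation symbol $R\in\tau$, a short computation via \cref{def-polynomial} shows that the sentence $\A x\A y\,(Rxy\lor\neg Rxy)$ has $f_\psi=e$ and therefore $\as_{K,p}(\psi)=\epsilon$. Together these realise all three values.

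The main obstacle is the edge case of \cref{def:smallpositivevalues} where $p$ is strictly $\epsilon$-bounded without any pair $\delta,\delta'>\epsilon$ satisfying $\delta \meet \delta' = \epsilon$. In that regime, for $f_\psi=e$ the value $\pi\ext{\psi}$ lies in arbitrarily small intervals $(\epsilon,\delta)$ almost surely but strictly exceeds $\epsilon$, so no single $j\in K$ literally satisfies the singleton-intersection requirement of $\as_{K,p}$. The natural way to handle this is either to restrict to the typical cases where $\epsilon$ is actually attained, or to read the statement of the corollary as identifying $\epsilon$ as the canonical representative of the shrinking family of intervals $(\epsilon,\delta)$.
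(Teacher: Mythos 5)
Your cases $f_\psi\in\{0,1\}$ and the two ``good'' sub-cases of $f_\psi=e$ (weakly $\epsilon$-bounded, and strictly $\epsilon$-bounded with some $\delta\meet\delta'=\epsilon$) match the paper's argument: in those regimes \cref{infinite-0-1} already pins the valuation to $0$, $1$, or the singleton $\{\epsilon\}$, and the constant interval sequence witnesses $\as_{K,p}(\psi)$. However, your treatment of the remaining sub-case — strictly $\epsilon$-bounded without any $\delta,\delta'>\epsilon$ with $\delta\meet\delta'=\epsilon$ — is where there is a genuine gap, and your two proposed escapes (restricting the statement, or reinterpreting $\epsilon$ informally) are not a proof.

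First, you are misreading the definition of almost-sure valuation for infinite semirings. It does not require $\pi\ext\psi=\epsilon$ to hold almost surely; it asks for a decreasing family of \emph{intervals} $J_i$ with $\bigcap_i J_i=\{j\}$ and $\lim_n\mu_{n,p}[\pi\ext\psi\in J_i]=1$. Under that definition, $\epsilon$ can perfectly well be the almost-sure valuation even when the value $\epsilon$ is asymptotically never attained: one may take $J_i=[\epsilon,\delta_i]$ with $(\epsilon,\delta_i)\subseteq J_i$ to get limit $1$, and then argue the intersection collapses to $\{\epsilon\}$. So the premise ``no single $j\in K$ literally satisfies the singleton-intersection requirement'' is false, and the proposed workaround is both unnecessary and unsupported.

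Second, and more substantively, you never argue that whatever almost-sure valuation exists in this sub-case must be $\epsilon$. That is precisely the content the paper supplies. The crucial ingredient you are missing is the density claim $\epsilon\ll\gamma$ for \emph{every} $\gamma>\epsilon$ (proved by contradiction: a minimal $\gamma>\epsilon$ would make $p$ weakly $\gamma$-bounded, and two minimal elements above $\epsilon$ would meet at $\epsilon$, contradicting the sub-case assumption). With the density claim in hand one deduces: any candidate $j$ must satisfy $j\ge\epsilon$ (since every $J_i$ contains some $(\epsilon,\delta)$), and if $j>\epsilon$ there is $\gamma'$ with $j>\gamma'>\epsilon$; since $\gamma'\notin\bigcap_i J_i$, some $J_i$ separates $j$ from $(\epsilon,\gamma')$ and hence cannot be an interval of almost-sure measure $1$. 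Without this density argument, the upper bound $\as_{K,p}(\FO(\tau))\subseteq\{0,1,\epsilon\}$ is simply not established in the last sub-case, and that is exactly the sub-case the corollary is worth stating for.

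A minor further point on your reverse inclusion: the witness $\A x\A y(Rxy\lor\neg Rxy)$ has $f_\psi=e$, but to conclude $\as_{K,p}(\psi)=\epsilon$ in the problematic sub-case you would again need the interval-shrinking argument rather than a singleton argument, so the gap resurfaces there too.
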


\begin{proof}
In the cases where $\lim_{n\to\infty} \mu_{n,p}[ \pi\ext{\psi} \in J ] = 1$ holds whenever $0 \in J$, $1 \in J$, or $\eps \in J$, we clearly have $\as_{K,p}(\psi) = 0$, $1$, or $\eps$, respectively.

In the only remaining case, $p$ is strictly $\eps$-bounded and we have $\lim_{n\to\infty} \mu_{n,p}[ \pi\ext{\psi} \in J ] = 1$ exactly if $(\eps,\delta) \subseteq J$ for some $\delta > \eps$.
We claim that $\eps \ll \gamma$ for every $\gamma > \eps$ (in particular, $(\eps,\delta)$ is non-empty).
This is true by assumption for $\gamma = 1$, so we only consider $\gamma < 1$.
If there would be a smallest $\gamma > \eps$, then $p$ would be weakly $\gamma$ bounded, a contradiction.
If there would be two minimal $\gamma,\gamma' > \eps$, then $\gamma \meet \gamma' = \eps$ and we would be in the case where $\eps \in J$ (see the proof of \cref{infinite-0-1}).
Hence the claim holds.

Assume that the almost sure valuation exists, so $\as_{K,p}(\psi) = j$ for some $j \in K$.
Then there is a sequence of intervals with $\bigcap_{i < \omega} J_i = \{j\}$ and $\lim_{n\to\infty} \mu_{n,p}[ \pi\ext{\psi} \in J_i ] = 1$ for all $i$.
Clearly $j \ge \eps$, as every $J_i$ must contain a non-empty interval $(\eps,\delta)$ for some $\delta > \eps$.
Assume towards a contradiction that $j > \eps$.
By the claim, there is some $j > \gamma' > \eps$.
But then there must be an $i$ such that $\gamma' \notin J_i$, as otherwise $\gamma' \in \bigcap_{i < \omega} J_i$.
Since $J_i$ must contain $j > \gamma'$, this means that $J_i$ cannot intersect $(\eps,\gamma')$.
This leads to a contradiction, since $\lim_{n\to\infty} \mu_{n,p}[ \pi\ext{\psi} \in (\eps,\gamma') ] = 1$ and hence $\lim_{n\to\infty} \mu_{n,p}[ \pi\ext{\psi} \in J_i ] = 0$.
% remark: this proves the corollary, but does not prove that the ASV always exists (this might need further assumptions on the order)
\end{proof}

We remark that the almost sure valuations of sentences can be different in \emph{finite} and \emph{infinite} lattice semirings.
The polynomials are the same, and hence $f_\psi \in \{0,1\}$ implies $\as_{K,p}(\psi) = f_\psi$ in both cases, but the values can differ in case of $f_\psi = e$ if $p$ admits arbitrarily small positive values.

\begin{example}
Consider the semiring $K=([0,1], \max, \min, 0, 1)$ over real numbers.
We define a discrete probability distribution $p \from K^+ \to [0,1]$ by $p(\frac 1 {2^n}) = \frac 1 {2^{n+1}}$ for all $n \in \N$, and $p(x) = 0$ otherwise.
Then $p(1) = \frac 1 2 > 0$ and $p$ is $\epsilon$-bounded for $\eps=0$, as the values $\frac 1 {2^n}$ with positive probability become arbitrarily small.

The sentence $\psi = \A x (Px \lor \neg Px)$ induces $f_\psi = e$ and is clearly (almost surely) true in the Boolean semiring.
However, since $p$ is $0$-bounded, we have $\lim_{n\to\infty}\mu_{n,p}[ \pi\ext{\psi}\in J_i ] = 1$ for the intervals $J_i = [0,\frac 1 {2^i}]$ and hence $\as_{K,p}(\psi) = 0$.
\end{example}

\section{Absorptive Semirings}

We now generalize our results beyond min-max and lattice semirings to more general semirings $(K,+,\cdot,0,1)$ which are \emph{absorptive}, that is, $a+ab = a$ for all elements $a,b \in K$.
Absorption implies idempotence ($a+a=a$ for all $a \in K$) and the semiring is thus partially ordered by the \emph{natural order} $a \leq_K b \Leftrightarrow a+b=b$.
Notice that addition coincides with the supremum $\join$ of the natural order $\le_K$.
In contrast, multiplication can be different from the infimum $\meet$, but is guaranteed to be decreasing due to absorption, i.e., $ab \le a$ and hence $ab \le a \meet b$.
There are many examples of absorptive semirings, including
\begin{itemize}
\item the \emph{Viterbi semiring} $\Vit=([0,1]_\Real,\max,\cdot,0,1)$, used for confidence scores,
\item \emph{tropical semirings}, such as $\Trop= (\Real_{+}^{\infty},\allowbreak \min,\allowbreak +,\allowbreak \infty, 0)$ over the non-negative reals with $\infty$,
\item the \emph{\L{}ukasiewicz semiring} $\Luk = ([0,1]_\Real,\max,\star,0,1)$ with
$x\star y=\max(0,a+b-1)$, used in many-valued logics, and its finite variants, the truncation semirings $T_n = (\{0,\dots,n\}, \max, \star, 0, n)$ with $x \star y = \max(0, x+y-n)$,
\item the \emph{semirings $\Sinf[X]$ of generalised absorptive polynomials} \cite{DannertGraNaaTan21},
\item all min-max-semirings and bounded distributive lattices.
\end{itemize}

With every absorptive semiring $(K,+,\cdot,0,1)$ we can associate the
lattice semiring $\Kmin=(K,+,\meet,0,1)$ over the same domain that replaces multiplication
with the infimum-operation of the natural order (if $K$ is totally ordered, this is simply the minimum).

Let $\pi\from\Lit_A(\tau)\to K$ be a $K$-interpretation into an absorptive semiring $K$.
Since $K$ and $\Kmin$ have the same domain, we can view it also as an interpretation $\pimin$
into $\Kmin$, with $\pimin(\beta)=\pi(\beta)$ for all $\beta\in\Lit_A(\tau)$.
Since also the natural order is the same for $K$ as for $\Kmin$, we can compare
the semiring values $\pi\ext{\psi}$ and $\pimin\ext{\psi}$ for any fully instantiated 
first-order formula and use the observation $ab \le a \meet b$ to lift our results from lattice semirings to absorptive semirings.

\begin{proposition}\label{maxabsorptive_less} For every formula $\psi(\tx)$ and every tuple $\ta$, we have
\begin{itemize}
\item $\fmla \psi \ta  \leq_K  \pimin\ext{\psi(\ta)}$;
\item $\pimin\ext{\psi(\ta)} = 1$ if, and only if, $\fmla \psi \ta = 1$,
\end{itemize}
\end{proposition}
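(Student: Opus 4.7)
The plan is to prove both claims simultaneously by structural induction on $\psi$, carrying the joint inductive invariant that (a) $\pi\ext{\psi(\ta)} \leq_K \pimin\ext{\psi(\ta)}$ and (b) $\pimin\ext{\psi(\ta)} = 1$ iff $\pi\ext{\psi(\ta)} = 1$. Three elementary facts about absorptive semirings drive the argument: addition coincides in $K$ and $\Kmin$, both equal to the join $\join$ of the natural order; multiplication in $K$ is bounded by the infimum, $ab \leq a \meet b$, because absorption gives $a+ab=a$ and $b+ab=b$, hence $ab\leq a$ and $ab\leq b$; and $1$ is the top element of $K$, since $1+a = 1+1\cdot a = 1$ by absorption.

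For literals, equalities and inequalities, $\pi$ and $\pimin$ assign identical values, so both claims hold trivially. At disjunctive and existential steps, claim (a) is immediate from IH(a) by monotonicity of $+$, which is literally the same operation in both semirings. At conjunctive and universal steps, I would chain
\[
\pi\ext{\phi(\ta)}\cdot\pi\ext{\theta(\ta)} \leq \pi\ext{\phi(\ta)}\meet\pi\ext{\theta(\ta)} \leq \pimin\ext{\phi(\ta)}\meet\pimin\ext{\theta(\ta)},
\]
combining the multiplication bound with IH(a) and monotonicity of $\meet$; the universal case iterates this over the finite universe $A$.

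The forward direction of (b), namely $\pi\ext\psi = 1 \Rightarrow \pimin\ext\psi = 1$, is immediate from (a) and the fact that $1$ is the top element of $\Kmin$. For the converse, I would split by case. For conjunctions and universals, $\pimin\ext\psi = 1$ forces every operand to equal $1$ in $\Kmin$ (since $a\meet b = 1$ with $1$ the top implies $a=b=1$), and IH(b) lifts this to the corresponding $\pi$-values, whose product is $1\cdot 1 = 1$. For disjunctions and existentials, one needs that some operand of $\pimin\ext\psi$ already attains $1$, and then IH(b) together with absorption ($1+c=1$) yields $\pi\ext\psi = 1$. The main obstacle is exactly this last subcase: it rests on $1$ being join-irreducible in the natural order of $K$. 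This is automatic in the totally ordered examples $\Vit,\Trop,\Luk,T_n$, holds by inspection of the absorption quotient in $\Sinf[X]$, and is trivially satisfied whenever $K$ is already a lattice semiring, in which case $K=\Kmin$ and both claims collapse to equalities.
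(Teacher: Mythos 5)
Your argument for statement (a), and for the conjunction and universal cases of the backward direction of (b), reproduces the paper's own proof almost verbatim: the chain
\[
\pi\ext{\phi(\ta)}\cdot\pi\ext{\theta(\ta)} \;\leq\; \pi\ext{\phi(\ta)}\meet\pi\ext{\theta(\ta)} \;\leq\; \pimin\ext{\phi(\ta)}\meet\pimin\ext{\theta(\ta)}
\]
and the observation that $\pimin\ext{\psi\land\phi}=1$ forces $\pimin\ext{\psi}=\pimin\ext{\phi}=1$ are exactly the paper's steps, and the two directions of (b) are split in the same way. The one place you depart from the paper is the disjunction/existential case of the backward direction of (b), which the paper merely declares to be ``again trivial''. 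You correctly observe that deducing $\pi\ext{\phi\lor\theta}=1$ from $\pimin\ext{\phi\lor\theta}=1$ via the induction hypothesis needs some disjunct to already attain $1$ under $\pimin$, i.e.\ needs $1$ to be join-irreducible in $K$ (and, since universes are finite, this extends by induction to the join-primeness needed for $\exists$). That condition is genuinely nontrivial for absorptive semirings: for instance $\Vit\times\Vit$ with componentwise $\max$ and $\cdot$ is absorptive and has $(1,1)=(1,\tfrac12)+(\tfrac12,1)$ with both summands strictly below the top. So you have identified a real hole in the proof as written in the paper; the conclusion can still be rescued without assuming join-irreducibility (for $\Vit\times\Vit$ one factors through the coordinate projections, and more generally one can pass to a subdirect decomposition into factors whose top is join-irreducible), but that argument is not in the paper and not what ``trivial'' could mean.

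Two small corrections to your own text. First, the sentence ``is trivially satisfied whenever $K$ is already a lattice semiring'' is, read literally, false: in a non-totally-ordered lattice semiring such as the four-element Boolean lattice $\{0,a,b,1\}$, one has $a\join b=1$ with $a,b<1$, so $1$ is not join-irreducible. What rescues the lattice case is not join-irreducibility but the collapse $K=\Kmin$ that you mention in the same breath; it is worth separating these two statements. Second, your deduction ``$a+ab=a$ and $b+ab=b$, hence $ab\le a$ and $ab\le b$, hence $ab\le a\meet b$'' is fine, but note that it silently presupposes that pairwise infima exist in $K$; the paper makes the same tacit assumption when it writes down $\Kmin$, so you are in good company, but it is another unstated hypothesis lurking in this section.
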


\begin{proof}
The first statement readily follows by induction. For disjunctions and existential quantification the
induction step is trivial, since these are interpreted by the $+$ operation (supremum) in both semirings.
For conjunction, we have
$\pi\ext{\psi\land\phi} = \pi\ext\psi \cdot \pi\ext\phi \leq \pi\ext\psi \meet \pi\ext\phi \leq \pimin\ext\psi \meet \pimin\ext\phi
= \pimin\ext{\psi\land\phi}$
by absorption, analogously for universal quantification.	

For the second statement, it then remains to prove that $\pimin\ext{\psi(\ta)}=1$ implies $\fmla \psi \ta=1$.
The induction step for disjunctions and existential quantification is again trivial.
For conjunctions, observe that $\pimin\ext{\psi \land \phi} = 1$ implies $\pimin\ext{\psi} = \pimin\ext{\phi} = 1$ and hence $\pi\ext{\psi \land \phi} = 1 \cdot 1 = 1$ by induction.
The same argument applies for universal quantification.
\end{proof}

We remark that the equivalence that we have for the value $1$ also holds for $0$ if the semiring has no divisors of $0$, but not in general.
For instance, an interpretation into the \L{}ukasiewicz
semiring $\Luk$ that interprets two literals $\alpha$ and $\beta$ by values in the open interval $(0, \frac 1 2)$, interprets
the conjunction $\alpha\land\beta$ by $0$ whereas the associated interpretation into $\Kmin[\Luk]$ picks
the smaller of the two values.

\Cref{maxabsorptive_less} implies that every first-order sentence $\psi$ whose 
almost sure valuation $\as_{\Kmin,p}(\psi)$  is 0 or 1, has the same almost sure valuation in $K$.
However, if $\as_{\Kmin,p}(\psi)=\epsilon$, with $\epsilon>0$, the situation is more complicated,
since $K$ need not be mulitiplicatively idempotent. Hence, even if $\epsilon$ is the smallest
positive value that may appear for valuations $\pi(\alpha)$ of literals, more complicated
formulae may get smaller valuations.

\begin{Example}[values smaller than $\eps$]
For a simple example, consider a  universal sentence $\psi=\A y( Py \lor \neg Py)$
and a random interpretation in the Viterbi semiring $\Vit=([0,1]_\Real,\max,\cdot,0,1)$, with a
probability distribution $p$ satisfying $p[0<x<1/2]=0$, and $p[x=1/2]>0$. 
Here, $\epsilon=1/2$ and in the associated min-max semiring on $[0,1]$, we clearly
have that $\psi$ asymptotically almost surely evaluates to $\epsilon$.
But for random interpretations in the Viterbi semiring the valuation of $\psi$ asymptotically gets 
arbitrary small; indeed for every $\delta>0$ we have that 
$\lim_{n\to\infty} \mu_{n,p}[ 0<\pi_\Vit\ext\psi <\delta] = 1$ and hence $\as_{\Vit,p}(\psi) = 0$.
\end{Example}

\begin{Example}[absorptive polynomials]
A perhaps more unusual, but also more interesting example is obtained by evaluating the same formula $\psi$ in the semiring
$\Sinf[x]$ of generalised absorptive polynomials with
just one indeterminate (in this case, the natural order is total: $1 > x > x^2 > \dots > x^\infty > 0$)
under the probability distribution that assigns to each pair of complementary literals
$P(j), \neg P(j)$ with equal probability 1/4 pair of values from $\{(1,0),(x,0),(0,1),(0,x)\}$.
Intuitively this means that for each atom, we first decide, independently and with uniform probability whether it is true or false,
and then, again independently and with uniform probability, whether  or not we want to track the effect of this
decision for the valuation of the formulae we consider. The valuation $\pi\ext\phi\in\Sinf[x]$ then is
either 0,1, or a monomial $x^k$, for $k\in\N\cup\{\infty\}$ which tells us, how many tracked literals
are needed for establishing the truth of $\phi$.
For the given probability distribution $p$, we have that $\epsilon=x$, and indeed, for any natural number $n$
and every element $j\in[n]$, the probability measures $\mu_{n,p}$ evaluate the formula $P(j)\lor\neg P(j)$ either to
1 or to $x$, each with probability 1/2. As a consequence, we have for $\psi=\A y( Py \lor \neg Py)$
that $\lim_{n\to\infty} \mu_{n,p}[ 0<\pi\ext\psi < x^k] = 1$, for all $k\in\N$.
Hence the almost sure valuation of $\psi$ is $\as_{\Sinf[x],p}(\psi)=x^\infty$.  
\end{Example}

We can nevertheless show that almost sure valuations of $\eps$ transfer from $\Kmin$ to $K$, under the assumption that $\eps$ is idempotent ($\eps \cdot \eps = \eps$).
This applies, for instance, to the smallest non-zero element $x^\infty$ of $\Sinf[x]$ (and also to the multivariate case, say $x^\infty y^\infty$ in $\Sinf[x,y]$).

\begin{proposition}\label{maxabsorptive_eps_idempotent} Let $\eps \in K$ such that $\eps \cdot \eps = \eps$.
Then for every formula $\psi(\tx)$ and every tuple $\ta$, we have that $ \pimin\ext{\psi(\ta)} = \eps$ 
implies that also $\fmla \psi \ta = \eps$.
\end{proposition}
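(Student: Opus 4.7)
My plan is to combine \cref{maxabsorptive_less}, which already delivers the upper bound $\fmla \psi \ta \leq \pimin\ext{\psi(\ta)} = \eps$, with a matching lower bound proved by induction on $\psi$. Concretely, I will strengthen the induction hypothesis to: for every subformula $\phi(\tx)$ and every instantiation $\ta$, $\pimin\ext{\phi(\ta)} \geq \eps$ implies $\pi\ext{\phi(\ta)} \geq \eps$. The proposition then follows immediately by combining this auxiliary claim with the upper bound.

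The induction is essentially a case check and leans on the standing interpretation of $\eps$ as the smallest non-zero element of $K$, so that every non-zero value is $\geq \eps$. For literals, $\pi$ and $\pimin$ agree on atomic values and there is nothing to show. For disjunctions and $\Ene$-quantifiers, $\pimin\ext{\phi \lor \theta} \geq \eps$ means the supremum is non-zero, hence at least one disjunct has $\pimin$-value $\geq \eps$; applying the induction hypothesis together with the monotonicity of $+ = \join$ in $K$ propagates the bound. For conjunctions, $\pimin\ext{\phi \land \theta} = \pimin\ext{\phi} \meet \pimin\ext{\theta} \geq \eps$ forces both factors to be $\geq \eps$ in $\Kmin$, hence by induction also in $K$, and the monotonicity of multiplication together with the hypothesis $\eps \cdot \eps = \eps$ yields $\pi\ext{\phi} \cdot \pi\ext{\theta} \geq \eps \cdot \eps = \eps$. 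Universal quantifiers iterate this step over the finite universe, bounding $\pi\ext{\A x\,\phi}$ below by a finite power $\eps^n$, which collapses to $\eps$ by the same idempotence.

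The main obstacle—and indeed the precise place where the hypothesis $\eps \cdot \eps = \eps$ is used essentially—is the conjunction/universal case. This is exactly the step that fails in the two counterexamples preceding the proposition (in the Viterbi semiring and in $\Sinf[x]$): without idempotence, a finite product of factors each $\geq \eps$ can shrink strictly below $\eps$, so no lower bound survives. A secondary, more cosmetic subtlety is the disjunctive step, which implicitly requires $\eps$ to be join-irreducible above $0$; this is automatic as soon as $\eps$ is the smallest non-zero element of $K$, which is the intended setting for applying the proposition.
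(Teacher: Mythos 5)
Your argument follows the paper's proof exactly: use Proposition~\ref{maxabsorptive_less} for the upper bound $\fmla\psi\ta\leq\pimin\ext{\psi(\ta)}=\eps$, then prove by induction that $\pimin\ext{\phi(\ta)}\geq\eps$ implies $\pi\ext{\phi(\ta)}\geq\eps$, with the hypothesis $\eps\cdot\eps=\eps$ doing its essential work in the conjunction (and universal-quantifier) step via $\pi\ext\phi\cdot\pi\ext\theta\geq\eps\cdot\eps=\eps$. This is the same decomposition and the same key step as the paper.

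One remark on the disjunction case, which the paper dismisses as ``trivial'': you correctly notice that the step ``$\pimin\ext{\phi\lor\theta}\geq\eps$ implies one of $\pimin\ext\phi,\pimin\ext\theta\geq\eps$'' silently uses that $\eps$ is join-irreducible above $0$ (e.g.\ because $\eps$ is the smallest non-zero element), and that property is not among the stated hypotheses of the proposition. You call this cosmetic, and indeed it is harmless in every intended application (where $\eps=\epsilon_K$ or $\eps=x^\infty$), but strictly speaking it is an added assumption rather than a literal consequence of $\eps\cdot\eps=\eps$; the paper's own ``trivial'' at that point is gliding over the same gap, so your observation is a genuine sharpening of the exposition rather than a defect of your proof.
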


\begin{proof}
By \cref{maxabsorptive_less}, it suffices to prove by induction on $\psi$ that $\pimin\ext{\psi(\ta)} \ge \eps$ implies $\fmla \psi \ta \ge \eps$.
For literals, disjunctions and existential quantification, this is trivial.
For conjunction, observe that $\pimin\ext{\psi \land \phi} \ge \eps$ implies $\pimin\ext{\psi},\pimin\ext{\phi} \ge \eps$.
Then $\pi\ext{\psi \land \phi} \ge \eps \cdot \eps = \eps$ by induction and monotonicity.
Analogously for universal quantification.
\end{proof}

With this assumption, we can lift the 0-1 laws for finite and infinite lattice semirings to absorptive semirings, leading to the following result about the almost sure valuations.

\begin{corollary}
If $\as_{\Kmin,p}(\psi) \in \{0,1\}$ or $\as_{\Kmin,p}(\psi) \in \{0,1,\eps\}$ with $\eps \cdot \eps = \eps$ (in $K$), then $\as_{K,p}(\psi) = \as_{\Kmin,p}(\psi)$.
\end{corollary}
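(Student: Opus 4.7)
The plan is to case-split on $j=\as_{\Kmin,p}(\psi)\in\{0,1,\eps\}$ and, in each case, transfer the almost-sure convergence from $\pimin$ to $\pi$ using the pointwise implications of Propositions~\ref{maxabsorptive_less} and~\ref{maxabsorptive_eps_idempotent}. The essential preliminary observation is that $K$ and $\Kmin$ share the same carrier and the same distribution $p$ on $K^+$, so every random $K$-interpretation $\pi$ is \emph{simultaneously} a random $\Kmin$-interpretation $\pimin$ under the identical measure $\mu_{n,p}$: only the induced valuations of complex formulae differ, whereas the underlying sample space and its probability are unchanged.

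Each case reduces to an event inclusion, or equality, between $\{\pimin\ext\psi=j\}$ and $\{\pi\ext\psi=j\}$. For $j=0$, the inequality $\pi\ext\psi\leq_K\pimin\ext\psi$ from Prop.~\ref{maxabsorptive_less} yields $\{\pimin\ext\psi=0\}\subseteq\{\pi\ext\psi=0\}$; for $j=1$, the biconditional in the same proposition gives $\{\pimin\ext\psi=1\}=\{\pi\ext\psi=1\}$; for $j=\eps$ with $\eps\cdot\eps=\eps$, Prop.~\ref{maxabsorptive_eps_idempotent} combined with the $\leq$-inequality gives $\{\pimin\ext\psi=\eps\}=\{\pi\ext\psi=\eps\}$. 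In the finite lattice setting, $\as_{\Kmin,p}(\psi)=j$ is literally the statement $\mu_{n,p}[\pimin\ext\psi=j]\to 1$, so each of these event relations transfers the convergence immediately and gives $\as_{K,p}(\psi)=j$.

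The main technical obstacle is the infinite lattice setting, where the almost sure valuation is defined via a nested sequence of intervals $J_i\searrow\{j\}$ rather than via the exact event $\{\pimin\ext\psi=j\}$. Inspecting the proof of Corollary~\ref{infinite-0-1}, however, one sees that $\mu_{n,p}[\pimin\ext\psi=j]\to 1$ actually holds directly for $j\in\{0,1\}$ (via Corollary~\ref{delta-extension-sentence}) and also for $j=\eps$ in the weakly $\eps$-bounded case and in the first strictly $\eps$-bounded subcase (via Lemma~\ref{lemma-directed-interval}); in these situations the previous argument applies verbatim. In the remaining strictly $\eps$-bounded subcase one combines the upper bound $\pi\ext\psi<\delta$, inherited from $\pi\ext\psi\leq\pimin\ext\psi$ for every $p$-relevant $\delta>\eps$, with a lower bound $\pi\ext\psi\geq\eps$ obtained by a brief induction on $\psi$ using that literal values lie in $\{0\}\cup[\eps,1]$ almost surely, that multiplication is monotone in absorptive semirings, and that $\eps^2=\eps$ (so that both $\cdot$ and $+$ preserve the lower bound $\eps$ on any subformula whose $\pimin$-value is non-zero). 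These two bounds together witness $\as_{K,p}(\psi)=\eps$ via the intervals $[\eps,\delta_i)$ shrinking to $\{\eps\}$, which exist by the $\eps\ll\gamma$ argument from the proof of Corollary~\ref{infinite-asv}.
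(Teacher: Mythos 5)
Your proof is correct, and in the finite lattice setting it follows exactly what the paper intends (the corollary is stated without proof): identify the sample space of $K$- and $\Kmin$-interpretations, and transfer $\mu_{n,p}[\pimin\ext\psi=j]\to 1$ to $\mu_{n,p}[\pi\ext\psi=j]\to 1$ through the pointwise implications of Propositions~\ref{maxabsorptive_less} and~\ref{maxabsorptive_eps_idempotent}. One small overstatement: in the $j=\eps$ case, the two propositions give only the inclusion $\{\pimin\ext\psi=\eps\}\subseteq\{\pi\ext\psi=\eps\}$, not equality of events (the converse would require $\pimin\ext\psi\geq\eps$ to force $\pimin\ext\psi=\eps$, which does not follow in general); fortunately this inclusion suffices, so the conclusion is unharmed.

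Where you go genuinely beyond what the paper spells out is the infinite lattice case, and you correctly identify it as the only place where the naive pointwise transfer does not work verbatim. In the weakly $\eps$-bounded case and in the first strictly $\eps$-bounded subcase you are right that $\mu_{n,p}[\pimin\ext\psi=\eps]\to 1$ holds pointwise (Corollary~\ref{delta-extension-sentence} plus Lemma~\ref{lemma-directed-interval}), so the propositions apply directly. In the remaining strictly $\eps$-bounded subcase, where $\pimin\ext\psi$ only converges to $\eps$ through a nested sequence of intervals $(\eps,\delta)$, your supplementary argument is the right one and is sound: the upper bound $\pi\ext\psi\leq\pimin\ext\psi<\delta$ is inherited from Proposition~\ref{maxabsorptive_less}, and the lower bound $\pi\ext\psi\geq\eps$ follows by induction on $\psi$ because literal values lie a.s.\ in $\{0\}\cup[\eps,1]$, addition is $\join$, and $a,b\geq\eps$ together with $\eps^2=\eps$ and monotonicity of multiplication gives $ab\geq\eps^2=\eps$ (so in particular the product cannot collapse to $0$ and zero divisors play no role among the values that actually occur). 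This yields $\pi\ext\psi\in[\eps,\delta)$ a.s.\ for every $p$-relevant $\delta>\eps$, which witnesses $\as_{K,p}(\psi)=\eps$ by the same $\eps\ll\gamma$ reasoning used in Corollary~\ref{infinite-asv}. In short: your argument is faithful to the paper's route, and it makes explicit a case the paper leaves implicit.
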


\section{The natural semiring}

We now discuss the natural semiring $(\N,+,\cdot,0,1)$, which is important for bag semantics in databases.
The most important technical difference to the previously considered semirings is that multiplication in $\N$ is
increasing rather than decreasing with respect to the natural order, which leads to a different asymptotic behaviour of universal quantification.

We first define an extension property adapted to $\N$, which is both stronger and weaker compared to 
the extension properties for lattice semirings:
stronger, since it guarantees not just one, but many realisations of extension types, but
weaker, since it does not guarantee realisations of every type (which would be infinitely many), but only that every underlying Boolean type has  realisations with sufficiently large values.

\begin{Definition}
Given an atomic $m$-type $\rho$ with values in $\N$ we call an extension
$\rho'\in\extend(\rho)$ \emph{large} if out of any pair $\alpha,\neg \alpha$
of complementary literals that contain the variable $x_{m+1}$,
it maps one of them to 0, and to other to some number $\geq 2$.

Recall that $\rho \simbool \rho'$ holds if $\rho$ and $\rho'$ induce the same Boolean type.
Let $\gamma>0$ be some constant.
We say that an $\N$-interpretation $\pi\from\Lit_A(\tau)\to\N$  has the \emph{strong $(k,\gamma)$-extension property}
if for every $m<k$, every tuple $\ta\in A^m$ and every extension $\rho^+ \in \extend(\rho^\pi_\ta)$,
\[    | \{b\in A\setminus\ta :  \rho^\pi_{\ta,b}\simbool \rho^+ \text{ and } \rho^\pi_{\ta,b} \text{ is large}\}| \geq \gamma |A|.\]
\end{Definition}

We consider probability distributions $p\from\N^+\to [0,1]$ with the property that
$p[x\geq 2]>0$ and the associated measures $\mu_{n,p}$ on $\N$-interpretations of $\tau$-structures
with universe $n$. Associated random $\N$-interpretations almost surely have strong extension properties.

\begin{proposition} \label{strong-ext} For any such probability distribution $p$ and every $k\in\N$ there exists some $\gamma>0$ 
such that 
\[ \lim_{n\to\infty} \mu_{n,p}[ \pi \text{ has the strong $(k,\gamma)$-extension property} ] = 1.\]
\end{proposition}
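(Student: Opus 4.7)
The plan is to reduce the strong extension property to a concentration argument over finitely many Boolean patterns, following the template of the proof of \cref{extension-property} but now quantifying the \emph{number} of realisations rather than merely establishing existence.

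First I would observe that the event defining the strong $(k,\gamma)$-extension property depends only on the underlying Boolean type of $\rho^+$: the condition $\rho^\pi_{\ta,b}\simbool\rho^+$ together with the largeness requirement cares only about which of $\alpha,\neg\alpha$ is zero and about whether the non-zero value is $\geq 2$. So although $\extend(\rho^\pi_\ta)$ is infinite over $K=\N$, only finitely many Boolean extension types (determined by $\tau$ and $m$) actually matter. Fix $m<k$, a tuple $\ta\in[n]^m$ of distinct elements, and a Boolean extension type $B$ that specifies, for each atom $\alpha\in\Lit_{m+1}(\tau)\setminus\Lit_m(\tau)$, which of $\alpha,\neg\alpha$ is non-zero. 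If $q$ is the number of such atoms, then for any fixed $b\in[n]\setminus\ta$ the random choices for the $q$ relevant literal pairs are independent, and the probability that every one of them produces the correct sign with value $\geq 2$ is the positive constant
\[ c_B \coloneqq \bigl(\tfrac{1}{2}\cdot p[x\geq 2]\bigr)^q. \]

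Next I would apply a Chernoff bound to the sum $X_{\ta,B}$ of the indicator variables ``$b$ realises $B$ largely'' over $b\in[n]\setminus\ta$. These indicators are mutually independent, since different $b$ correspond to disjoint sets of literals, and $\E[X_{\ta,B}] = c_B(n-m)$. Setting $\gamma \coloneqq \tfrac{1}{2}\min_B c_B$ with the minimum taken over the finitely many Boolean extension types for all $m<k$, standard Chernoff then gives that $\Pr[X_{\ta,B}<\gamma n]$ is exponentially small in $n$ for large $n$. A union bound over the at most $n^k$ tuples $\ta$, the finitely many Boolean extension types, and the finitely many choices of $m<k$ shows that the total failure probability tends to $0$, in fact exponentially fast, since polynomially many events of exponentially small probability still sum to something vanishing.

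The main potential obstacle is conceptual rather than computational: one has to notice that the apparently infinite quantification over $\rho^+\in\extend(\rho^\pi_\ta)$ collapses to a finite one because only the $\simbool$-class of $\rho^+$ (together with the largeness constraint) appears in the defining condition. Once this reduction is made, the assumption $p[x\geq 2]>0$ is precisely what keeps each $c_B$ strictly positive, and the rest is a routine Chernoff-plus-union-bound argument.
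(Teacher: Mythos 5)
Your proof is correct and follows essentially the same route as the paper: reduce to the finitely many $\simbool$-classes of extension types, lower-bound the per-element probability of a large realisation by $\bigl(\tfrac12 p[x\geq 2]\bigr)^q > 0$, apply a Chernoff-type concentration bound (the paper cites this as Lemma~6.2 of Blass--Gurevich rather than invoking Chernoff explicitly), and finish with a union bound over tuples, Boolean extension types, and $m<k$.
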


This follows by general results of probability theory that have, for instance 
been also used by  Blass and Gurevich \cite{BlassGur03} to prove strong extension properties of random graphs.  Specifically, we can apply the following fact, see \cite[Lemma 6.2]{BlassGur03}.

\begin{lemma} Let $X$ be the number of successes in $n$ trials, each having at least probability $\delta$ of
success. Then, for each $\alpha\in(0,1)$ there exists some $\beta\in(0,1)$ such that for all natural numbers $n$,
$\text{Prob}[ X \leq \alpha \delta n] \leq \beta^n$.
\end{lemma}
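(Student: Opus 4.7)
The plan is to prove this via the standard Chernoff-bound method: first reduce to independent identically distributed Bernoulli trials by a monotone coupling, and then apply the exponential Markov inequality with a well-chosen parameter $t$, choosing $\beta$ from the resulting moment generating function bound.

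First, I would dispense with the non-uniformity of the success probabilities. Assuming (as is implicit in the statement and in its intended use in \cref{strong-ext}) that the $n$ trials are independent, write $X = \sum_{i=1}^n X_i$ with $X_i \sim \mathrm{Bernoulli}(p_i)$ and $p_i \geq \delta$. Drawing iid uniform variables $U_i$ on $[0,1]$ and coupling $X_i = \mathbb{1}[U_i \leq p_i]$ with $Y_i = \mathbb{1}[U_i \leq \delta]$, the inequality $p_i \geq \delta$ gives $Y_i \leq X_i$ pointwise, hence $Y \coloneqq \sum_i Y_i \leq X$, and therefore $\text{Prob}[X \leq \alpha\delta n] \leq \text{Prob}[Y \leq \alpha\delta n]$. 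It thus suffices to prove the bound for $Y \sim \mathrm{Binomial}(n,\delta)$.

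For the core estimate, I would apply the exponential Markov inequality: for any $t > 0$,
\[
\text{Prob}[Y \leq \alpha\delta n] \;=\; \text{Prob}[e^{-tY} \geq e^{-t\alpha\delta n}] \;\leq\; e^{t\alpha\delta n}\,\mathbb{E}[e^{-tY}] \;=\; \bigl(e^{t\alpha\delta}(1-\delta+\delta e^{-t})\bigr)^{n}.
\]
Setting $\beta(t) \coloneqq e^{t\alpha\delta}(1-\delta+\delta e^{-t})$, it remains to exhibit a $t > 0$ with $\beta(t) < 1$. A routine differentiation gives $\beta(0) = 1$ and $\beta'(0) = \delta(\alpha - 1) < 0$, using $\alpha \in (0,1)$ and $\delta > 0$. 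Hence $\beta(t) < 1$ for every sufficiently small $t > 0$. Fixing any such $t$ and setting $\beta \coloneqq \beta(t) \in (0,1)$ yields $\text{Prob}[X \leq \alpha\delta n] \leq \beta^n$ uniformly in $n$, as required.

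I do not anticipate any real obstacle here, as the argument is essentially textbook. The only conceptual point worth pinning down is the monotone coupling step, which relies on independence of the trials; since this is the regime in which the lemma is applied (independent random choices at distinct atoms), I would make the independence assumption explicit at the outset and note that the conclusion depends only on $\alpha$ and $\delta$, not on the individual $p_i$.
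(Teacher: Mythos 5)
Your proof is correct, but note that the paper does not prove this lemma at all: it is quoted as a known fact, citing Lemma~6.2 of Blass and Gurevich \cite{BlassGur03}, and only its consequence (\cref{strong-ext}) is proved in the text. What you supply is therefore a self-contained justification rather than a variant of an argument in the paper, and it is the standard one: the monotone coupling correctly reduces the non-uniform case to $Y \sim \mathrm{Binomial}(n,\delta)$, the exponential Markov step gives $\mathrm{Prob}[Y \le \alpha\delta n] \le \bigl(e^{t\alpha\delta}(1-\delta+\delta e^{-t})\bigr)^n$, and your computation $\beta(0)=1$, $\beta'(0)=\delta(\alpha-1)<0$ legitimately yields some $t>0$ with $\beta(t)<1$, uniformly in $n$ (the case $n=0$ being trivial). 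Your decision to make independence explicit is the right one and matches the intended use: in \cref{strong-ext} the trials range over distinct witnesses $b$ and concern disjoint sets of instantiated atoms, so they are indeed independent. The only refinement worth mentioning is that the lemma is often stated (and is true) under the weaker hypothesis that each trial succeeds with probability at least $\delta$ \emph{conditionally} on the outcomes of the other trials; your coupling as written needs genuine independence, and the conditional version would instead be handled by a stochastic-domination or martingale (Azuma-type) argument. Since the application here only needs the independent case, this is a remark on generality, not a gap.
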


\begin{proof}[Proof of \cref{strong-ext}]
Let $q = p[x \ge 2] > 0$.
For every tuple $\ta\in [n]^m$, and every 
new element $b\in [n]\setminus\ta$, the probability that $\rho^\pi_{\ta,b}$ is large
and  $\rho^\pi_{\ta,b}\simbool \rho^+$  is at least $\delta=(q/2)^\ell$ 
where $\ell$ is the number of relational atoms containing the variable
$x_{m+1}$. Fix any $\alpha\in(0,1)$ and choose $\gamma$ with $0<\gamma < \alpha\delta$.
For large enough $n$, the probabilty that $\rho^\pi_{\ta}$ does not have at at least $\gamma n$
extensions to a type $\rho^\pi_{\ta,b}\geq\rho^+$ is then bounded by $\beta^{n-m}$, for some $\beta<1$.
There are $n^m$ tuples $\ta$ to consider and $2^\ell$ $\simbool$ equivalence classes of
extensions $\rho^+$. Thus the probability that the strong $k$-extension property fails for $\pi$
is bounded by $n^m2^\ell\beta^{n-m}$ which converges to 0 exponentially fast. 
\end{proof}

We again want to represent formulae $\psi(x_1,\dots,x_i)\in\FO^k$ by algebraic expressions
$g(\XX i)$ with indeterminates $X_\alpha$ and $X_{\neg\alpha}$, for each $\tau$-atom
in variables from $x_1,\dots,x_i$.
However, rather than polynomials as used in the case of lattice semirings, we need here a slightly different definition
to include $\infty$ as a coefficient and exponent:

\begin{Definition}
Let $\Ninf:=\N\cup\{\infty\}$.  An \emph{$\infty$-expression} (over $\XX i$) is a formal arithmetic expression $g(\XX i)$ consisting of indeterminates $X_\alpha, X_{\neg\alpha} \in \XX i$, constants $0,1,\infty$, binary operations $+,\cdot$ and the unary operation ${}^\infty$ (infinite power).
Given a mapping $\sigma \from \XX i \to \Ninf$, the $\infty$-expression $g(\XX i)$ evaluates to $g[\sigma] \in \Ninf$ with the usual rules, extended by: $0 \cdot \infty = 0$, $0+\infty=n+\infty=n\cdot\infty=\infty$ for $n \ge 1$ as well as $0^\infty = 0$, $1^\infty = 1$ and $n^\infty = \infty$ for $n \ge 2$.

Two $\infty$-expressions $g(\XX i)$ and $g'(\XX i)$ are \emph{equivalent}, denoted $g \equiv g'$,  if $g[\sigma]=g'[\sigma]$ for every consistent mapping $\sigma \from \XX i \to \Ninf$ (that is, out of any pair $X_\alpha, X_{\neg\alpha}$, one is mapped to 0 and the other one to a non-zero value).
\end{Definition}

We remark that we usually evaluate an $\infty$-expression for a given type $\rho$ mapping literals to $\N$ (not to $\Ninf$).
However, we also consider selector functions into $\Ninf$ and it is thus more convenient to regard types as mappings of the form $\rho \from \XX i \to \Ninf$.
Given such a mapping $\rho$ and a selector function $s \from \YY i \to \Ninf$, we write $\rho s$ for the combined mapping $\rho s \from \XX {i+1} \to \Ninf$ that behaves like $\rho$ on $\XX i$ and like $s$ on $\YY i = \XX {i+1} \setminus \XX i$.

\begin{lemma}
\label{lemInftyexprTrivial}
Let $\sigma,\sigma' \from \XX i \to \Ninf$ and let $g(\XX i)$ be an $\infty$-expression. Then,
\begin{itemize}
\item if $\sigma \simbool \sigma'$, then $g[\sigma] = 0$ if and only if $g[\sigma'] = 0$,
\item if $\sigma \le \sigma'$, then also $g[\sigma] \le g[\sigma']$.
\item if $g$ is not constant on atomic types, then it assumes arbitrarily large values: for every $n\in\N$ there exists a
type $\sigma_n$ with $g[\sigma_n]\geq n$.
\end{itemize}
\end{lemma}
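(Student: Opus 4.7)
First, I would handle parts (1) and (2) by routine structural induction on $g$. For (1), each operation preserves zero-ness based on the zero-ness of its arguments: $0+0=0$; the product $ab$ equals $0$ iff $a=0$ or $b=0$ (using the convention $0\cdot\infty=0$); and $0^\infty=0$ while $n^\infty\neq 0$ for $n\geq 1$. Since $\sigma\simbool\sigma'$ means the two mappings agree on which indeterminates are sent to $0$, the induction goes through. For (2), I verify that each of $+$, $\cdot$, and ${}^\infty$ is monotone on $\Ninf$ under the stipulated arithmetic, paying attention to the corner cases involving $\infty$ (in particular, that the convention $0\cdot\infty=0$ is compatible with monotonicity of multiplication).

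Part (3) is the substantive piece. My plan is to establish, by a joint induction on $g$, two strengthened claims. Let $V_B^g$ denote the set of values $g[\sigma]$ as $\sigma$ ranges over atomic types with Boolean type $B$. The claims are: (i) $V_B^g$ is either a singleton $\{c_B\}$ in $\Ninf$ or contains arbitrarily large values; and (ii) if both $V_B^g=\{c_B\}$ and $V_{B'}^g=\{c_{B'}\}$ are singletons with $c_B\neq c_{B'}$, then $c_B=\infty$ or $c_{B'}=\infty$. From (i) and (ii), part (3) follows immediately: if $g$ is globally non-constant, then either some $V_B^g$ is not a singleton---in which case (i) yields arbitrarily large values on $B$-typed atomic types---or each $V_B^g$ is a singleton with $c_{B_1}\neq c_{B_2}$ for some $B_1,B_2$, in which case (ii) forces one $c_B=\infty$, so $g$ attains $\infty\geq n$ for every $n$.

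The hard part will be the inductive step of this joint induction for multiplication. For (i) applied to $g=g_1\cdot g_2$, I would use part (1) to argue that on a fixed Boolean type each factor is either identically zero or never zero, so an unbounded factor combined with a not-identically-zero factor yields unboundedness. For (ii) applied to $g=g_1\cdot g_2$ with both $c_B,c_{B'}$ finite, I would case-split on whether $c_B,c_{B'}$ are zero or positive: on Boolean types with $c_B>0$, both factors are pinned to singleton finite positive values by (i) together with the bound $c_B$; on types with $c_B=0$, one of $g_1,g_2$ is identically zero on that type, using that $ab=0$ iff $a=0$ or $b=0$ in $\Ninf$. Comparing the induced constants for $g_1$ and $g_2$ across $B$ and $B'$ then triggers the inductive hypothesis (ii) for one of the factors, yielding a contradiction unless $c_B=c_{B'}$. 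The ${}^\infty$ case is similar but simpler, since $V_B^{g_1^\infty}\subseteq\{0,1,\infty\}$ and singletons $\{0\}$ or $\{1\}$ correspond uniquely (via part (1)) to $V_B^{g_1}=\{0\}$ or $\{1\}$ respectively, so distinct finite values of $V_B^g$ and $V_{B'}^g$ directly trigger the IH (ii) for $g_1$.
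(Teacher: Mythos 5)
Parts (1) and (2) of your proposal match the paper's treatment exactly --- a routine structural induction using monotonicity of $+$, $\cdot$, and ${}^\infty$ and the fact that zeroness of each operation is determined by zeroness of its arguments. For part (3), however, your approach is genuinely different from the paper's, and it is worth contrasting the two.

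The paper argues directly: given any two atomic types $\sigma,\sigma'$ with $g[\sigma]>g[\sigma']$, one may reduce (by changing one literal pair at a time) to the case where $\sigma$ and $\sigma'$ differ on precisely one pair $\alpha,\neg\alpha$ with $\sigma(\alpha)>\sigma'(\alpha)$. Then $\sigma_n$ is defined by replacing $\sigma(\alpha)$ with $\max(n,\sigma(\alpha))$, and a short induction on sub-expressions $f$ shows that $f[\sigma]>f[\sigma']$ already implies $f[\sigma_n]\geq n$: at each step, at least one summand (resp.\ factor) must inherit the strict inequality, and the other factor stays positive by monotonicity; for $\infty\cdot h$ and $h^\infty$ the strict inequality forces $h[\sigma]>h[\sigma']$ so the boosted value becomes $\infty$. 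This is local in the sense that only a single pair of types and a single hot indeterminate need to be tracked.

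Your proposal replaces this with a joint induction on two global structural claims about the value sets $V^g_B$, namely (i) on each Boolean type $B$, $g$ is either constant or unbounded, and (ii) at most one finite constant can appear among the $V^g_B$-singletons. This does work --- I checked the delicate multiplicative and ${}^\infty$ cases and they go through, making essential use of part (1) to show that each factor is either identically zero or never zero on a fixed Boolean type, and of (i) to rule out bounded non-constant factors. The trade-off is that you prove something strictly stronger (a structural description of when $g$ is constant on a Boolean type, plus the ``at most one finite constant'' property), at the cost of a considerably heavier induction with two interlocking invariants. The paper's boosting argument buys you the same conclusion with a one-claim induction that never needs to organize the types by Boolean class. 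Both routes are correct; the paper's is shorter and more in the spirit of extension-property arguments used elsewhere in the paper, but your version gives more information about where the non-constancy lives and could be preferred if one later needs the classification of singleton Boolean types.
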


\begin{proof}
The first two claims follow by a straightforward induction on $g$, since
the operations $+,\cdot,^\infty$ are monotone.
For the third claim, assume that there exist atomic types $\sigma,\sigma'$ 
with $g[\sigma]>g[\sigma']$. 
We can then, without loss of generality, choose $\sigma$ and $\sigma'$
so that they differ on precisely one pair $\alpha,\neg \alpha$ of complementary literals,
i.e. $\sigma(\beta)=\sigma'(\beta)$ for all $\beta\not\in\{\alpha,\neg\alpha\}$.
Further we assume that $\sigma(\alpha)>\sigma'(\alpha)$. For each $n$, we then consider
the type $\sigma_n$ such that $\sigma_n(\beta)=\sigma(\beta)$ for all $\beta\neq\alpha$ and
$\sigma_n(\alpha)=\max(n,\sigma(\alpha))$.
We claim that, for every $\infty$-expression $f$ with $f[\sigma]>f[\sigma']$,  we have that $f[\sigma_n]\geq n$.
The only atomic expression $f$ with $f[\sigma]>f[\sigma']$ is $f=X_\alpha$, 
for which $f[\sigma_n]\geq n$.
If $f=f_0+f_1$ then $f_i[\sigma]>f_i[\sigma']$ for $i=0$ or $i=1$, and hence, by induction hypothesis
$f_i[\sigma_n]\geq n$, and hence also $f[\sigma_n]\geq n$. If $f=f_0\cdot f_1$, 
then also  $f_i[\sigma]>f_i[\sigma']$ for $i=0$ or $i=1$,  so $f_i[\sigma_n]\geq n$;
moreover $f_{1-i}[\sigma]>0$ and since $\sigma\leq\sigma_n$, also
$f_{1-i}[\sigma_n]>0$. It follows that $f[\sigma_n]\geq n$.
For $f=\infty\cdot h$ or $f=h^\infty$ we have that $f[\sigma]>f[\sigma']$ implies that
$h[\sigma]>h[\sigma']$ and hence $h[\sigma_n]\geq n$, which implies that $f[\sigma_n]=\infty$.  
\end{proof}

\begin{lemma}
\label{lemInftyexprLarge}
Let $g(\XX i)$ be an $\infty$-expression and consider $\sigma,\sigma' \from \XX i \to \Ninf$ such that $\sigma \simbool \sigma'$ and for all $X_\beta \in \XX i$, either $\sigma(X_\beta) = \sigma'(X_\beta)$ or $\sigma'(X_\beta) \ge 2$.
Then $g[\sigma] \ge 2$ implies $g[\sigma'] \ge 2$.
\end{lemma}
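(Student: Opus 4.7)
The plan is to induct on the structure of the $\infty$-expression $g$, but the induction requires a companion fact: for any $g$, $g[\sigma] \ge 1$ if and only if $g[\sigma'] \ge 1$. This is immediate from the first bullet of \cref{lemInftyexprTrivial}, whose hypothesis is precisely $\sigma \simbool \sigma'$ and whose conclusion is $g[\sigma] = 0 \iff g[\sigma'] = 0$.

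With this in hand, the atomic cases $g \in \{0,1,\infty\}$ are trivial, and for $g = X_\beta$ the lemma's hypothesis directly yields $\sigma'(X_\beta) \ge 2$ whenever $\sigma(X_\beta) \ge 2$ (either via the equality branch or via the explicit second disjunct). For $g = g_1 + g_2$ with $g[\sigma] \ge 2$, one of two things happens: either some $g_i[\sigma] \ge 2$, in which case the induction hypothesis transfers this to $\sigma'$ and hence $g[\sigma'] \ge 2$; or both $g_i[\sigma] = 1$, in which case the companion fact gives $g_i[\sigma'] \ge 1$ for both $i$, and again $g[\sigma'] \ge 2$.

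The product case $g = g_1 \cdot g_2$ is the slightly delicate one. Here $g[\sigma] \ge 2$ forces both $g_i[\sigma] \ge 1$ (otherwise the product would be $0$) and at least one factor to be $\ge 2$. The companion fact then promotes both to $g_i[\sigma'] \ge 1$, the induction hypothesis promotes the large factor to $\ge 2$, so the product is again $\ge 2$. For $g = g_1^\infty$, observe that $g_1^\infty$ can only take the values $0$, $1$, or $\infty$, so $g[\sigma] \ge 2$ forces $g[\sigma] = \infty$ and therefore $g_1[\sigma] \ge 2$; the induction hypothesis then gives $g_1[\sigma'] \ge 2$ and hence $g[\sigma'] = \infty$.

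The main conceptual subtlety is that the lemma is not a disguised monotonicity statement: the hypothesis permits $\sigma(X_\beta)$ to strictly exceed $\sigma'(X_\beta)$ on indeterminates where $\sigma'(X_\beta) \ge 2$, so we cannot simply invoke the second bullet of \cref{lemInftyexprTrivial}. What makes the argument go through is the specific threshold $2$: a value $\ge 2$ in $g[\sigma]$ can only arise either from indeterminates already carrying a value $\ge 2$ in $\sigma$ (and such values are preserved in $\sigma'$ by explicit hypothesis), or by combining strictly positive contributions through addition or multiplication (and strict positivity is preserved via the $\simbool$-based companion fact).
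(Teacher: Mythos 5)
Your proposal is correct and follows essentially the same structural induction as the paper's own proof: the "companion fact" you state is exactly the first bullet of \cref{lemInftyexprTrivial}, which the paper also invokes in the $\cdot$ and $+$ cases, and your case analysis for $X_\beta$, $+$, $\cdot$, and $^\infty$ matches the paper's line by line. The only cosmetic difference is that you observe $h^\infty \in \{0,1,\infty\}$ rather than noting directly that $h^\infty[\sigma]\ge 2$ forces $h[\sigma]\ge 2$, which is an equivalent shortcut.
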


Notice that this lemma applies in particular to $\sigma = \rhopi_{\ta} s$ (the type $\rhopi_\ta$ extended by a selector function $s$) and $\sigma' = \rhopi_{\ta,b}$.
Indeed, if $\rhopi_{\ta,b} \simbool \rhopi_\ta s$ and $\rhopi_{\ta,b}$ is a large extension of $\rhopi_\ta$, the condition in the lemma is satisfied.

\begin{proof}
By induction on $g$.
The claim is trivial for constants.
\begin{itemize}
\item If $g = X_\beta$, then either $g[\sigma] = g[\sigma']$ or $g[\sigma'] \ge 2$, so the claim holds.
\item If $g = g_1 \cdot g_2$ and $g[\sigma] \ge 2$, then w.l.o.g.\ $g_1[\sigma] \ge 2$ and $g_2[\sigma] \neq 0$.
By induction and $\sigma \simbool \sigma'$, the same holds for $\sigma'$ and we have $g_2[\sigma'] \ge 2$.
\item If $g = h^\infty$ and $g[\sigma] \ge 2$, then also $h[\sigma] \ge 2$ and the claim follows by induction.
\item If $g = g_1 + g_2$ and $g[\sigma] \ge 2$, we distinguish two cases.
First assume that $g_1[\sigma] = g_2[\sigma] = 1$.
Since $\sigma \simbool \sigma'$, it follows that $g_1[\sigma'],g_2[\sigma'] \neq 0$ and hence $g[\sigma'] \ge 2$.
Otherwise, w.l.o.g.\ $g_1[\sigma] \ge 2$ and the claim follows by induction.
\qedhere
\end{itemize}
\end{proof}

The definition of the arithmetic expressions $g_\psi(\XX i)$ is to some extent analogous to the 
one for polynomials $f_\psi(\XX i)$ for lattice semirings, but the algebraic operations are no longer idempotent
and the rules for the quantifiers are different and use the constant $\infty$.

\begin{Definition}
Given $\psi(x_1,\dots,x_i)\in\FO(\tau)$ in negation normal form and written with the excluding quantifiers $\Ene$ and $\Ane$, we define the associated $\infty$-expression $g_\psi(\XX i)$ inductively as follows.
\begin{itemize}
\item For (in)equalities, literals, disjunctions and conjunctions, the definition is identical to \cref{def-polynomial}. That is, we set $g_\psi \in \{0,1\}$ for (in)equalities, $g_\psi = X_\alpha$ and $g_\psi = X_{\neg \alpha}$ for (negated) atoms, $g_{\psi\lor\phi}\coloneqq g_\psi + g_\phi$ and $g_{\psi\land\phi}\coloneqq g_\psi \cdot g_\phi$.

\item For $\psi(\tx)=\Ene y\ \phi(\tx,y)$ and $\psi'(\tx)=\Ane y\ \phi'(\tx,y)$, let $\YY i=\XX{i+1}\setminus \XX i$.
Let further $S$ be the set of all consistent selector functions $s \from \YY i \to \{0,\infty\}$ (that is, one of $X_\alpha$, $X_{\neg \alpha}$ is mapped to $0$, the other one to $\infty$, for every atom $\alpha$).
Now set
\[
    g_\psi(\XX i) \coloneqq \infty \cdot \left( \sum_{s \in S} g_\phi(\XX i,s(\YY i)) \right), \quad
    g_{\psi'}(\XX i) \coloneqq \left( \prod_{s \in S} g_{\phi'}(\XX i,s(\YY i)) \right)^\infty.
\]
\end{itemize}
\end{Definition}

Notice that a single positive value in the sum or a single value $\ge 2$ in the product will result in the value $\infty$.
This is justified by the $(k,\gamma)$-extension property, which guarantees that every selector function has not just one, but many large realisations which, as $n$ grows, lead to arbitrarily large values of $\psi$ or $\psi'$.

\begin{Example}
Recall $\psi = \Ene x(\neg Exx\land\Ane y(Exy\lor (\neg Exy\land \Ene z(Exz\land Ezy))))$ of \cref{exPolynomials}.
Using the same notation, we obtain the following $\infty$-expressions (we always simplify expressions without indeterminates, e.g.\ $\infty \cdot (\infty + 0) = \infty$ in the first step).

\begin{center}
\begin{tabular}{|c|c|}
\hline
$\phi$&$g_\phi$\\ \hline\hline
$Exz \land Ezy$&$ZU$\\
$\Ene z ( Exz \land Ezy)$& $\infty$\\ 
$\neg Exy\land \Ene z(Exz\land Ezy)$& $\nn{Y} \cdot \infty$\\
$Exy\lor (\neg Exy\land \Ene z(Exz\land Ezy))$& $Y + \nn{Y} \cdot \infty$\\
$\Ane y(Exy\lor (\neg Exy\land \Ene z(Exz\land Ezy))))$&$\infty$\\
$\neg Exx\land\Ane y(Exy\lor (\neg Exy\land \Ene z(Exz\land Ezy))))$&$\nnX \cdot \infty$\\
$\psi$&$\infty$\\
\hline
\end{tabular}
\end{center}

It should come as no surprise that the resulting value is positive, since $\psi$ is asymptotically almost surely true in Boolean semantics.
But notice that \cref{exPolynomials} resulted in $e$, representing the smallest positive value, whereas we obtain the largest value $\infty$ in the natural semiring.
\end{Example}

\begin{Example}
For an example with more complicated $\infty$-expressions, consider $\psi = \Ene x (Px \land \Ane y (x \neq y \lor \neg P x))$.
Using the indeterminate $X$ for $Px$, we obtain:

\begin{center}
\begin{tabular}{|c|c|}
\hline
$\phi$&$g_\phi$\\ \hline\hline
$x \neq y \lor \neg P x$ & $1 + \nnX$\\
$\Ane y (x \neq y \lor \neg P x)$ & $(1 + \nnX)^\infty$\\ 
$Px \land \Ane y (x \neq y \lor \neg P x)$& $X \cdot (1 + \nnX)^\infty$\\
$\psi$& $\infty$\\
\hline
\end{tabular}
\end{center}

For an example resulting in $0$, replace the subformula $x \neq y$ by $x = y$ to obtain $X \cdot (0+\nnX)^\infty$ in the third and thus $0$ in the last row.
\end{Example}

The main technical result of this section is the following theorem, similar to \cref{ext-polynomials,delta-extension}.
We again prove that the expressions $g_\psi$ provide an adequate description of formulae $\psi(\tx)$, now including the special case where the value of $\psi$ becomes arbitrarily large.

\begin{theorem}\label{large-extension} Let 
$\gamma>0$ and let $\pi\from\Lit_{[n]}(\tau)\to \N$ be an $\N$-interpretation
on universe $[n]$ with the strong $(k,\gamma)$-extension property.
Further assume that $\gamma n\geq 2$.
Then, for every formula $\psi(x_1,\dots,x_i)\in\FO^k(\tau)$ with the
associated $\infty$-expression $g_\psi$ and every tuple $\ta$ of distinct elements from $[n]$ either\\
{\rm (1)}\quad  $\fmla \psi \ta = g_\psi[\rho^\pi_{\ta}]\in\N$, or\\
{\rm (2)}\quad  $g_\psi[\rho^\pi_{\ta}]=\infty$ and $\fmla \psi \ta \geq \gamma n$.
\end{theorem}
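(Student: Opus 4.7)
The proof will go by structural induction on $\psi$, following the inductive definition of $g_\psi$ and closely mirroring the pattern of \cref{ext-polynomials} and \cref{delta-extension}, but with the two classes (1) and (2) replacing the $(0)/(\delta)$ dichotomy. The base cases (equalities, inequalities, literals) are immediate because $g_\psi$ evaluates to exactly the semiring value of the literal under $\rhopi_{\ta}$, placing us in case (1). Disjunction and conjunction are a direct case analysis on which of (1) or (2) holds for each subformula: for disjunction, any subformula in case (2) forces the sum to be $\geq \gamma n$; for conjunction, the absence of divisors of zero in $\N$ together with the rule $0 \cdot \infty = 0$ makes the cases line up cleanly, and if one conjunct has $g = \infty$ with the other nonzero, then the semiring value satisfies $\pi\ext{\phi\land\theta} \geq \gamma n \cdot 1 = \gamma n$.

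The interesting step is the existential quantifier $\psi = \Ene y\,\phi(\tx,y)$, where $g_\psi = \infty \cdot \sum_{s\in S} g_\phi(\XX i, s(\YY i))$ with $s \from \YY i \to \{0,\infty\}$. If this evaluates to $0$, then $g_\phi[\rhopi_\ta s] = 0$ for every $s$; for each $b \in [n]\setminus\ta$, choose the selector function $s$ matching the Boolean type of $\rhopi_{\ta,b}$, observe $\rhopi_{\ta,b} \le \rhopi_\ta s$ componentwise, and apply the monotonicity clause of \cref{lemInftyexprTrivial} to get $g_\phi[\rhopi_{\ta,b}] = 0$. The induction hypothesis then gives $\pi\ext{\phi(\ta,b)}=0$ and hence $\pi\ext{\psi(\ta)}=0$. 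If $g_\psi = \infty$, some $s^*$ has $g_\phi[\rhopi_\ta s^*] \ge 1$. The strong $(k,\gamma)$-extension property supplies at least $\gamma n$ distinct $b \in [n]\setminus\ta$ with $\rhopi_{\ta,b} \simbool \rhopi_\ta s^*$ and $\rhopi_{\ta,b}$ large. For these, \cref{lemInftyexprLarge} (when $g_\phi[\rhopi_\ta s^*] \ge 2$) or the zero-preservation clause of \cref{lemInftyexprTrivial} (when $g_\phi[\rhopi_\ta s^*] = 1$) yields $g_\phi[\rhopi_{\ta,b}] \neq 0$, so by induction $\pi\ext{\phi(\ta,b)} \ge 1$; summing gives $\pi\ext{\psi(\ta)} \ge \gamma n$.

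The universal quantifier $\psi = \Ane y\,\phi(\tx,y)$, with $g_\psi = (\prod_s g_\phi)^\infty$, is the main obstacle and splits into three subcases according to whether the inner product $P = \prod_s g_\phi[\rhopi_\ta s]$ equals $0$, $1$, or is $\ge 2$. The case $P=0$ works like the existential case above: some $s$ has $g_\phi[\rhopi_\ta s]=0$, the extension property realises an appropriate $b$ with $\rhopi_{\ta,b} \simbool \rhopi_\ta s$, monotonicity gives $g_\phi[\rhopi_{\ta,b}]=0$, hence by induction $\pi\ext{\phi(\ta,b)}=0$ and the product collapses. For $P=1$, every factor is $1$; for each $b$, pick $s$ matching the Boolean type of $\rhopi_{\ta,b}$ and combine $\rhopi_{\ta,b} \le \rhopi_\ta s$ with monotonicity and zero-preservation to force $g_\phi[\rhopi_{\ta,b}]=1$, so by induction every $\pi\ext{\phi(\ta,b)}=1$ and the product is $1$. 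For $P \ge 2$ we have $g_\psi = \infty$; some $s^*$ has $g_\phi[\rhopi_\ta s^*] \ge 2$, and the extension property yields $\ge \gamma n$ many $b$ with $\rhopi_{\ta,b} \simbool \rhopi_\ta s^*$ and large, to which \cref{lemInftyexprLarge} applies and gives $g_\phi[\rhopi_{\ta,b}] \ge 2$, hence $\pi\ext{\phi(\ta,b)} \ge 2$ by induction. For the remaining $b \in [n]\setminus\ta$, the matching $s'$ has $g_\phi[\rhopi_\ta s'] \ge 1$ since $P \neq 0$, and we obtain $\pi\ext{\phi(\ta,b)} \ge 1$ as before. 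Multiplying, $\pi\ext{\psi(\ta)} \ge 2^{\gamma n} \ge \gamma n$, where the last inequality uses the assumption $\gamma n \ge 2$. The chief subtlety throughout is that the formal values $0,\infty$ assigned by selector functions differ from the actual finite natural numbers appearing in realisations; bridging this gap is precisely what \cref{lemInftyexprTrivial,lemInftyexprLarge} are designed for, and their careful application at each quantifier step is what makes the induction go through.
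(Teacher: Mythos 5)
Your proof is correct and takes essentially the same route as the paper: structural induction on $\psi$, with the quantifier steps handled via the strong $(k,\gamma)$-extension property and the bridging Lemmas~\ref{lemInftyexprTrivial} and~\ref{lemInftyexprLarge}, using the same three-way split ($P=0,1,\ge 2$) at universal quantifiers. The only cosmetic difference is that at a few places you invoke the monotonicity clause of \cref{lemInftyexprTrivial} (via $\rhopi_{\ta,b} \le \rhopi_\ta s_b$) where the paper uses the zero-preservation clause, and you appeal to \cref{lemInftyexprLarge} once in the existential case where it is not actually needed; both are harmless.
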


\begin{proof}
We proceed by induction on $\psi$. If $\psi$ is a literal, it is immediate from
the definition of $g_\psi$ that case (1) holds.

\medskip\noindent For $\psi=\phi\lor\theta$ we have $g_\psi\coloneqq g_\phi+g_\theta$. 
If case (1) holds for both $\phi$ and $\theta$ then also for $\psi$.
Otherwise case (2) applies for $\phi$ or $\theta$, and then obviously also for $\psi$.
The argument for $\psi=\phi\land\theta$ is analogous (taking into account the case that one
of formulae evaluates to 0).

\medskip\noindent Let now $\psi(\tx)=\Ene y\ \phi(\tx,y)$.
We will show that case (1) applies if $\fmla \psi \ta = \exparho \psi \ta = 0$, otherwise case (2) applies.
We recall:
\[
  \fmla \psi \ta = \sum_{b \in A \setminus \ta} \fmla \phi {\ta,b}, \qquad
  \exparho \psi \ta = \infty \cdot \left( \sum_{s \in S} \expa \phi {\rhopi_\ta, s(\YY i)} \right).
\]

We begin with a general observation that is used throughout the proof:
For every element $b \in A \setminus \ta$, we can define a selector function $s_b \in S$ with $\rhopi_{\ta,b} \simbool \rhopi_\ta s_b$ by simply setting $s_b(X_\beta) = \infty$ precisely if $\rhopi_{\ta,b}(\beta) > 0$.
Conversely, every selector function $s \in S$ is consistent, so there is a type $\rho^+ \in \extend(\rhopi_\ta)$ with $\rho^+ \simbool \rhopi_\ta s$ (we may set $\rho^+(\beta) = 1$ whenever $s(X_\beta) = \infty$).
The $(k,\gamma)$-extension property then guarantees at least $\gamma n$ many elements $b \in A \setminus \ta$ such that $\rhopi_{\ta,b} \simbool \rhopi_\ta s$ and $\rhopi_{\ta,b}$ is large.

First assume that there is a selector function $s \in S$ with $\expa \phi {\rhopi_\ta,s(\YY i)} > 0$.
Then obviously $\exparho \psi \ta = \infty$.
There are at least $\gamma n$ many elements $b$ with $\rhopi_{\ta,b} \simbool \rhopi_\ta s$ as descried above.
By \cref{lemInftyexprTrivial}, this implies $\exparho \phi {\ta,b} > 0$.
Then also $\fmla \phi {\ta,b} > 0$ by induction (using either case (1) or (2)) and hence $\fmla \psi \ta \ge \gamma n$ for the sum, so case (2) holds.

Now assume that no such selector function exists, hence $\exparho \psi \ta = 0$.
If there was a $b$ with $\fmla \phi {\ta,b} > 0$, then $\exparho \phi {\ta,b} > 0$ by induction and the selector function $s_b$ contradicts our assumption.
Hence no such $b$ exists and we have $\fmla \psi \ta = 0$, so case (1) applies.

\medskip\noindent 
Finally, let $\psi(\tx)=\Ane y\,\phi(\tx,y)$.
Similarly to the previous case, we will show that case (1) only applies for the values $0,1$, otherwise case (2) applies.
We recall:
\[
  \fmla \psi \ta = \prod_{b \in A \setminus \ta} \fmla \phi {\ta,b}, \qquad
  \exparho \psi \ta = \left( \prod_{s \in S} \expa \phi {\rhopi_\ta, s(\YY i)} \right)^\infty.
\]

We proceed with a similar case distinction, but additionally account for one of the factors being $0$.
To this end, assume there is $s \in S$ with $\expa \phi {\rhopi_\ta, s(\YY i)} = 0$, hence also $\exparho \psi \ta = 0$.
We obtain an element $b$ with $\rhopi_{\ta,b} \simbool \rhopi_\ta s$ by the extension property.
By \cref{lemInftyexprTrivial} and case (1), this yields $0 = \exparho \phi {\ta,b} = \fmla \phi {\ta,b} = \fmla \phi \ta$, so case (1) applies.
Similarly, if $\fmla \phi {\ta,b} = 0$ for some $b$, then $\exparho \phi {\ta,b} = 0$ by case (1) and $\expa \phi {\rhopi_\ta, s_b(\YY i)} = 0$ for the induced selector function.

From now on, we thus have $\fmla \phi {\ta,b},\expa \phi {\rhopi_\ta, s(\YY i)} > 0$ for all $s \in S$.
First assume that there is a selector function $s$ with $\expa \phi {\rhopi_\ta, s(\YY i)} \ge 2$.
Then $\expa \psi \ta \ge 2^\infty = \infty$.
The extension property guarantees $\gamma n$ many elements $b$ such that $\rhopi_{\ta,b}$ is large and $\rhopi_{\ta,b} \simbool \rhopi_\ta s$.
All of these elements satisfy $\exparho \phi {\ta,b} \ge 2$ by \cref{lemInftyexprLarge} and thus $\fmla \phi {\ta,b} \ge 2$ by induction (using either case (1) or case (2)).
We thus have $\fmla \psi \ta \ge 2^{\gamma n} \ge \gamma n$ and case (2) applies.

Lastly, the only remaining case is that $\expa \phi {\rhopi_\ta, s(\YY i)} = 1$ for all selector functions.
Then also $\expa \psi \ta = 1^\infty = 1$.
If there was a $b$ with $\fmla \phi {\ta,b} \ge 2$, then also $\exparho \phi {\ta,b} \ge 2$ (using either case (1) or (2)).
By \cref{lemInftyexprLarge}, the induced selector function $s_b$ contradicts our assumption: $\expa \phi {\rhopi_\ta, s(\YY i)} \ge 2$.
Hence $\fmla \phi {\ta,b} = 1$ for all $b$ and thus $\fmla \psi \ta = 1$ as well.
\end{proof}

\begin{corollary}[0-1 law for $\FO$ on the natural semiring] \label{0-1-law-naturla}
Let $p\from\N\setminus\{0\}\to [0,1]$ be a probability distribution with $p[x\geq 2]>0$
and let $\tau$ be a relational vocabulary. Then for every sentence $\psi\in\FO(\tau)$
there either is a value $j\in\N$ such that $\lim_{n\to\infty}\mu_{n,p}[ \pi\ext{\psi}=j]=1$,
or  $\lim_{n\to\infty}\mu_{n,p}[ \pi\ext{\psi}>j]=1$ for all $j\in\N$.
\end{corollary}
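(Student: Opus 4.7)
The plan is to derive this corollary directly from the two substantial results already in hand, namely \cref{strong-ext} on the asymptotic almost sure satisfaction of the strong extension property, and \cref{large-extension} which characterises the value $\fml\psi$ on interpretations with that property in terms of the $\infty$-expression $g_\psi$. Since $\psi$ is a sentence, $g_\psi$ contains no indeterminates, so it evaluates (on the empty atomic type $\emptyset$) to a single fixed element $g_\psi[\emptyset] \in \Ninf$ determined syntactically from $\psi$. The entire proof amounts to splitting on whether this value is a natural number or $\infty$.

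First I would fix $k$ with $\psi\in\FO^k(\tau)$ and invoke \cref{strong-ext} to obtain a constant $\gamma>0$ such that $\mu_{n,p}[\pi \text{ has the strong }(k,\gamma)\text{-extension property}] \to 1$ exponentially fast. For all sufficiently large $n$ we also have $\gamma n \geq 2$, so the hypotheses of \cref{large-extension} are met on an event of probability tending to $1$. Applied to the empty tuple $\ta = \emptyset$ and the sentence $\psi$, the theorem then forces a dichotomy on every such $\pi$: either $\fml\psi = g_\psi[\emptyset] \in \N$, or $g_\psi[\emptyset] = \infty$ and $\fml\psi \geq \gamma n$.

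In the first case, $g_\psi[\emptyset] = j$ for some fixed $j\in\N$ depending only on $\psi$, and so
\[
   \mu_{n,p}[\pi\ext\psi = j] \;\geq\; \mu_{n,p}[\pi \text{ has the strong }(k,\gamma)\text{-extension property}] \;\longrightarrow\; 1,
\]
which gives the first alternative of the corollary. In the second case, $g_\psi[\emptyset]=\infty$, and on the same high-probability event we get $\pi\ext\psi \geq \gamma n$; hence for every fixed $j\in\N$ and every $n$ with $\gamma n > j$,
\[
   \mu_{n,p}[\pi\ext\psi > j] \;\geq\; \mu_{n,p}[\pi \text{ has the strong }(k,\gamma)\text{-extension property}] \;\longrightarrow\; 1,
\]
which is exactly the second alternative.

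There is really no substantive obstacle here: all the difficulty has been absorbed into the two preceding results, and the only remaining point is the purely combinatorial observation that for a \emph{sentence} the $\infty$-expression $g_\psi$ has no free indeterminates, so its value is a single element of $\Ninf$ rather than depending on a type. If one wanted to strengthen the statement, one could also note exponential convergence in both alternatives, inherited directly from the exponential convergence in \cref{strong-ext}.
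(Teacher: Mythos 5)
Your proof is correct and follows exactly the paper's route: it derives the corollary by combining \cref{strong-ext} with \cref{large-extension}, applied to the empty atomic type, and splits on whether $g_\psi[\emptyset]$ lies in $\N$ or equals $\infty$. Your write-up is somewhat more explicit than the paper's one-sentence proof (in particular, you correctly invoke \cref{strong-ext} to obtain the constant $\gamma$ and note that $\gamma n \geq 2$ eventually), but the underlying argument is identical.
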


\begin{proof}
For every sentence $\psi$ the associated $\infty$-expression $g_\psi$ contains no indeterminates
and thus evaluates to a value $g_\psi[\emptyset] \in \Ninf$.
Since random $\N$-interpretations almost surely have the strong $(k,\gamma)$-extension
property, for all $k,\gamma$, it follows that
$\lim_{n\to\infty}\mu_{n,p}[ \pi\ext{\psi} = g_\psi[\emptyset]]=1$ in case $g_\psi[\emptyset]\in\N$, or
$\lim_{n\to\infty}\mu_{n,p}[ \pi\ext{\psi}>j]=1$ for all $j\in\N$, in case $g_\psi[\emptyset]=\infty$.
\end{proof}

The 0-1 law induces a partition of the sentences $\FO(\tau)$ into classes $(\Phi_j)_{j\in\Ninf}$.
For $j \in \N$, the class $\Phi_j$ contains those sentences $\psi$ for which
$\lim_{n\to\infty}\mu_{n,p}[ \pi\ext{\psi}=j]=1$ (and hence $\lim_{n\to\infty}\mu_{n,p}[ \pi\ext{\psi} > j]=0$).
The additional class $\Phi_\infty$ contains the sentences $\psi$ which almost surely evaluate to unboundedly large values, so $\lim_{n\to\infty}\mu_{n,p}[ \pi\ext{\psi}>j]=1$ for all $j\in\N$.
We first observe that these classes align with the Boolean case.

\begin{lemma}
For every relational first-order sentence $\psi$, we have $\psi \in \Phi_0$ if, and only if, $\psi$ is asymptotically almost surely false under Boolean semantics.
\end{lemma}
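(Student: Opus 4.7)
The plan is to route the statement through the Boolean reduct of an $\N$-interpretation. Given any $\N$-interpretation $\pi\from\Lit_A(\tau)\to\N$, let $\AA_\pi$ denote the Boolean $\tau$-structure on $A$ in which $R\ta$ holds precisely when $\pi(R\ta)>0$. The crucial observation is that, because $\N$ has no divisors of $0$, a routine induction on $\psi$ (in negation normal form, with the excluding quantifiers) shows that for every formula $\psi(\tx)$ and every tuple $\ta$ of pairwise distinct elements of $A$,
\[
\fmla\psi\ta=0 \quad\Longleftrightarrow\quad \AA_\pi\not\models\psi(\ta).
\]
The induction steps for atoms and (in)equalities are immediate from the definitions; for $\lor$ and $\E$ (i.e.\ sums in $\N$) the equivalence holds since a sum of naturals is $0$ iff every summand is $0$; for $\land$ and $\A$ (products in $\N$) it holds since $\N$ has no zero divisors.

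Next I would observe that the probability measure $\mu_{n,p}$ on $\N$-interpretations projects onto the standard random-Boolean distribution on $\Str_{n,1/2}(\tau)$: by construction, each atom $R\ta$ independently receives value $>0$ with probability $1/2$ (and $=0$ otherwise), which is exactly the event $\AA_\pi\models R\ta$. Hence for every $\psi\in\FO(\tau)$,
\[
\mu_{n,p}\bigl[\pi\ext\psi = 0\bigr] \;=\; \mu_{n,1/2}\bigl[\AA_\pi\not\models\psi\bigr].
\]

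Finally, I would invoke the classical 0-1 law of Glebskii et al.\ and Fagin for the uniform measure on finite Boolean $\tau$-structures: for every relational $\psi\in\FO(\tau)$, the right-hand side converges to either $0$ or $1$, and converges to $1$ precisely when $\psi$ is asymptotically almost surely false under Boolean semantics. Combined with the previous identity, this gives $\psi\in\Phi_0$ iff the classical asymptotic probability of $\psi$ is $0$, which is the desired equivalence.

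There is no serious obstacle here; the only point requiring care is the induction establishing the equivalence $\pi\ext{\psi(\ta)}=0 \Leftrightarrow \AA_\pi\not\models\psi(\ta)$, where one must use absence of zero divisors in the conjunction and universal-quantifier cases. One could alternatively bypass the direct appeal to the classical 0-1 law by combining Theorem \ref{large-extension} (so that $\pi\ext\psi=0$ iff $g_\psi[\emptyset]=0$, almost surely) with the fact that the Boolean reduct of an $\N$-interpretation with the strong $(k,\gamma)$-extension property satisfies the usual Boolean $k$-extension property, but the route via the classical theorem is shorter.
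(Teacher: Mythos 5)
Your proposal is correct and essentially identical to the paper's own argument: both route through the Boolean reduct $\AA_\pi$, observe that $\pi\ext\psi>0$ iff $\AA_\pi\models\psi$, note that $\mu_{n,p}$ projects to the standard random-Boolean distribution, and conclude. The only cosmetic difference is that you prove the compatibility of $\N$-semantics with Boolean semantics by a short induction (using that $\N$ has no zero divisors), whereas the paper simply cites this as a known general fact about positive semirings from \cite{GraedelTan17}.
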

\begin{proof}
It is a general observation (see, e.g.\ \cite{GraedelTan17}) that semiring semantics in $\N$ (or any positive semiring) and Boolean semantics are compatible in the sense that for any $\N$-interpretation $\pi$ over universe $[n]$ and any sentence $\psi$, we have $\pi \ext \psi > 0$ if, and only if, $\AA_\pi \models \psi$.
Here, $\AA_\pi$ is the Boolean structure on $[n]$ induced by $\pi$ (i.e.\ $\AA \models \alpha$ iff $\pi(\alpha) > 0$, for all literals $\alpha$).

Recall that for the measure $\mu_{n,p}$ over the semiring $\N$, we first randomly decide for each relational atom whether $\alpha$ or $\neg \alpha$ shall be true (say with probability $\frac 1 2$), and then assign a random positive value from $\N \without 0$ to the true literal.
It thus makes no difference whether we consider random Boolean structures or the Boolean structures induced by random $\N$-interpretations.
That is, $\mu_{n,\frac 1 2}(\psi) = \mu_{n,p}[\pi \ext \psi > 0]$.
Hence also $\lim_{n\to\infty} \mu_{n,\frac 1 2}(\psi) = \lim_{n\to\infty} \mu_{n,p}[\pi \ext \psi > 0]$ and the claim follows.
\end{proof}

Moreover, there are trivial examples showing that all classes $\Phi_j$ are non-empty.
Trivially false sentences such as $\E x(x\neq x)$ are in $\Phi_0$, whereas $\E x(x=x)\in\Phi_\infty$,
and for any $j\in\N\setminus\{0\}$, we have that $\bigvee_{1\leq i\leq j} \A x_i(x_i=x_i) \in\Phi_j$.
However, for all $j\not\in\{0,\infty\}$ there are, in a sense, \emph{only} trivial examples of sentences in $\Phi_j$,
whereas all ``interesting'' sentences are either in $\Phi_0$ or in $\Phi_\infty$, i.e.
are almost surely false, or almost surely have unboundedly large truth values.
To make this precise, we introduce the following notion of trivial formulae.

\begin{Definition}
All formulae $\phi(\tx)$ with $g_\phi\equiv 0$ are called almost surely false. 
The class of \emph{trivial} formulae in $\FO(\tau)$
is defined by induction:
\begin{itemize}
\item Every formula of form $x=x$ or $x\neq y$ (for distinct variables $x,y$) is trivial.
\item Conjunctions of trivial formulae are trivial.
\item Disjunctions of trivial formulae with almost surely false formulae are trivial.
\item Formulae of form $\Ane x \; \psi$ are trivial, if $\psi$ is trivial.
\end{itemize}
\end{Definition} 

That we call such formulae ``trivial'' should be taken with a grain of salt.
They are built from trivial equalities and inequalities, but with the additional building block
of almost surely false formulae, and it is not really trivial (but a \pspace-complete problem) to decide
whether a given formula is almost surely false. Obviously, $g_\psi\equiv 1$ for
all trivial formulae $\psi$. In particular, 
all trivial sentences are in $\Phi_1$.

We next observe that the only way to build a sentence that is neither almost surely
false, nor evaluates to unboundedly large values, is to combine trivial sentences
by disjunctions and conjunctions.

\begin{proposition}
If $\psi\in\Phi_j$ for $j\not\in\{0,\infty\}$, then $\psi$ is a positive Boolean combination
of trivial sentences.
\end{proposition}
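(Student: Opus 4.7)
The plan is to prove a strengthened statement by structural induction on formulae $\phi(\tx)$: whenever $g_\phi$ takes only finite values on consistent atomic types---equivalently, by the third part of \cref{lemInftyexprTrivial}, $g_\phi$ is constantly some $c \in \N$---the formula $\phi$ is almost surely false if $c = 0$, is equivalent to a trivial formula if $c = 1$, and is equivalent to a positive Boolean combination of trivial formulae if $c \ge 2$. Applied to the sentence $\psi$, whose associated $\infty$-expression satisfies $g_\psi[\emptyset] = j \in \N \without 0$ by \cref{large-extension} (the asymptotic valuation $j \neq 0,\infty$ forces $g_\psi[\emptyset]$ to lie in $\N \without 0$), this immediately yields the proposition.

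The atomic and propositional cases are routine. Literals have non-constant $g_\phi$ so the hypothesis never applies, while (in)equalities give $g_\phi \in \{0,1\}$ matching the two smallest cases directly. For a disjunction or conjunction $\phi = \phi_1 \circ \phi_2$, I first use \cref{lemInftyexprTrivial} to deduce that $g_{\phi_1}$ and $g_{\phi_2}$ must themselves be constant and finite: any unbounded component would propagate through $+$ or $\cdot$ and prevent $g_\phi$ from being constant. One then applies the induction hypothesis to each $\phi_i$ and combines using the closure clauses in the definition of trivial. The quantifier cases exploit the special role of $\infty$ in the $\infty$-expression: for $\Ene y \phi_1$ the outer factor $\infty$ rules out any finite nonzero value entirely, and the value $0$ forces $\phi_1$---and hence $\Ene y \phi_1$---to be almost surely false. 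For $\Ane y \phi_1$ the only finite nonzero possibility is $1 = 1^\infty$, which requires $g_{\phi_1}[\sigma, s] = 1$ for every consistent extension, so $g_{\phi_1} \equiv 1$; by induction $\phi_1$ is then (equivalent to) a trivial formula, and the closure of the trivial class under $\Ane$ finishes this case.

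The main obstacle lies in the disjunction case when one of the disjuncts is almost surely false. For $c = 1$ with $\phi_1 \lor \phi_2$ splitting as $g_{\phi_1} \equiv 0$ and $g_{\phi_2} \equiv 1$, the third clause of the definition of trivial (disjunction of a trivial formula with an almost surely false one) applies directly. For $c \ge 2$ with $\phi_1$ almost surely false and $\phi_2$ a positive Boolean combination $B(T_1,\dots,T_k)$ of trivial formulae, I would prove a short bookkeeping sub-lemma by induction on the structure of $B$: the base case $B = T_i$ is again the third clause of the definition; the conjunctive step $B = B' \land B''$ uses the distributive law $\phi_1 \lor (B' \land B'') \equiv (\phi_1 \lor B') \land (\phi_1 \lor B'')$; and the disjunctive step $B = B' \lor B''$ is immediate by regrouping. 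With this sub-lemma in hand every inductive step closes cleanly and the proposition follows.
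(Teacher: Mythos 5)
Your overall plan mirrors the paper's: observe via the $\infty$ in the expressions $g_{\Ene y\,\phi}$ and $g_{\Ane y\,\phi}$ that quantified subformulae can only contribute $g\in\{0,1,\infty\}$, so a finite $j>0$ is assembled by $\lor,\land$, and then reduce to showing that formulae with $g_\phi\equiv 1$ are trivial. The paper performs exactly this reduction and then carries out a single structural induction confined to the case $g_\phi\equiv 1$, where the third clause of the definition of triviality ("a disjunction of a trivial formula with an almost surely false one is trivial") absorbs the problematic $g\equiv 0$ disjunct directly. Your proposal instead runs a unified induction for $c\in\{0\}\cup\{1\}\cup[2,\infty)$, which is where the gap appears.

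The step that does not hold is the distributive rewriting $\phi_1 \lor (B'\land B'')\equiv (\phi_1\lor B')\land(\phi_1\lor B'')$. First, this is not an equality in the natural semiring: $a+bc\neq (a+b)(a+c)$ over $\N$ in general, and since $\phi_1$ being almost surely false gives $g_{\phi_1}\equiv 0$ but not $\pi\ext{\phi_1}=0$ for every finite $\pi$, the two sides need not even take the same values on concrete $\N$-interpretations. Second, and more to the point for the proposition as stated, the rewriting produces a syntactically different sentence. The proposition claims that $\psi$ \emph{is} a positive Boolean combination of trivial sentences, not merely that it is Boolean-equivalent to one, so replacing $\psi$ by its distributed normal form is precisely the move that the statement forbids. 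The same issue is latent in your systematic use of "equivalent to" in the inductive hypothesis for $c=1$: clause 3 of the trivial definition applies when the other disjunct literally \emph{is} trivial, and the paper's induction maintains the syntactic form throughout by staying in the $g_\phi\equiv 1$ regime, where that clause suffices without any rewriting. If you strengthen "equivalent to" to "is", then your $c\geq 2$ disjunction case with a $g\equiv 0$ disjunct has no route to closure; if you keep "equivalent to", the final invocation no longer proves the stated proposition. The fix is to adopt the paper's decomposition: prove by induction only that $g_\phi\equiv 1$ forces triviality (where every disjunction splits as $g\equiv 1$ plus $g\equiv 0$ by \cref{lemInftyexprTrivial} and clause 3 closes the step), and handle $1<j<\infty$ by the top-level observation that $\psi$ is built by $\lor,\land$ from quantified subsentences whose $\infty$-expressions evaluate in $\{0,1\}$.
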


\begin{proof}
Recall that $\psi\in\Phi_j$ if $g_\psi[\emptyset] = j$.
Clearly, via quantification we can only produce sentences
whose associated  $\infty$-expressions are equivalent  to
0,1, or $\infty$ (due to multiplication or exponentiation by $\infty$).
The only way to obtain $g_\psi[\emptyset] = j$, with $1<j<\infty$,
is therefore by addition and multiplication of $\infty$-expressions with $g_\phi[\emptyset] = 1$.
We thus have to show that every sentence $\psi$ with $g_\psi[\emptyset] = 1$ or, equivalently, $g_\psi \equiv 1$, must be trivial.
We proceed by induction, to prove this not just for sentences, but for every formula $\psi(\tx)$.

If $\psi=\phi\lor\theta$, then $g_\psi=g_\phi+g_\theta \equiv 1$ implies that
$g_\phi\equiv 1$ and $g_\theta \equiv 0$ (or vice versa), because
otherwise,  either $g_\phi$ and $g_\theta$ would assume
arbitrarily large values by \cref{lemInftyexprTrivial}.
Hence one of the two subsentences $\phi$ or $\theta$ 
must be almost surely false and the other one trivial, hence $\psi$ is trivial as well.
Similarly, if $\psi=\phi\land\theta$, then $g_\psi=g_\phi\cdot g_\theta \equiv 1$ implies that
$g_\phi\equiv g_\theta\equiv 1$ so both $\phi$ and $\theta$ must be trivial, and hence also $\psi$.
It is impossible that $\psi=\Ene y\ \phi$, since in that case $g_\psi$ evaluates to 0 or $\infty$.

Finally, if $\psi(\tx)=\Ane y\,\phi(\tx,y)$, then $g_\psi=(\prod_{s\in S} g_\phi(\XX i,s(\YY i))^\infty$.
By assumption $g_\psi\equiv 1$, so $g_\phi(\XX i,s(\YY i)\equiv 1$ for all $s$.
We claim that then $g_\phi\equiv 1$. If not, then then there is some consistent mapping $\rho \from \XX {i+1} \to \Ninf$ with $g_\phi[\rho]\neq 1$.
Let $\rho_0 = \rho \upharpoonright \XX i$ be the restriction of $\rho$ to $\XX i$ and let $s_\rho \in S$ be the selector function induced by $\rho$.
That is, $s_\rho(X_\beta) = 0$ if $\rho(\beta)=0$ and $s_\rho(X_\beta)=\infty$ otherwise.
For the combined mapping $\rho_0 s_\rho \from \XX {i+1} \to \Ninf$, we then have $\rho \le \rho_0 s_\rho$.
Hence, \cref{lemInftyexprTrivial} implies that if $g_\phi[\rho]=0$ then
also $g_\phi[\rho_0 s_\rho] = 0$, and if $g_\phi[\rho]\geq 2$, also $g_\phi[\rho_0 s_\rho]\geq 2$.
It follows that $g_\phi[\rho_0 s_\rho] \neq 1$, so $g_\phi(\XX i, s_\rho(\YY i)) \not\equiv 1$ (witnessed by $\rho_0$), contradiction.
Hence, we have established that $g_\phi \equiv 1$, and by induction hypothesis, 
it follows that $\phi$ must be trivial.  
Thus, also $\psi=\Ane y\ \phi$ is trivial.
\end{proof}

\begin{corollary} Let $\psi$ be a relational first-order sentences that is not trivial. 
Then either $\psi\in \Phi_0$ or $\psi \in \Phi_\infty$.
\end{corollary}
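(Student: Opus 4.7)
The plan is to prove the contrapositive by combining the 0-1 law for the natural semiring with the preceding proposition. First, I would invoke the 0-1 law, which tells us that every relational first-order sentence $\psi$ falls into exactly one class $\Phi_j$ with $j \in \N\cup\{\infty\}$. So if $\psi \notin \Phi_0 \cup \Phi_\infty$, then $\psi\in\Phi_j$ for some $j$ with $1\le j<\infty$, and I would apply the preceding proposition to conclude that $\psi$ is a positive Boolean combination of trivial sentences.

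The remaining step is to argue that any positive Boolean combination of trivial sentences is itself trivial. I would verify this by a short induction on the Boolean structure: a single trivial sentence is trivial by hypothesis; a conjunction of trivial sentences is trivial directly by the second clause of the definition; and a disjunction of trivial sentences is trivial by the third clause, read as the natural assertion that a disjunction is trivial whenever each of its disjuncts is either trivial or almost surely false (so in particular the all-trivial case is included).

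The step I expect to be the main subtlety is justifying this reading of the disjunction clause in the definition of \emph{trivial}: under a strict interpretation that would require the disjunction to actually mix trivial with almost surely false disjuncts, a sentence such as $\Ane x_1(x_1=x_1)\lor\Ane x_2(x_2=x_2)$ would fall outside the trivial class even though it lies in $\Phi_2$ and must be covered by the corollary. Adopting the natural reading of the clause, the induction closes at once and yields the corollary as the contrapositive of the proposition combined with the 0-1 law.
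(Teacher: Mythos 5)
Your approach matches the paper's implicit proof: the corollary appears without argument, as a direct consequence of the preceding proposition together with the 0-1 law, which is exactly the chain you spell out. You have also correctly flagged a genuine imprecision in the paper's definition of \emph{trivial}: under a literal reading of its disjunction clause, the sentence $\Ane x_1(x_1=x_1)\lor\Ane x_2(x_2=x_2)$ --- which the paper itself exhibits as an inhabitant of $\Phi_2$ --- would escape the class of trivial sentences and falsify the corollary as stated; the generous reading you adopt (a disjunction is trivial when each disjunct is either trivial or almost surely false) is the minimal repair that makes ``positive Boolean combination of trivial sentences'' coincide with ``trivial'' and lets the proposition deliver the corollary.
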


\medskip\noindent{\bf Remark. }
Instead of the almost sure valuation of a first-order sentence, one might also consider the
\emph{asymptotic expected valuation} $E_p[\pi\ext{\psi}]\coloneqq \lim_{n\to\infty} \sum_{j\in\N} j\cdot \mu_{n,p}[\pi\ext{\psi}=j]$.
However, due to the possibility of extremely large values of particular events with very low probability,
we lose the correspondence to Boolean semantics.
As an example consider the sentence
$\Delta\coloneqq \Ane x \Ane y \Ane z (E xy \land Eyz \land Ezx)$
on random graphs, saying that any three distinct nodes form a triangle.
Clearly this sentence is almost surely false on finite graphs since it only evaluates to a positive value
on cliques. However, for any probability distribution $p$ with $p[x \ge 2]=q>0$, we have that
with probability $q^{n(n-1)/2}$ a random $\N$-valued graph with vertex set $[n]$ is a clique where each edge
has a value $\geq 2$. On such a clique, the value of $\Delta$ is at least $8^{n(n-1)(n-2)}$.
Hence, although $\Delta$ evaluates to 0 on all non-cliques, we have that
$E_p[\pi\ext{\Delta}] \geq \lim_{n\to\infty} q^{n(n-1)/2}\cdot  8^{n(n-1)(n-2)} =\infty$.

\section{The random countable $K$-interpretation}

A classical fact about 0-1 laws in Boolean semantics is the $\omega$-categoricity of the theory $T$ of extension axioms,
and thus the existence of a unique countable $\tau$-structure that satisfies all of them.
For finite semirings, this fact extends in a straightforward way to our setting.

\begin{theorem} For every finite semiring $K$ and every  finite relational vocabulary $\tau$ there exists
a countable $K$-interpretation $\pi_{\mathcal R}\from\Lit_A(\tau)\to K$ that has the $k$-extension property for
all natural numbers $k$. Moreover $\pi_{\mathcal R}$ is unique up to isomorphism.
\end{theorem}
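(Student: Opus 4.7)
The plan is to prove existence and uniqueness separately, following the classical template for the countable random structure (Rado graph), adapted so that partial isomorphisms preserve full atomic $K$-types rather than just Boolean types.

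For \emph{existence}, take any probability distribution $p \from K^+ \to (0,1]$ on the finite semiring $K$ and let $\pi$ be a random $K$-interpretation over the countable universe $\omega$. For each fixed $k$, atomic $m$-type $\rho$ with $m<k$, extension $\rho^+\in\extend(\rho)$, and realisation $\ta$ of $\rho$ in $\pi$, the calculation in the proof of \cref{extension-property} gives a positive constant probability $f(\rho^+)$ that any single $b \in \omega\setminus\ta$ yields $\rho^\pi_{\ta,b}=\rho^+$; since the choices are independent across infinitely many $b$, the probability that \emph{no} such $b$ exists is $0$. Finiteness of $K$ bounds the number of atomic $m$-types and extensions by a constant (for each arity), and there are only countably many tuples $\ta$, so a countable union bound shows that $\pi$ almost surely has the $k$-extension property for every $k$ simultaneously, giving existence.

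For \emph{uniqueness}, run a back-and-forth between any two such interpretations $\pi_1 \from \Lit_{A_1}(\tau) \to K$ and $\pi_2 \from \Lit_{A_2}(\tau) \to K$. Enumerate $A_1 = (a_i)_{i<\omega}$ and $A_2 = (b_i)_{i<\omega}$, and build an increasing chain of finite partial bijections $f_0 \subseteq f_1 \subseteq \dots$ such that for the tuple $\ta$ enumerating the domain of $f_m$ we have $\rho^{\pi_1}_{\ta} = \rho^{\pi_2}_{f_m(\ta)}$. At odd stages, to force the next unused $a_i \in A_1$ into the domain, let $\rho^+$ be the atomic $(|\ta|+1)$-type of $(\ta, a_i)$ in $\pi_1$; then $f_m(\ta)$ realises $\rho^+\!\restriction\!\Lit_{|\ta|}(\tau)$ in $\pi_2$, so by the extension property of $\pi_2$ there is some $b \in A_2$ with $\rho^{\pi_2}_{f_m(\ta),b}=\rho^+$, and we set $f_{m+1} \coloneqq f_m \cup \{(a_i,b)\}$; even stages are symmetric, using the extension property of $\pi_1$. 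The union $f \coloneqq \bigcup_m f_m$ is then a bijection $A_1\to A_2$ with $\pi_2(\beta[f(\ta)])=\pi_1(\beta[\ta])$ for every literal $\beta$ and tuple $\ta$, which is the desired isomorphism of $K$-interpretations.

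The main subtlety is that atomic $K$-types are strictly richer than Boolean types, so we need the extension property to produce a \emph{specific} $K$-valued extension $\rho^+$, not merely a tuple of the right underlying Boolean shape. This is precisely the content of the extension property as formulated in \cref{extension-property} (giving an exact match $\rho^\pi_{\ta,b}=\rho^+$), so once this is noted the back-and-forth is classical and requires only routine bookkeeping to handle repetitions in the enumerations of $A_1, A_2$.
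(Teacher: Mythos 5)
Your proof is correct, and the uniqueness half (back-and-forth on full atomic $K$-types, using the extension property to match realisations exactly) is essentially the same argument the paper gives. The existence half, however, takes a genuinely different route. You construct the countable random $K$-interpretation probabilistically: put the product measure on $K$-interpretations over $\omega$, use independence to show that for each fixed $\ta$, $\rho$, $\rho^+$ the failure probability is $\prod_b(1-f(\rho^+))=0$, and then take a countable union bound (over tuples $\ta$ and the finitely many types, which is where finiteness of $K$ enters). The paper instead encodes $K$-interpretations as classical $\tau^K$-structures (with relations $R_j^+$, $R_j^-$ for each $R\in\tau$, $j\in K$), writes the $k$-extension properties as a first-order theory $T(K)$, observes that every finite subset of $T(K)$ is satisfiable, and invokes compactness and L\"owenheim--Skolem to get a countable model. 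Your approach avoids the detour through $\tau^K$ and compactness, and as a by-product shows that a random countable $K$-interpretation is almost surely isomorphic to $\pi_{\mathcal R}$; the cost is that you need the product $\sigma$-algebra on a countably infinite index set, whereas the paper's argument is purely syntactic/finite once Proposition~\ref{extension-property} is in hand. Both are standard Fagin-style constructions of the Rado-like object, and the paper's route also makes explicit that the extension axioms form a complete, $\omega$-categorical first-order theory of $\tau^K$-structures, which is a fact worth having even if your direct construction does not need it.
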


\begin{proof}
\newcommand{\fmlKInt}{\mathsf{Int}_K}
\newcommand{\fmltype}{\mathsf{type}}
For any fixed finite semiring $K$, we can represent $K$-interpretations $\pi\from\Lit_A(\tau)\to K$
as classical relational structures $\AA^\pi$ with universe $A$ over a vocabulary $\tau^K$ consisting
of relations
$R^+_j$ and $R^-_j$, for $R\in\tau$ and $j\in K$, where $R^+_j=\{\ta: \pi(R\ta)=j\}$ and  
$R^-_j=\{\ta: \pi(\neg R\ta)=j\}$. It is then not difficult to axiomatise the extension properties of
$K$-interpretations in $\FO(\tau^K)$:
\begin{itemize}
\item There is a sentence $\fmlKInt\in\FO(\tau^K)$ such that
$\AA\models \fmlKInt$ if, and only if, $\AA\cong\AA^\pi$ for a $K$-interpretation $\pi$.
\item For any atomic $k$-type $\rho$ there is a formula $\fmltype_\rho(\tx)$
such that $\AA^\pi\models \fmltype_\rho(\ta)$ if, and only if, $\rho^\pi_{\ta}=\rho$.
\item Hence $\pi$ has the $k$-extension property if, and only if, 
\[  \AA^\pi\models \A\tx( \fmltype_\rho(\tx)\ra \E y \ \fmltype_{\rho^+}(\tx,y))\]
for every $i\leq k$, every atomic $i$-type $\rho$ and every extension $\rho^+\in\extend(\rho)$. 
\end{itemize}
Let now $T(K)$ be the collection of all these extension axioms together with the sentence $\fmlKInt$.
Obviously, every finite subset of $T(K)$ is satisfiable, so by compactness and the Löwenheim-Skolem Theorem,
there exists a countable model ${\mathcal R}\models T(K)$. It follows that
there exists a countable $K$-interpretation $\pi_{\mathcal R}\from\Lit_A(\tau)\to K$, the one represented by $\mathcal R$,
which has the $k$-extension property for all natural numbers $k$.

Further, the standard back-and-forth argument shows that any two such interpretations must be isomorphic.
Specifically suppose that $\pi_A\from\Lit_A(\tau)\to K$ and  $\pi_B\from\Lit_B(\tau)\to K$, with countable universes $A$ and $B$,
both have the $k$-extension property for all $k\in\omega$. Fix enumerations of the universes $A$ and $B$, 
and construct partial isomorphisms $p_n=\{(a_1,b_1),\dots,(a_n,b_n)\}\subseteq A\times B$ by
induction as follows. Let $p_0=\emptyset$. If $p_n$ is already defined, let $\rho_n$ be the 
$n$-type realised by $\ta=(a_1,\dots,a_n)$ in $\pi_A$, and also by  $\tb=(b_1,\dots,b_n)$ in $\pi_B$
(given that $p_n$ is a partial isomorphism). 
For even $n$, let  $c$ be the first element in the enumeration of $A$ that does not appear in $p_n$,
and let $\rho^+\in\extend(\rho_n)$ be the type realised by $(\ta,c)$ in $\pi_A$. Since $\pi_B$
has the $n$-extension property it follows that there exist some $d\in B$ such that also
$(\tb,d)$ realises $\rho^+$ in $\pi_B$. Select the smallest such $d$ in the enumeration of $B$
and set $p_{n+1}=p_{n}\cup\{(c,d)\}$. For odd $n$, we proceed analogously, starting with
the first $d$ in the enumeration of $B$ that does not occur in $p_n$.
In this way we get an increasing sequence $(p_n)_{n\in\omega}$ of partial isomorphisms
whose union $p\coloneqq \bigcup_{n\in\omega} p_n$ covers all elements of $A$ and $B$ and thus 
is an isomorphism between $\pi_A$ and $\pi_B$.  
\end{proof}

For finite semirings $K$ in which infinite sums and products are well-defined,
the countable random $K$-interpretation provides evaluations $\pi_{\mathcal R}\ext{\psi}$
for arbitrary first-order sentences. In particular, this is the case for lattice semirings.
Notice that Theorem~\ref{ext-polynomials} does not depend on the universe being finite,
which implies that for any finite lattice semiring and every first-order sentence
$\psi\in\FO(\tau)$, we have that $\pi_{\mathcal R}\ext{\psi}=f_\psi$.
But this coincides with the almost sure valuation of $\psi$ on random finite $K$-interpretations.

\begin{corollary}
Let $K$ be a finite lattice semiring with a 
probability distribution $p\from K^+\to(0,1]$, and let $\tau$ be a relational vocabulary.
Then, for every sentence $\psi\in\FO(\tau)$, the valuation 
of $\psi$ by the random countable $K$-interpretation $\pi_{\mathcal R}$ coincides with the almost sure valuation
of $\psi$ by finite $k$-interpretations: $\pi_{\mathcal R}\ext{\psi}=\as_{K,p}(\psi)$. 
\end{corollary}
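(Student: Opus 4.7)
The plan is to reduce the statement to \cref{ext-polynomials} and \cref{0-1-law-finite}. First I would check that $\pi_{\mathcal R}\ext{\psi}$ is actually well-defined: although the quantifiers unfold as infinite sums and products over the countable universe $A$, these are evaluated in the finite lattice semiring $K$ by the operations $\join$ and $\meet$, so any such $\Sup$ or $\Inf$ collapses to the supremum or infimum of a finite subset of $K$ and is therefore unambiguous. This is the only point where the finiteness of $K$ (rather than of the universe) is actually used in the semantics.

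Next I would argue that the proof of \cref{ext-polynomials} carries over \emph{verbatim} to countable $K$-interpretations, provided $K$ is a finite lattice semiring. Running through the induction: the atomic, disjunction, and conjunction cases never mention the universe; the quantifier cases for $\Ene$ and $\Ane$ rely on the set $S$ of consistent selector functions $s \from \YY i \to \{0,1\}$ (respectively $\{0,e\}$), which is finite because $\tau$ and $K$ are finite, and on the $k$-extension property to realise each such $s$ by an element $b \in A \setminus \ta$. The extra case $\eps_K = 0$ uses only that $\min(K)$ and the induced set $R$ of extensions are finite. None of these steps requires $A$ itself to be finite, so the identity $\fmla \psi \ta = \polyrho \psi \ta$ holds for $\pi_{\mathcal R}$ as well.

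Once this is established, the conclusion is immediate. Fix $k$ with $\psi \in \FO^k(\tau)$. Since $\pi_{\mathcal R}$ has the $k$-extension property and $\psi$ is a sentence, the extended \cref{ext-polynomials} gives $\pi_{\mathcal R}\ext{\psi} = f_\psi[\emptyset]$, which by \cref{def-polynomial} lies in $\{0,1,\epsilon_K\}$. On the other hand, the proof of \cref{0-1-law-finite} establishes $\as_{K,p}(\psi) = f_\psi[\emptyset]$ via exactly the same evaluation of $f_\psi$ on the empty type. Combining the two identities yields $\pi_{\mathcal R}\ext{\psi} = \as_{K,p}(\psi)$.

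The main (and really the only) obstacle is the robustness of \cref{ext-polynomials} under passage to an infinite universe. I expect this to be straightforward once one writes down the induction carefully, because the proof never quantifies over the universe except through the extension property, and the selector sets $S$ and extension sets $\extend(\rho)$ that matter in the argument remain finite. If one wanted to avoid reproving the theorem, one could alternatively phrase the matter as a locality statement: for any tuple $\ta$ in $\pi_{\mathcal R}$, the value $\pi_{\mathcal R}\ext{\psi(\ta)}$ is determined by the isomorphism type of $\pi_{\mathcal R}$ restricted to realisations of types from $\extend^{\le k}(\rho^{\pi_{\mathcal R}}_{\ta})$, and by $\omega$-categoricity and the extension property this information already fixes the value, again equal to $f_\psi[\rho^{\pi_{\mathcal R}}_{\ta}]$.
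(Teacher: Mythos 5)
Your proposal is correct and matches the paper's own argument: the paper likewise observes (in the paragraph preceding the corollary) that in a finite lattice semiring infinite sums and products are well-defined, that \cref{ext-polynomials} does not depend on the finiteness of the universe, and hence $\pi_{\mathcal R}\ext\psi = f_\psi = \as_{K,p}(\psi)$. You simply spell out more explicitly why the induction in \cref{ext-polynomials} survives the passage to a countable universe.
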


\section{Conclusion}

We have seen that the most fundamental result on logic on random structures, the 0-1 law for first-order logic,
can be extended to semiring semantics. 
The specific results, and also the proofs, depend on the underlying semiring, but generally follow 
the same pattern. The cornerstone of classical 0-1 laws, the \emph{extension axioms}, generalise to \emph{extension properties} of random semiring interpretations. A new ingredient
is the algebraic representation of first-order formulae by polynomials (or in the case of the natural semiring, by
$\infty$-expressions). The extension properties permit
us to do this with a constant supply of variables, and to obtain for each formula a fixed expression
that is independent of the size of the universe. Besides the generalisation of classical 0-1 laws to results saying
that the asymptotic probabilities of statements $\pi\ext\psi=j$ converge to 0 or 1,
we additionally get here results telling us which values of the semiring can actually appear as
almost sure valuations of first-order sentences. In finite or infinite lattice semirings these are
just three values, 0, 1 and the infimum of all values $j>0$, whereas in the natural semiring
there are rather trivial constructions showing that every number $j\in\N$ can possibly occur
as an almost sure valuation.
We have also studied the complexity of
computing almost sure valuations over finite lattice semirings and proved that this is a \pspace-complete problem.

The results presented here are a first, but fundamental, step towards understanding the power of semiring semantics 
for random interpretations. Indeed we have considered here only the case of random interpretations that are
induced by a \emph{fixed probability distribution} on the semiring, which is independent of the size of the universe.
This corresponds to the $G_{n,p}$-model of random graph theory where $p$ is a constant,
and to the classical 0-1 law of Glebskii et al. and Fagin. Of course, the study of logic on random graphs and random structures has gone beyond that and has, in particular,  investigated models where the probabilities are given by a 
function of the universe, often involving sparse structures, and has for instance studied issues of phase transitions. 
The calculation of probabilities becomes more involved in such cases
and uses much more sophisticated mathematical machinery. The study of semiring semantics for such
more general random models poses an interesting challenge. This will also be relevant for 
applications, because random models arising in practice are in general not given by constant probability distributions.
While classical 0-1 laws give a simple high-level argument for the inexpressibility of Boolean properties
for which the 0-1 law fails, our results may pave the way towards inexpressibility results for numerical parameters
in semiring semantics by showing that their probabilistic behaviour is different from those of logical sentences,
for instance in the natural semiring.
A further interesting aspect is the study of \emph{certain answers}  for queries over incompletely specified databases.
Libkin  \cite{Libkin18} proposes a probabilistic approach that measures how close an answer is to certainty,
based on the observation that for the standard model of missing data, the classical 0-1 law holds.
Semiring semantics, for instance via its connection to bag semantics, confidence scores and cost analysis,
provides an interesting possibility to extend such approaches to a more general setting.

\bibliography{zeroone}
\bibliographystyle{plainurl}

\end{document}